\documentclass{amsart}

\usepackage{pgf,tikz,xypic,graphicx,mathrsfs,etex,bm,sidecap,amssymb,hyperref,amsmath}
\usepackage{slashed}
\usepackage{tikz-cd}
\usepackage{stmaryrd} %For \varowedge
\xyoption{all}
\usetikzlibrary{snakes, shapes,arrows,patterns,decorations.pathreplacing,matrix}

\newtheorem{theorem}{Theorem}

\newtheorem{remark}{Remark}
\newtheorem{proposition}{Proposition}
\newtheorem{lemma}{Lemma}
\newtheorem{corollary}{Corollary}
\newtheorem{definition}{Definition}

\newcommand{\op}{\operatorname}

\newcommand{\monib}[2]{\binom{#2}{#1}}
\newcommand{\sgn}{\operatorname{sgn}}
\newcommand{\ind}{\operatorname{ind}}
\newcommand{\idx}{\operatorname{idx}}

\title[Sachs equations and plane waves, V]{Sachs Equations and Plane Waves, V: Ward, Fourier, and Heisenberg Symmetry on Plane Waves}
\date{\today}
\author{Jonathan Holland}
\author{George Sparling}
\address{University of Pittsburgh\\
  Department of Mathematics\\
  301 Thackeray Hall\\ Pittsburgh, PA 15260
}

\begin{document}

\begin{abstract}
This article studies wave equations and their solutions on plane wave spacetimes of arbitrary
dimension, developing the interplay among three structural layers: the Ward progressing-wave
representation of solutions to the scalar wave equation, the Fourier analysis of the Heisenberg
group naturally associated to the plane wave, and the Schr\"odinger propagator governing
the evolution of initial data.  The central geometric object is a positive curve in the Lagrangian
Grassmannian determined by the plane wave metric, previously studied in the authors' series
\cite{HS1,HS2,HS3,HS4}.  The conformal tensor $H(u)$ that parametrises this curve plays a
dual role: it encodes the null-cone geometry of the spacetime and simultaneously appears as
the time-dependent parameter in the Schr\"odinger representation of the Heisenberg group
acting by isometries on the plane wave.  Parallel to the classical Fourier inversion theorem,
convolution by Lagrangian delta distributions on the Heisenberg group furnishes an intrinsic
description of the Schr\"odinger propagator, and the intertwining of different polarisations by
this propagator is captured by a diagram that commutes up to a Maslov phase.  The theta
functions and Bargmann transforms that arise from imaginary polarisations complete the
analytic picture, connecting the present work to the theory of the Weil representation as
developed by Lion--Vergne \cite{LionVergne} and to Mumford's systematic treatment of theta
functions \cite{MumfordI,MumfordIII}.
\end{abstract}

\maketitle

\section{Introduction}

\subsection{Background and motivation}

Plane wave spacetimes occupy a distinguished position in both mathematics and physics.
Geometrically, they are the simplest genuinely curved Lorentzian manifolds: in the
Brinkmann--Rosen coordinate system the metric takes the form
$\mathcal R(G) = 2\,du\,dv - dx^T G(u)\,dx$,
where $G(u)$ is a smooth curve of positive-definite symmetric matrices.
The entire curvature of the spacetime is encoded by a single matrix-valued profile $K(u)$, making these spacetimes an ideal laboratory for explicit, rigorous analysis.

From the physics perspective, plane waves have been studied since the work of
Brinkmann \cite{Brinkmann} and Einstein--Rosen \cite{EinsteinRosen}, and they arise
as the \emph{Penrose limit} of any spacetime in a neighbourhood of any null geodesic
\cite{Penrose1976}.  This universality was revived dramatically in the string-theory
context by Berenstein--Maldacena--Nastase \cite{BMN} and Blau et al.\ \cite{Blau},
who showed that the maximally supersymmetric plane wave of type IIB supergravity is
a Penrose limit of $\mathrm{AdS}_5\times S^5$, thereby yielding an exactly solvable
model for strings in a curved background.  Standard references for the general
geometry of plane waves and their role in exact solutions of general relativity
are Stephani et al.\ \cite{Stephani} and the monograph of Blau \cite{BlauLectures}.

The present paper is the fifth in a series \cite{HS1,HS2,HS3,HS4} in which the
authors have undertaken a systematic, rigorous study of plane wave spacetimes in
arbitrary dimension, with the long-range goal of constructing and understanding the
twistor theory of these spacetimes.
Paper~I \cite{HS1} established the equivalence between the Brinkmann and Rosen
formulations, characterised the coordinate singularities of the Rosen metric as
points at which the Lagrangian curve meets a fixed Lagrangian subspace, and
related the Sachs equations governing null-geodesic congruences to the Jacobi
equation.
Paper~II \cite{HS2} classified the isometries and conformal isometries of plane waves
and exhibited families of vacuum plane waves whose isometry groups exhibit
chaotic (Bernoulli-shift) behaviour, showing that such spacetimes are not
classifiable by invariant observables.
Paper~III \cite{HS3} introduced the notion of a \emph{microcosm} (a complete,
homogeneous plane wave) and solved the Sachs equations explicitly for these
spaces, identifying the Lagrangian curve as an orbit of a one-parameter subgroup
of the symplectic group.
Paper~IV \cite{HS4} developed a general theory of cross-ratios and Schwarzians
for curves in what the authors call the \emph{middle Grassmannian}, culminating
in an intrinsic, projectively invariant formula for the curvature of a plane wave
as the Schwarzian of its Lagrangian curve.

\subsection*{Contents of this paper}

The present paper takes up the analytic side of the story: how does one
\emph{solve} the wave equation on a plane wave, and what algebraic structures
control those solutions?

\subsubsection*{The Schr\"odinger propagator and the Ward representation.}
Because the wave operator $\Box$ on a plane wave does not depend on the
retarded null coordinate $v$, a Fourier transform in $v$ reduces the wave
equation to a Schr\"odinger equation on the transverse Euclidean space
$\mathbb X$, with $u$ playing the role of time and the Laplacian
$\Delta_{G(u)}$ serving as the (time-dependent) Hamiltonian.  A second Fourier
transform, now in the transverse directions $x\in\mathbb X$, reduces this further
to an explicit quadrature, yielding the Schr\"odinger propagator of
Theorem~\ref{AuTheorem}.  The same solution can be written in the
\emph{Ward form} \cite{Ward1987}
\[
  \phi(u,v,x) = g^{-1/4}\int_{\mathbb X}
  F\!\left(v + \xi\cdot x + \tfrac{1}{2}\,\xi^T H(u)\,\xi,\,\xi\right)d^n\xi,
\]
which exhibits solutions as superpositions of progressing waves labeled by
transverse momentum $\xi$.  The tensor $H(u)$ satisfying $\dot H = G^{-1}$
is the conformally invariant datum of the plane wave; it defines the null cones
and simultaneously serves as the parameter of the Schr\"odinger representation.

\subsubsection*{The Heisenberg group.}
The isometry group of a plane wave contains a $(2n+1)$-dimensional Heisenberg
group $\mathfrak H$, which acts on the wavefronts $u = \text{const}$ by shifts
in $(v,x)$ and is responsible for the special solubility of the wave equation.
The Heisenberg group carries a family of irreducible unitary (Schr\"odinger)
representations $\rho_{h,u}$ on $L^2(\mathbb X)$, indexed by the parameter
$h\ne 0$ and depending on $u$ through $H(u)$.  The Weyl commutation relations
$[P, Q(u)] = 2\pi i h\, I$ are satisfied by the momentum operator $P$ and the
position operator $Q(u) = 2\pi x + ih H(u)\partial_x$.

The general theory of Heisenberg groups and their representations that we
require is classical, going back to the foundational work of Weyl, Stone, and
von Neumann.  For the specific combination of the Weil representation, the
Maslov index, and theta functions that arises here, the essential reference is
the monograph of Lion and Vergne \cite{LionVergne}.  Mumford's \emph{Tata
Lectures on Theta} \cite{MumfordI,MumfordIII} provides the complementary
algebraic-geometric and number-theoretic perspective on the Heisenberg group,
its theta functions, and the relationships among different polarisations.

\subsubsection*{Fourier transform on the Heisenberg group.}
A Lagrangian subspace $\mathbb X\subset\mathbb T$ of the symplectic space of
transverse positions and momenta determines a tempered distribution
$\delta_{\mathbb X}$ on the Heisenberg group.  Convolution by $\delta_{\mathbb X}$
is the abstract Fourier transform; the classical Fourier inversion theorem on
$\mathbb R^n$ is recovered as the composition
$\delta_{\mathbb X_-}*\delta_{\mathbb X_+}*\op{ind}_{\mathcal X}f = f$
for a real polarisation $\mathcal X$.  A triple convolution
$\delta_{\mathbb B}*\delta_{\mathbb C}*\delta_{\mathbb A}$ of pairwise
complementary Lagrangian distributions yields a scalar multiple of the identity,
with the phase determined by the \emph{Maslov index} $\tau(\mathbb A,\mathbb B,\mathbb C)$
of the triple \cite{LionVergne}.

\subsubsection*{The Schr\"odinger evolution and its independence of polarisation.}
The Schr\"odinger propagator $\Phi_{\mathbb X}(s,u)$ from time $s$ to time $u$
is expressed abstractly as a composition of Fourier transforms and parallel
transport along the Lagrangian curve:
\[
  \mathscr S(\mathbb X)\xrightarrow{\operatorname{ind}}
  \mathscr S(\mathbb H(s)\backslash\mathfrak H)
  \xrightarrow{\delta_{\mathbb X}*}
  \mathscr S(\mathbb X\backslash\mathfrak H)
  \xrightarrow{\Gamma_{su}}
  \mathscr S(\mathbb X\backslash\mathfrak H)
  \xrightarrow{\delta_{\mathbb H(u)}*}
  \mathscr S(\mathbb H(u)\backslash\mathfrak H)
  \xrightarrow{\operatorname{res}}
  \mathscr S(\mathbb X).
\]
The local theorem of Section~\ref{sec:schrodinger} establishes that if
$\mathbb X$ and $\mathbb X'$ are sufficiently close and both are complementary to
$\mathbb H(t)$ on a common interval, then the corresponding Schr\"odinger
evolutions are related by an explicit intertwining square:
\[
  \begin{tikzcd}
    \mathscr S(\mathbb X)\otimes\mathfrak F(s)
    \arrow[r,"\Phi_{\mathbb X}(s{,}u)"]\arrow[d,"{\rho_s}"]
    & \mathscr S(\mathbb X)\otimes\mathfrak F(u)\arrow[d,"{\rho_u}"]\\
    \mathscr S(\mathbb X')\otimes\mathfrak F(s)
    \arrow[r,swap,"\Phi_{\mathbb X'}(s{,}u)"]
    & \mathscr S(\mathbb X')\otimes\mathfrak F(u)
  \end{tikzcd}
\]
The vertical intertwining maps $\rho_s$ and $\rho_u$ depend on the
Lagrangian curve $\mathbb H(\cdot)$ and are given explicitly in terms of
quadratic exponential factors and symplectic reflection.  This local result is
the basic gluing law for the global theory.  When one works in a fixed real
polarisation, or equivalently in a fixed Rosen chart, the chart may cease to be
valid when $\mathbb H(t)$ meets its Maslov cycle: this is the appearance of a
caustic.  The Schr\"odinger evolution itself does not stop there.  Instead, one
passes to a nearby polarisation $\mathbb X'$ on an overlapping interval and uses
the local intertwiner to continue the evolution.  The global theorem is therefore
an atlas theorem: Schr\"odinger propagation continues across caustics by covering
the parameter interval with transverse polarisation charts and gluing the local
propagators by these explicit overlap maps.  The underlying Maslov phase is then
explained by the Fourier-integral calculus of Lagrangian distributions on the
Heisenberg group.

\subsubsection*{Bargmann transform and theta functions.}
For an imaginary (i.e.\ positive complex) polarisation $\mathcal J$, the
convolution kernel $\eta_{\mathcal J} = te(\mathcal J_+zz)$ is a bounded
left-$\mathcal J$-holomorphic function on $\mathfrak H$.  Convolution by
$\eta_{\mathcal J}$ defines the \emph{Bargmann transform}, which maps
Schwartz functions on $\mathfrak H$ to holomorphic Schwartz functions.  This
construction is the Heisenberg-group counterpart of the Bargmann--Segal
transform used in quantum optics and geometric quantisation.  When the Heisenberg
group is taken over a lattice (the arithmetic case relevant to modular forms),
the analogous construction produces theta functions in the sense of Mumford
\cite{MumfordI,MumfordIII}.

\subsection{Relation to the prior literature}

Wave equations on plane waves have a long history.  The progressing-wave
representation used here is due to Ward \cite{Ward1987}, who showed that
Whittaker's formula for flat-space wave solutions extends uniformly to the
plane-wave background, with the role of a plane-wave phase replaced by the
quadratic expression $v + \xi\cdot x + \tfrac{1}{2}\xi^T H(u)\xi$.

The connection between the Schr\"odinger equation and the Heisenberg group
has been extensively exploited in the study of quantum mechanics in curved
backgrounds; we particularly note the work of Duval, Burdet, K\"unzle and
Perrin \cite{DBKP} on the Bargmann structures associated to non-relativistic
spacetimes, which provides a related geometric framework.

On the representation-theoretic side, the Weil representation
(also known as the metaplectic or oscillator representation) and the Maslov
index are the central tools; the authoritative reference for our purposes is
Lion--Vergne \cite{LionVergne}.  The Lagrangian Grassmannian and its
differential geometry, particularly the Schwarzian and cross-ratio developed
in Paper~IV \cite{HS4}, underpin the global aspects of the propagator.

The theta functions appearing in the arithmetic case connect this work to a
classical tradition going back to Jacobi and Riemann.  Mumford's \emph{Tata
Lectures} \cite{MumfordI,MumfordIII} provide the modern algebraic-geometric
foundations; the relationship between theta functions and the Heisenberg group
is the organising principle of those volumes, as it is here.

\subsection{Organisation of the paper}

Section~\ref{sec:planewaves} sets up the geometry of plane waves in Rosen coordinates and derives
the Schr\"odinger propagator (Theorem~\ref{AuTheorem}) and Ward representation.
Section~\ref{sec:symplectic} introduces the Heisenberg group and its unitary representations,
establishes their action on the plane wave, and proves the group law.
Sections~\ref{sec:heisenberg}--\ref{sec:fourier} develop the abstract Fourier transform and Fourier inversion on the
Heisenberg group, including the Maslov phase theorem.
Section~\ref{sec:bargmann} discusses the Bargmann transform for imaginary polarisations and
the associated theta functions.
Section~\ref{sec:schrodinger} studies the Schr\"odinger evolution $\Phi_{\mathbb X}(s,u)$ and proves
the local commutativity theorem for nearby polarisations together with the
global continuation theorem across caustics obtained by gluing local charts.

\section{Plane waves}
\label{sec:planewaves}
We shall take a {\em plane wave} to be a spacetime manifold $\mathbb M=\mathbb U\times\mathbb R\times \mathbb X$ where $\mathbb U$ is a real interval, $\mathbb R$ denotes the real line, and $\mathbb X$ is a fixed $n$-dimensional real Euclidean space, together with the (Rosen) metric
\[\mathcal R(G) = 2\,du\,dv - dx^TG(u)dx,\]
for $(u,v,x)\in \mathbb M$, where $G$ maps $\mathbb U$ to the cone of positive-definite symmetric matrices on $\mathbb X$.  We assume that $G$ is at least $C^1$.

The wave operator of a smooth function $\phi$ is given by
\begin{equation}\label{waveoperator}\index{Wave equation}
  \Box \phi = \partial_v\partial_u\phi + \frac{1}{\sqrt{g}}\partial_u(\sqrt{g}\partial_v\phi)- \Delta\phi = 2g^{-1/4}\partial_v\partial_u(g^{1/4}\phi) - g^{-1/4}\Delta (g^{1/4}\phi)
\end{equation}
where $g=\det G(u)$, $\Delta = \partial_x G(u)^{-1}\partial_x^T$ is the Laplacian with respect to the metric $G(u)$ at $u$.  Because $\Box$ does not depend on $v$, we can apply separation of variables.  Defining the Fourier transform with respect to $v$ as
\[\hat f(\gamma) = \int_{\mathbb R}f(v)e^{-2\pi i \gamma v}\,dv,\]
the Fourier coefficients of a solution to $\Box\phi = 0$ satisfy
\begin{equation}
  \label{SchrodingerEquation}
    4\pi i \gamma (g^{1/4}\hat\phi)_u - \Delta (g^{1/4}\hat\phi)=0  % This had been +!
  \end{equation}
  which is a Schr\"odinger equation with parameter $\gamma$.  Thus $\Delta$ serves as the ``time''-dependent Hamiltonian, being a function of $u$.
  
  The conformal geometry of $\mathbb M$ is expressed most naturally, not in terms of $G$, but in terms of a function $H(u)$ taking values in symmetric matrices on $\mathbb X$, such that $dH(u)=G(u)^{-1}du$, i.e., $\dot H=G^{-1}$ where the dot is differentiation with respect to $u$.  This defining relation can be expressed as the projectively-invariant relation $\eth_0 H = G^{-1}$, and $H$ carries projective weight zero, where $\eth_0$ is the eth operator, discussed in a later paper in this series.  The tensor $H(u)$ defines the null cones in the plane wave, and also the symmetries of the spacetime.  It also determines in a natural way a (conformally invariant) curve in the Lagrangian Grassmannian of the plane wave.  We shall show how to solve \eqref{SchrodingerEquation}, using the tensor $H(u)$.    

  A second Fourier transform is relevant in the separation of variables analysis, namely that with respect to the Euclidean space $\mathbb X$.  Let $\mathscr S(\mathbb X)$ be the space of Schwartz functions on $\mathbb X$, and for $f\in\mathscr S(\mathbb X)$, put
  \[\mathcal F f(\xi) = \int_{\mathbb X} f(x)e^{-2\pi i \xi.x}\,d^nx\]
  where $d^nx$ is the normalized Lebesgue measure and $\xi.x=\xi^Tx$ is the (fixed) Euclidean inner product on $\mathbb X$.  We also have the inverse transform
  \[\mathcal F^{-1} \hat f(x) = \int_{\mathbb X} \hat f(\xi)e^{2\pi i \xi.x}\,d^n\xi.\]
  
  We then have the Schr\"odinger propagator:
  \begin{theorem}\label{AuTheorem}
    The Schwartz solution to \eqref{SchrodingerEquation} with initial data $\hat\phi_0=\hat\phi(u_0)$ is, for $\gamma\ne 0$,
    $$\hat\phi(u) = g(u)^{-1/4} \mathcal F_\xi^{-1}\!\left\{ \exp\left[\,\pi i \gamma^{-1}\,\xi^T(H(u)\!-\!H(u_0))\,\xi\,\right]\, \mathcal F\hat\phi_0(\xi)\right\}.$$
  \end{theorem}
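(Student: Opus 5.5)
The plan is to conjugate away the density factor, Fourier transform in the transverse variable $x$, and reduce \eqref{SchrodingerEquation} to a scalar linear ODE in $u$ for each fixed frequency $\xi$. That ODE integrates explicitly in terms of $H$, and uniqueness will also follow pointwise in $\xi$.

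\textbf{Step 1 (reduction).} I will set $\psi(u,x)=g(u)^{1/4}\hat\phi(u,x)$, with $\gamma\neq 0$ fixed. Then \eqref{SchrodingerEquation} reads $4\pi i\gamma\,\partial_u\psi=\Delta\psi$, where $\Delta=\partial_x G(u)^{-1}\partial_x^T$ has coefficients constant in $x$. Since $\psi(u,\cdot)\in\mathscr S(\mathbb X)$ for each $u$, I may apply $\mathcal F$ in $x$. Integration by parts gives $\mathcal F(\partial_{x_j}f)=2\pi i\xi_j\mathcal F f$, so
\[
\mathcal F(\Delta\psi)(\xi)=-4\pi^2\,\xi^TG(u)^{-1}\xi\,\mathcal F\psi(\xi).
\]
To commute $\partial_u$ with $\mathcal F$, I will assume that $\partial_u\psi$ is Schwartz in $x$ locally uniformly in $u$, which is part of what "Schwartz solution" means. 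This justifies differentiating under the integral sign.

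\textbf{Step 2 (integration).} For each $\xi$, Step 1 gives the scalar ODE
\[
\partial_u\mathcal F\psi(u,\xi)=\frac{-4\pi^2}{4\pi i\gamma}\,\xi^TG^{-1}\xi\,\mathcal F\psi=\pi i\gamma^{-1}\,\xi^T\dot H(u)\,\xi\,\mathcal F\psi(u,\xi),
\]
using the defining relation $\dot H=G^{-1}$. The coefficient is continuous in $u$, since $G$ is $C^1$ and hence $H$ is $C^2$. The unique solution is therefore
\[
\mathcal F\psi(u,\xi)=\exp\!\left[\pi i\gamma^{-1}\xi^T(H(u)-H(u_0))\xi\right]\mathcal F\psi(u_0,\xi).
\]
Uniqueness of this linear ODE for every $\xi$, combined with injectivity of $\mathcal F$ on $\mathscr S(\mathbb X)$, gives uniqueness of the Schwartz solution.

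\textbf{Step 3 (inversion and normalisation).} The multiplier has modulus one, and its derivatives in $\xi$ grow only polynomially. Hence it maps $\mathscr S(\mathbb X)$ to itself, so applying $\mathcal F^{-1}$ is legitimate and the result is Schwartz. Undoing the conjugation gives $\hat\phi(u)=g(u)^{-1/4}\mathcal F^{-1}\{\cdots\}$, exactly as stated.

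The one point needing care is the initial datum. Literally, the formula requires $\mathcal F\psi(u_0)=\mathcal F\hat\phi_0$, that is, $\hat\phi_0$ is the datum of $g^{1/4}\hat\phi$ at $u_0$. Equivalently, one can normalise $g(u_0)=1$ by a linear change of $x$; otherwise a constant factor $g(u_0)^{1/4}$ must be inserted. I expect this bookkeeping, together with the justification of differentiation under the integral in Step 1, to be the only real obstacle; the rest is a direct computation. Finally, I will verify directly that the formula satisfies \eqref{SchrodingerEquation} by differentiating the exponential under $\mathcal F^{-1}$.
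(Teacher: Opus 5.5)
Your proposal is correct and follows essentially the paper's route: Fourier transform in $x$, replace $\Delta$ by its symbol $-4\pi^2\xi^TG(u)^{-1}\xi$, and use $\dot H=G^{-1}$. The paper only verifies the formula by differentiating it in $u$, whereas you derive it from the pointwise ODE in $\xi$ and so also obtain the uniqueness implicit in ``the Schwartz solution''; your observation that the formula actually prescribes $g(u_0)^{1/4}\hat\phi(u_0)=\hat\phi_0$, so that a factor $g(u_0)^{1/4}$ is needed unless $g(u_0)=1$, correctly fixes a normalisation the paper's proof glosses over.
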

%     4\pi i \gamma (g^{1/4}\hat\phi)_u + \Delta (g^{1/4}\hat\phi)=0
  \begin{proof}
    We have the relation $\mathcal F(\partial_xf)(\xi) = 2\pi i \xi \mathcal Ff(\xi)$, and so
    \[\mathcal F(\partial_xG(u)^{-1}\partial_x^Tf)(\xi) = -4\pi^2 \xi^T G(u)^{-1}\xi \mathcal Ff(\xi).\]
    But also,
    \begin{align*}
      4\pi i \gamma (g^{1/4}\hat\phi)_u &= -4\pi^2\mathcal F_\xi^{-1}\!\left\{ \exp\left[\pi i \gamma^{-1}\xi^T(H(u)-H(u_0))\xi\right] \,\xi^TG(u)^{-1}\xi\,\mathcal F\hat\phi_0(\xi)\right\}.
    \end{align*}
  \end{proof}
  Note that the propagator in the theorem simplifies slightly under the assumption $H(u_0)=0$, which we shall henceforth assume in this section.

  A Schwartz solution to the wave equation may be put in the Ward form \cite{Ward1987}:
  \begin{equation}\label{wardsol}
    \phi(u,v,x) = g^{-1/4}\int_{\mathbb X}F(v+\xi.x + \tfrac12 \xi^TH(u)\xi,\xi)\,d^n\xi
  \end{equation}
  where $F(v,\xi)$ is a Schwartz function.  Recalling that the hat denotes the Fourier transform in the $(v,\gamma)$ dual variables, put
  \[\phi(u,v,x) = \int_{\mathbb R} \hat \phi(u,\gamma,x)e^{2\pi i \gamma v}\,d\gamma\]
  \[F(v,\xi) = \int_{\mathbb R} \hat F(\gamma,\xi)e^{2\pi i \gamma v}\,d\gamma.\]
  For a fixed $\gamma$, we thus consider \eqref{wardsol} on the individual Fourier components $F(v,\xi)=\hat F(\gamma,\xi)e^{2\pi i \gamma v}$ and $\phi(u,v,x) = \hat\phi(u,\gamma,x)e^{2\pi i \gamma v}$, we have
  \begin{align*}
    e^{2\pi i\gamma v}\hat\phi(u,\gamma,x)
    &= g^{-1/4}\int_{\mathbb X} \hat F(\gamma,\xi)\exp\left[2\pi i \gamma\left(v + \xi.x + \frac12 \xi^TH(u)\xi\right)\right]\,d^n\xi\\
    &= g^{-1/4}|\gamma|^{-n}e^{2\pi i\gamma v}\int_{\mathbb X} \hat F(\gamma,\gamma^{-1}\xi)\exp\left[2\pi i \xi.x + \pi i \gamma^{-1}\xi^TH(u)\xi\right]\,d^n\xi\\
  \end{align*}
  in agreement with Theorem \ref{AuTheorem}, after suitable choice of $\hat F(\gamma,\xi)$.

\subsection{Well-posedness and conservation laws}
\label{sec:weaksoln}

The Schr\"odinger propagator of Theorem~\ref{AuTheorem} and the Ward
representation~\eqref{wardsol} are stated for Schwartz initial data.  We
record here the natural Hilbert-space framework that underpins these
results, showing that the space of finite-energy solutions is complete and
that the symplectic form is conserved within it.

\subsubsection*{Function spaces}

Let $\mathbb{U}$ be a compact interval, $\mathbb{V} = \mathbb{R}$, and
$\mathbb{M} = \mathbb{U}\times\mathbb{V}\times\mathbb{X}$ the plane-wave
spacetime with Rosen metric $2\,du\,dv - dx^{T}G(u)\,dx$, where
$G\in C^{1}(\mathbb{U})$ takes values in the positive-definite symmetric
endomorphisms of $\mathbb{X}$.  Let $\mathscr{S}(\mathbb{V}\times\mathbb{X})$
denote the Schwartz space of real-valued functions on
$\mathbb{V}\times\mathbb{X}$, and let $\mathscr{D}(M)$ denote the space of
compactly supported smooth functions on a manifold $M$.

Let $(\gamma,\xi)$ be the Fourier-dual variables to $(v,x)$, so that the
Fourier transform of $f$ is
\[
  \hat{f}(\gamma,\xi) = \int_{\mathbb{V}\times\mathbb{X}}
  f(v,x)\,e^{-2\pi i(\gamma v + \xi^{T}x)}\,dv\,d^{n}x.
\]
Define the weighted $L^{2}$ space $L^{2}(|\gamma|)$ as the space of
measurable functions $\hat{f}$ on $\widehat{\mathbb{V}\times\mathbb{X}}$
for which
\[
  \|\hat{f}\|^{2}_{L^{2}(|\gamma|)}
  = \int |\gamma|\,|\hat{f}(\gamma,\xi)|^{2}\,d\gamma\,d^{n}\xi < \infty,
\]
with inner product
$\langle \hat{f},\hat{k}\rangle_{L^{2}(|\gamma|)}
 = \int |\gamma|\,\overline{\hat{f}}\,\hat{k}\,d\gamma\,d^{n}\xi$.
We also write $\langle f,k\rangle_{L^{2}}
= \int_{\mathbb{V}\times\mathbb{X}}\overline{f}\,k\,dv\,d^{n}x$
for the standard $L^{2}$ inner product.

\begin{definition}
  The \emph{energy Sobolev space} $H^{1/2,0}$ is the space of real-valued
  Borel measurable functions $f$ on $\mathbb{V}\times\mathbb{X}$ such that
  \[
    \|f\|^{2}_{H^{1/2,0}}
    := \|f\|^{2}_{L^{2}} + \|\hat f\|^{2}_{L^{2}(|\gamma|)} < \infty,
  \]
  where functions equal almost everywhere are identified.
\end{definition}

\begin{definition}
  Let $X$ be a Hilbert space and $\mathbb{U}$ a compact interval.  The
  Bochner--Sobolev space $H^{1}(\mathbb{U};X)$ consists of all
  $f\in L^{2}(\mathbb{U};X)$ for which there exists
  $f'\in L^{2}(\mathbb{U};X)$ satisfying
  \[
    \int_{\mathbb{U}}\varphi'(u)\,f(u)\,du
    = -\int_{\mathbb{U}}\varphi(u)\,f'(u)\,du
    \qquad\text{for all }\varphi\in\mathscr{D}(\operatorname{int}\mathbb{U}).
  \]
  This $f'$ is called the \emph{weak derivative} of $f$, and
  $H^{1}(\mathbb{U};X)$ carries the Hilbert norm
  $\|f\|^{2} = \|f\|^{2}_{L^{2}(\mathbb{U};X)}
              + \|f'\|^{2}_{L^{2}(\mathbb{U};X)}$.
\end{definition}

The natural solution space for the wave equation is
$H^{1}(\mathbb{U};\,H^{1/2,0})$, normed by
\[
  \|f\|^{2}
  = \int_{\mathbb{U}}\bigl(
      \|f(u,\cdot)\|^{2}_{H^{1/2,0}}
    + \|f'(u,\cdot)\|^{2}_{H^{1/2,0}}
    \bigr)\sqrt{g(u)}\,du,
\]
where $g(u) = \det G(u)$.  A standard embedding theorem
(see Evans~\cite{Evans}, p.~286) gives the following continuity in $u$:

\begin{lemma}\label{lem:continuous}
  If $f\in H^{1}(\mathbb{U};\,H^{1/2,0})$, then, after modification on a
  set of measure zero, $f\in C(\mathbb{U};\,H^{1/2,0})$.  In particular,
  the slice $f(u)\in H^{1/2,0}$ is well-defined for every $u\in\mathbb{U}$.
\end{lemma}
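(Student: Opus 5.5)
The plan is to reduce the statement to the standard one-dimensional fact that $H^1$ functions on an interval are absolutely continuous, applied to Bochner-valued functions. This is the Evans embedding cited just before the lemma. First, I would note two preliminary points. The space $H^{1/2,0}$ is a Hilbert space: it is the intersection of $L^2$ with the Plancherel preimage of $L^2(|\gamma|)$, and completeness follows from completeness of weighted $L^2$ spaces together with Plancherel. Second, the weight $\sqrt{g(u)}$ in the norm is harmless. Since $G\in C^1(\mathbb U)$ is positive definite on the compact interval $\mathbb U$, $g$ is continuous and bounded above and below by positive constants. So the weighted norm is equivalent to the unweighted Bochner norm, and $H^1(\mathbb U;H^{1/2,0})$ as normed in the paper coincides with the standard Bochner--Sobolev space of the previous definition.

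The core step is to show that $f(u)=f(u_0)+\int_{u_0}^u f'(s)\,ds$ for almost every $u$, with a Bochner integral in $H^{1/2,0}$. The right-hand side is then continuous into $H^{1/2,0}$. I would do this as follows.

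\begin{enumerate}
\item Mollify in $u$: set $f_\varepsilon=\eta_\varepsilon*f$ on interior subintervals. Then $f_\varepsilon$ is smooth with values in $H^{1/2,0}$, and $f_\varepsilon'=\eta_\varepsilon*f'$ by the weak-derivative identity tested against translates of $\eta_\varepsilon$.
\item For smooth vector-valued functions, the fundamental theorem of calculus holds. One can check this by pairing with arbitrary functionals and using the scalar theorem.
\item Let $\varepsilon\to0$. Then $f_\varepsilon\to f$ in $L^2(\mathbb U;H^{1/2,0})$ and hence pointwise a.e. along a subsequence, while $f'_\varepsilon\to f'$ in $L^1_{\mathrm{loc}}$. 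This passes the identity to the limit for a.e.\ $u,u_0$ in the interior.
\item Handle the endpoints by using an extension operator to a larger interval, or simply because the integral formula extends continuously to the closed interval, since $f'\in L^1(\mathbb U;H^{1/2,0})$ by Cauchy--Schwarz on a compact interval.
\end{enumerate}

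With the representative $\tilde f(u)=f(u_0)+\int_{u_0}^u f'$ in hand, continuity follows from
\[
\|\tilde f(u)-\tilde f(u')\|_{H^{1/2,0}}\le |u-u'|^{1/2}\,\|f'\|_{L^2(\mathbb U;H^{1/2,0})}.
\]
This gives $\tilde f\in C(\mathbb U;H^{1/2,0})$, in fact Hölder-$\tfrac12$. In particular each slice $\tilde f(u)$ is a well-defined element of $H^{1/2,0}$. One further check is that $\tilde f$ stays within real-valued functions; this holds because the Bochner integral of real-valued elements is real.

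The main obstacle I expect is the justification in the mollification step. One must confirm that the weak derivative, defined through integration against scalar test functions with Bochner integrals, commutes with convolution. One must also confirm that the a.e.\ limit can be chosen to hold for a fixed base point $u_0$. I would handle both by pairing with elements of the dual of $H^{1/2,0}$, which reduces each claim to the scalar case, and by choosing $u_0$ to be a Lebesgue point of $f$.
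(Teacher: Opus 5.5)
Your proposal is correct and essentially reproduces the argument behind the embedding theorem from Evans that the paper cites in place of a proof: mollify in $u$, apply the fundamental theorem of calculus to smooth Banach-valued functions, and pass to the limit to obtain the continuous representative $f(u_0)+\int_{u_0}^{u} f'$. Your preliminary checks, that $H^{1/2,0}$ is complete and that the weight $\sqrt{g(u)}$ gives an equivalent norm on the compact interval, are exactly the hypotheses the paper leaves implicit when it invokes that theorem.
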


\subsubsection*{Weak solutions}

\begin{definition}\label{def:weaksolution}
  A function $f\in H^{1}(\mathbb{U};\,H^{1/2,0})$ is a \emph{weak solution}
  of the wave equation if
  \begin{equation}\label{eq:weakwave}
    \int_{\mathbb{M}}\Bigl[
      2\,(g^{1/4}f)_{u}\,(g^{1/4}\phi)_{v}
      - g^{ij}(u)\,\partial_{i}(g^{1/4}f)\,\partial_{j}(g^{1/4}\phi)
    \Bigr]\,du\,dv\,d^{n}x = 0
  \end{equation}
  for all $\phi\in\mathscr{D}(\operatorname{int}\mathbb{M})$.  Here
  $\partial_{i}f = \partial f/\partial x^{i}$ is taken in the sense of
  distributions.
\end{definition}

Since~\eqref{eq:weakwave} is linear in the test function, a standard
Fubini argument shows that for a weak solution, the slice
condition~\eqref{eq:weakwave} holds for almost every fixed $u$:

\begin{corollary}
  If $f\in H^{1}(\mathbb{U};\,H^{1/2,0})$ is a weak solution, then
  \[
    \int_{\mathbb{V}\times\mathbb{X}}\Bigl[
      2\,(g^{1/4}f)_{u}\,(g^{1/4}\phi)_{v}
      - g^{ij}(u)\,\partial_{i}(g^{1/4}f)\,\partial_{j}(g^{1/4}\phi)
    \Bigr]\,dv\,d^{n}x = 0
  \]
  for almost every $u\in\mathbb{U}$ and all
  $\phi\in\mathscr{D}(\operatorname{int}\mathbb{M})$.
\end{corollary}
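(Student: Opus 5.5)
The plan is a Fubini-plus-density argument. The one subtle point is that the null set of good $u$ must not depend on $\phi$.

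First, I would rewrite the integrand of \eqref{eq:weakwave} as a slice functional. For $u\in\mathbb U$ and $\psi\in\mathscr D(\mathbb V\times\mathbb X)$ set
\[
  L_u(\psi)=\int_{\mathbb V\times\mathbb X}\Bigl[2\,(g^{1/4}f)_u\,\partial_v\psi-g^{ij}(u)\,\partial_i(g^{1/4}f)\,\partial_j\psi\Bigr]\,dv\,d^nx .
\]
The distributional $x$-derivatives of the slice are paired with $\partial_j\psi$ after moving them onto the test function. Since $f,f'\in L^2(\mathbb U;H^{1/2,0})$ and $G\in C^1$, the map $u\mapsto L_u(\psi)$ lies in $L^1(\mathbb U)$, and $|L_u(\psi)|\le C(u)\,\|\psi\|_{C^2}$ with $C\in L^1(\mathbb U)$. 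Taking test functions of product form $\phi(u,v,x)=\chi(u)\psi(v,x)$, with $\chi\in\mathscr D(\operatorname{int}\mathbb U)$, and expanding $(g^{1/4}\phi)_v=g^{1/4}\chi\,\psi_v$ and $\partial_j(g^{1/4}\phi)=g^{1/4}\chi\,\partial_j\psi$, Fubini turns \eqref{eq:weakwave} into
\[
  \int_{\mathbb U}\chi(u)\,g(u)^{1/4}L_u(\psi)\,du=0\qquad\text{for all }\chi .
\]
By the fundamental lemma of the calculus of variations, $L_u(\psi)=0$ for $u\notin N_\psi$, where $N_\psi$ is a null set. Here $g^{1/4}>0$ is used.

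Second, I would remove the dependence on $\psi$. Choose a countable set $\{\psi_k\}\subset\mathscr D(\mathbb V\times\mathbb X)$ that is dense in the $C^2$ (indeed $C^\infty$) topology on each compact set; $\mathscr D$ is separable, so such a set exists. Put $N=\bigcup_k N_{\psi_k}$, which is still a null set. For fixed $u\notin N$, the bound above shows that $\psi\mapsto L_u(\psi)$ is continuous. It therefore vanishes on all of $\mathscr D(\mathbb V\times\mathbb X)$.

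Finally, for general $\phi\in\mathscr D(\operatorname{int}\mathbb M)$ and fixed $u\notin N$, the slice integral in the Corollary equals $g(u)^{1/4}L_u\bigl(\phi(u,\cdot)\bigr)$. This vanishes because $\phi(u,\cdot)\in\mathscr D(\mathbb V\times\mathbb X)$. If the $(g^{1/4}\phi)_u$-type terms were meant to appear, they do not, since only $v$ and $x$ derivatives of $\phi$ enter.

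The main obstacle is making the slice functional meaningful and the estimate uniform. One must check that $(g^{1/4}f)_u(u,\cdot)$ and $\partial_i(g^{1/4}f)(u,\cdot)$ define distributions with an $L^1(\mathbb U)$ bound in $C^2$ test norms. This is where the paper's hypotheses do the work: $f'(u)\in L^2$ on slices by the definition of $H^1(\mathbb U;H^{1/2,0})$, and Lemma~\ref{lem:continuous} together with $G\in C^1$ gives continuity of the coefficients.
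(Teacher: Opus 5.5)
Your proposal is correct and follows the paper's own route: the paper only asserts that ``a standard Fubini argument'' applies, and you write that argument out in full. You use product test functions $\chi(u)\psi(v,x)$, the fundamental lemma in $u$, and a countable dense family of $\psi$'s so that the exceptional null set does not depend on $\phi$, a point the paper leaves implicit. One small fix: your bound on $L_u(\psi)$ should use an $H^2$-type norm of $\psi$, or a constant depending on the support of $\psi$, because $\|\psi\|_{C^2}$ alone does not control the $L^2$ pairings on the noncompact slice $\mathbb V\times\mathbb X$; this does not affect the rest of the argument.
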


The Fourier transform converts~\eqref{eq:weakwave} into an ODE in $u$
and simultaneously yields a gradient estimate showing that weak solutions
have square-integrable transverse gradients:

\begin{proposition}\label{prop:gradest}
  Let $f\in H^{1}(\mathbb{U};\,H^{1/2,0})$ be a weak solution, and let
  $\nabla f$ denote its weak gradient in the $x$ variables.  Then for
  almost every $u\in\mathbb{U}$,
  $\nabla f(u,\cdot)\in L^{2}(\mathbb{V}\times\mathbb{X})$ and
  \[
    \|\nabla f(u,\cdot)\|^{2}_{L^{2}}
    \le C\bigl(
      \|f(u,\cdot)\|^{2}_{H^{1/2,0}}
    + \|f'(u,\cdot)\|^{2}_{H^{1/2,0}}
    \bigr).
  \]
\end{proposition}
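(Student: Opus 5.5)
We pass to the Fourier side in $(v,x)$ and use the slice identity of the preceding Corollary. Set $w(u)=g^{1/4}f(u,\cdot)$. Since $g\in C^1$ is bounded above and below on the compact interval $\mathbb U$, the $H^{1/2,0}$ norms of $w(u)$ and $w'(u)=\tfrac14 g^{-1}\dot g\,g^{1/4}f+g^{1/4}f'$ are bounded by constants times $\|f(u)\|_{H^{1/2,0}}+\|f'(u)\|_{H^{1/2,0}}$. This is the only place the constant $C$ picks up a dependence on $G$.

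In the slice identity we take test functions of the product form $\phi(u,v,x)=g^{-1/4}(u)\chi(u)\psi(v,x)$ and let $\chi$ concentrate. Applying Plancherel then gives, for almost every $u$,
\[
\int \Bigl[-4\pi i\gamma\,\widehat{w'}(\gamma,\xi)\,\overline{\hat\psi(\gamma,\xi)}\; -\;4\pi^2\,\xi^TG(u)^{-1}\xi\,\hat w(\gamma,\xi)\,\overline{\hat\psi(\gamma,\xi)}\Bigr]\,d\gamma\,d^n\xi=0
\]
for all $\psi$ in a countable dense subset of $\mathscr D(\mathbb V\times\mathbb X)$. The null set is handled by the countability. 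We conclude that, as tempered distributions,
\[
4\pi^2\,\xi^TG(u)^{-1}\xi\,\hat w=-4\pi i\gamma\,\widehat{w'}
\]
a.e. in $(\gamma,\xi)$. This is the Fourier form of the Schr\"odinger equation \eqref{SchrodingerEquation}, now holding in the weak sense.

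Next we turn this identity into an $L^2$ bound on $\xi\hat w$. Since $G(u)^{-1}\ge c\,I$ uniformly on $\mathbb U$, we have $|\xi|^2|\hat w|\le C|\gamma|\,|\widehat{w'}|$ pointwise. Hence
\[
|\xi|^2|\hat w|^2=|\xi|^2|\hat w|\cdot|\hat w|\le C|\gamma|\,|\widehat{w'}|\,|\hat w|\le \tfrac C2\bigl(|\gamma||\widehat{w'}|^2+|\gamma||\hat w|^2\bigr).
\]
Integrating over $(\gamma,\xi)$ gives
\[
\int|\xi|^2|\hat w|^2\,d\gamma\,d^n\xi\le C\bigl(\|w'\|^2_{H^{1/2,0}}+\|w\|^2_{H^{1/2,0}}\bigr).
\]
Since $\widehat{\nabla w}=2\pi i\xi\hat w$, Plancherel identifies the left side with $\|\nabla w\|^2_{L^2}$ up to a constant. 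This shows that the distributional gradient $\nabla w$ lies in $L^2$. Finally, $\nabla f=g^{-1/4}\nabla w$, which gives the stated estimate.

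The main obstacle is the first step, rigorously extracting the pointwise-in-$\gamma,\xi$ identity from the weak formulation. The difficulty is that a priori $\nabla f$ is only a distribution, so the second term of the weak formulation has to be read as $\langle w,\partial_i(G^{-1})^{ij}\partial_j\psi\rangle$ after moving both derivatives onto $\psi$. On the Fourier side this pairing is $\int \hat w\,\xi^TG^{-1}\xi\,\overline{\hat\psi}$, which is well-defined because $\hat w\in L^2$ and $\hat\psi$ is Schwartz. The $v$-derivative term pairs $\widehat{w'}$ against $\gamma\hat\psi$, which is fine because $\widehat{w'}\in L^2$. Both sides are therefore $L^2_{\mathrm{loc}}$ functions paired with a dense class of test functions, and equality almost everywhere follows by density.
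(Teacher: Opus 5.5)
Your proposal is correct and follows essentially the same route as the paper: you pass to the Fourier-side form $4\pi i\gamma\,\widehat{w'}+4\pi^2\xi^TG(u)^{-1}\xi\,\hat w=0$ of the weak equation, multiply by (the modulus of) $\hat w$, apply Cauchy--Schwarz/AM--GM in $L^2(|\gamma|)$, and conclude with Plancherel. Your version is somewhat more careful than the paper's. It justifies the a.e. identity from the weak formulation, tracks the $g^{1/4}$ factor explicitly, and uses $G(u)^{-1}\ge cI$ to obtain a genuine inequality where the paper loosely asserts equality with $\|\nabla f(u,\cdot)\|^2_{L^2}$.
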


\begin{proof}
  In the Fourier domain, a weak solution satisfies the first-order ODE
  (for almost every $u$):
  \begin{equation}\label{eq:WaveInFourier}
    4\pi i\gamma\,\widehat{(g^{1/4}f)}\,' + 4\pi^{2}\,\xi^{T}G(u)^{-1}\xi\,
    \widehat{g^{1/4}f} = 0.
  \end{equation}
  By hypothesis $\hat{f},\hat{f}'\in L^{2}(|\gamma|)$.  Multiplying
  \eqref{eq:WaveInFourier} by $\overline{\hat{f}}$, integrating, and
  applying the Cauchy--Schwarz inequality gives
  \[
    \int \xi^{T}G(u)^{-1}\xi\,|\hat{f}|^{2}\,d\gamma\,d^{n}\xi
    \le C\bigl(\|\hat{f}\|^{2}_{L^{2}(|\gamma|)}
             + \|\hat{f}'\|^{2}_{L^{2}(|\gamma|)}\bigr),
  \]
  and the left-hand side equals $\|\nabla f(u,\cdot)\|^{2}_{L^{2}}$
  by Plancherel's theorem and positive-definiteness of $G(u)$.
\end{proof}

\subsubsection*{Conservation laws}

The plane wave carries a $u$-dependent symplectic form on solutions.
For $\phi,\psi\in\mathscr{S}(\mathbb{M})$, set
\begin{equation}\label{eq:symplecticform}
  \omega_{u}(\phi,\psi)
  = \int_{\mathbb{V}\times\mathbb{X}}
    \bigl[\phi\,\psi_{v} - \phi_{v}\,\psi\bigr]
    \sqrt{g(u)}\,dv\,d^{n}x,
\end{equation}
which may be written invariantly as
$\omega_{u_{0}}(\phi,\psi)
 = \int_{u=u_{0}} {*}(\phi\,d\psi - \psi\,d\phi)$,
where $*$ is the Hodge star and the integral is over the wavefront
$\{u=u_{0}\}$.  For classical solutions the conservation
$d\omega_{u}/du = 0$ follows immediately from the wave equation via
integration by parts.  The following proposition extends this to weak
solutions.

\begin{proposition}\label{prop:conservation}
  Let $f, k\in H^{1}(\mathbb{U};\,H^{1/2,0})$ be weak solutions of the
  wave equation.  Then:
  \begin{enumerate}
  \item[\emph{(i)}] $\sqrt{g(u)}\,\langle f(u,\cdot),k(u,\cdot)\rangle_{L^{2}}$
    is independent of $u\in\mathbb{U}$;
  \item[\emph{(ii)}] $\omega_{u}(f,k)$ is independent of $u\in\mathbb{U}$.
  \end{enumerate}
\end{proposition}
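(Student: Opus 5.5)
The plan is to pass to the Fourier domain and use the ODE \eqref{eq:WaveInFourier}, which a weak solution satisfies for almost every $u$ (Proposition~\ref{prop:gradest}). Set $F = g^{1/4}f$ and $K = g^{1/4}k$. Since $g$ is $C^1$ and bounded below on the compact interval $\mathbb U$, both lie in $H^1(\mathbb U; H^{1/2,0})$. For almost every $u$ and almost every $(\gamma,\xi)$ with $\gamma\neq0$, the ODE reads
\[
\hat F' = \frac{i\pi}{\gamma}\,\xi^T G(u)^{-1}\xi\,\hat F,
\]
and likewise for $\hat K$. The multiplier $\frac{i\pi}{\gamma}\,\xi^TG^{-1}\xi$ is purely imaginary, so pointwise in $(\gamma,\xi)$ we get
\[
\partial_u\bigl(\overline{\hat F}\hat K\bigr) = \overline{\hat F}'\hat K + \overline{\hat F}\hat K' = 0 .
\]
This is the algebraic heart of the argument. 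The propagator of Theorem~\ref{AuTheorem} gives the same conclusion, since it is multiplication by a unimodular phase.

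For (i), note that $\sqrt{g}\,\langle f,k\rangle_{L^2} = \langle F,K\rangle_{L^2} = \int \overline{\hat F}\hat K\,d\gamma\,d^n\xi$ by Plancherel. I would show that $u\mapsto \langle F(u),K(u)\rangle$ is absolutely continuous with derivative $\langle F',K\rangle + \langle F,K'\rangle$ almost everywhere. This is the standard product rule for $H^1(\mathbb U;X)$ functions in the Hilbert space $X=L^2$, proved by mollifying in $u$ and passing to the limit. Its derivative is then $\int \partial_u(\overline{\hat F}\hat K)\,d\gamma\,d^n\xi = 0$ almost everywhere, so the function is constant. By Lemma~\ref{lem:continuous} the slices are continuous in $u$, so the value is constant at every $u\in\mathbb U$, not just almost every one.

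For (ii), Plancherel gives
\[
\omega_u(f,k) = \int (\phi\psi_v-\phi_v\psi)\sqrt g = \int 4\pi i\gamma\,\overline{\hat F}\hat K\,d\gamma\,d^n\xi,
\]
up to conjugation conventions for real-valued functions. This is exactly the $L^2(|\gamma|)$-type pairing, and it is finite by Cauchy--Schwarz because $F,K\in H^{1/2,0}$. The same product-rule argument, now in the weighted space $L^2(|\gamma|)$ where $\hat F,\hat F'$ live by hypothesis, gives a derivative of $\int 4\pi i\gamma\,\partial_u(\overline{\hat F}\hat K) = 0$. Continuity from Lemma~\ref{lem:continuous} again upgrades this to every $u$.

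The main obstacle is justifying differentiation under the integral. Pointwise vanishing of $\partial_u(\overline{\hat F}\hat K)$ holds only almost everywhere in $(u,\gamma,\xi)$, and the individual terms $\overline{\hat F}'\hat K$ need integrability. I would handle this through the Bochner framework rather than pointwise. The identity $\langle F(u_2),K(u_2)\rangle - \langle F(u_1),K(u_1)\rangle = \int_{u_1}^{u_2}\bigl(\langle F',K\rangle + \langle F,K'\rangle\bigr)\,du$ follows by density of smooth $X$-valued functions in $H^1(\mathbb U;X)$. Then Fubini, applied to the integrand $\overline{\hat F}'\hat K + \overline{\hat F}\hat K'$, which is in $L^1$ by Cauchy--Schwarz in the relevant weighted norm, reduces everything to the pointwise cancellation above.

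A secondary check is the behaviour near $\gamma=0$, where the multiplier is singular. This set has measure zero and does not affect the integrals.
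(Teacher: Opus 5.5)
Your proof is correct. For (i) you follow the paper's route: pass to the Fourier side, use \eqref{eq:WaveInFourier}, and note that the multiplier $i\pi\,\xi^TG^{-1}\xi/\gamma$ is purely imaginary, so the two terms of $\partial_u(\overline{\hat F}\hat K)$ cancel. You also supply justification the paper leaves implicit, namely the Bochner product rule in $H^1(\mathbb U;L^2)$ and integrability of each term.

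For (ii) you take a genuinely different route. The paper stays in physical space. It integrates $\varphi'(u)\,\omega_u(f,k)$ by parts in $u$ and $v$, which needs Lemmas~\ref{lem:productrule} and~\ref{lem:weakstrong} with $k_v$ treated as $H^{-1/2,0}$-valued. It then substitutes $2F_{uv}=\Delta F$ from the weak equation for both solutions and concludes by self-adjointness of the transverse Laplacian; this is the classical argument that the symplectic current is divergence-free. You instead use Plancherel to write $\omega_u(f,k)=\int 4\pi i\gamma\,\overline{\hat F}\hat K$, observe that this is a bounded sesquilinear form on $L^2(|\gamma|)$, and reuse the same pointwise cancellation.

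Your route buys economy and rigour:
\begin{enumerate}
\item Both (i) and (ii) follow from the single fact that the evolution multiplies $\hat F$ by a unimodular phase. The same argument conserves $\int w\,\overline{\hat F}\hat K$ for any weight with $|w|\le C(1+|\gamma|)$.
\item It needs only the standard product rule for a bounded form on $H^1(\mathbb U;X)$.
\item It never has to interpret $F_{uv}$ or $\Delta F$ for data merely in $H^{1/2,0}$. The paper's last step tacitly relies on the gradient bound of Proposition~\ref{prop:gradest} to make sense of $\int\Delta F\cdot K$.
\item It gives $\omega_u$ an unambiguous meaning on $H^{1/2,0}$.
\end{enumerate}

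The paper's version, in turn, is closer to the invariant description $\int_{u=u_0}{*}(\phi\,d\psi-\psi\,d\phi)$. It does not use the independence of the coefficients from $(v,x)$, so it would survive in settings with no global Fourier transform in those variables.
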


The proof requires three lemmas about weak derivatives in Bochner spaces.

\begin{lemma}[Difference quotient approximation]\label{lem:diffquot}
  Let $X$ be a Hilbert space and $\phi\in H^{1}(\mathbb{U};X)$.  Define
  \[
    D^{\epsilon}\phi(u) =
    \begin{cases}
      \dfrac{\phi(u+\epsilon)-\phi(u)}{\epsilon}
        & \text{if }u,\,u+\epsilon\in\mathbb{U},\\[4pt]
      0 & \text{otherwise.}
    \end{cases}
  \]
  Then $D^{\epsilon}\phi \rightharpoonup \phi'$ weakly in
  $L^{2}(\mathbb{U};X)$ as $\epsilon\to 0$.
\end{lemma}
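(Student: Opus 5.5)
The plan is to test $D^{\epsilon}\phi$ against a dense class of test functions, compute the limit there, and then upgrade to all of $L^{2}(\mathbb U;X)$ using a uniform bound. Weak convergence in the Hilbert space $L^{2}(\mathbb U;X)$ follows from two facts:
\begin{enumerate}
\item[(a)] the family $\{D^{\epsilon}\phi\}$ is bounded in $L^{2}(\mathbb U;X)$ for small $\epsilon$;
\item[(b)] $\int_{\mathbb U}\langle D^{\epsilon}\phi(u),\psi(u)\rangle\,du\to\int_{\mathbb U}\langle\phi'(u),\psi(u)\rangle\,du$ for all $\psi$ in a dense subspace.
\end{enumerate}
For the dense subspace I take finite sums $\psi(u)=\varphi(u)x$ with $\varphi\in\mathscr D(\operatorname{int}\mathbb U)$ and $x\in X$. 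Their span is dense in $L^{2}(\mathbb U;X)$, since simple functions are dense and indicator functions of intervals can be approximated in $L^{2}$ by test functions.

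For (b), fix $\psi=\varphi x$ with $\operatorname{supp}\varphi\subset[a+\delta,b-\delta]$, where $\mathbb U=[a,b]$. For $|\epsilon|<\delta$ the cutoff in the definition of $D^{\epsilon}$ is inactive on the support, and a change of variables gives the discrete integration by parts
\[
\int\varphi(u)\,\frac{\phi(u+\epsilon)-\phi(u)}{\epsilon}\,du=-\int\frac{\varphi(u)-\varphi(u-\epsilon)}{\epsilon}\,\phi(u)\,du .
\]
The difference quotient of $\varphi$ converges uniformly to $\varphi'$ with support in a fixed compact set, so the right side tends to $-\int\varphi'\phi\,du$ in $X$. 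By the defining identity of the weak derivative in $H^{1}(\mathbb U;X)$, this equals $\int\varphi\phi'\,du$. Pairing with $x$ gives (b).

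For (a), which is the main step, I would use the fundamental theorem of calculus for Bochner $H^{1}$ functions: after modification on a null set (as in Lemma~\ref{lem:continuous}), $\phi(w)-\phi(u)=\int_{u}^{w}\phi'(t)\,dt$. This follows by the standard argument that a function with zero weak derivative is constant, applied to $\phi-\int_{a}^{\cdot}\phi'$. Then
\[
\|D^{\epsilon}\phi(u)\|\le\frac{1}{|\epsilon|}\int_{u}^{u+\epsilon}\|\phi'(t)\|\,dt ,
\]
and by Jensen (Cauchy--Schwarz) and Fubini,
\[
\int\|D^{\epsilon}\phi\|^{2}\,du\le\int\frac{1}{|\epsilon|}\int_{u}^{u+\epsilon}\|\phi'(t)\|^{2}\,dt\,du\le\|\phi'\|^{2}_{L^{2}(\mathbb U;X)} .
\]
This bounds the family uniformly in $\epsilon$. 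The points where $u+\epsilon\notin\mathbb U$ contribute zero by definition, so they cause no trouble.

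To conclude, for general $\psi\in L^{2}(\mathbb U;X)$, approximate $\psi$ by some $\psi_{k}$ from the dense class and split the pairing as $\langle D^{\epsilon}\phi-\phi',\psi_{k}\rangle+\langle D^{\epsilon}\phi-\phi',\psi-\psi_{k}\rangle$. The first term tends to $0$ as $\epsilon\to0$ by (b). The second is at most $2\|\phi'\|\,\|\psi-\psi_{k}\|$ by (a), uniformly in $\epsilon$, and so can be made arbitrarily small by the choice of $k$. The only genuinely delicate point is the fundamental theorem of calculus in the Bochner setting used in (a). If that becomes awkward, I would instead mollify $\phi$ in $u$ on interior subintervals to reduce to smooth $\phi$, where the same estimate is elementary.
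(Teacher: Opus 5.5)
Your proof is correct and follows the paper's route. Both arguments use the same FTC/Cauchy--Schwarz/Fubini bound $\|D^{\epsilon}\phi\|_{L^{2}(\mathbb U;X)}\le\|\phi'\|_{L^{2}(\mathbb U;X)}$ together with convergence of $D^{\epsilon}\phi$ against test functions $\varphi x$. The only difference is that you conclude by density and an $\varepsilon/3$ argument (also writing out the discrete integration by parts that the paper leaves implicit), whereas the paper uses Banach--Alaoglu and uniqueness of weak subnet limits.
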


\begin{proof}
  By the fundamental theorem of calculus and Cauchy--Schwarz,
  \[
    \|D^{\epsilon}\phi(u)\|_{X}
    \le \frac{1}{\epsilon}\int_{0}^{\epsilon}\|\phi'(u+t)\|_{X}\,dt
    \le \frac{1}{\sqrt{\epsilon}}
        \Bigl(\int_{0}^{\epsilon}\|\phi'(u+t)\|_{X}^{2}\,dt\Bigr)^{1/2}.
  \]
  Squaring and integrating over $\mathbb{U}$ via Fubini gives
  $\|D^{\epsilon}\phi\|_{L^{2}(\mathbb{U};X)}^{2}
   \le \|\phi'\|_{L^{2}(\mathbb{U};X)}^{2}$ for all $\epsilon > 0$,
  so the family $\{D^{\epsilon}\phi\}$ is bounded in $L^{2}(\mathbb{U};X)$.
  By the Banach--Alaoglu theorem some subnet converges weakly to a limit
  $\psi$; since $D^{\epsilon}\phi\to\phi'$ in the sense of
  $X$-valued distributions, we conclude $\psi = \phi'$.  As every
  weakly convergent subnet has the same limit, $D^{\epsilon}\phi
  \rightharpoonup \phi'$.
\end{proof}

\begin{lemma}[Weak--strong product]\label{lem:weakstrong}
  If $\phi_{\epsilon}\to\phi$ strongly in $H^{s}$ and
  $\psi_{\epsilon}\rightharpoonup\psi$ weakly in $H^{-s}$, then
  $\phi_{\epsilon}\psi_{\epsilon}\rightharpoonup\phi\psi$ weakly in $L^{1}$.
\end{lemma}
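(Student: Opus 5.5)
The idea is to interpret ``weakly in $L^{1}$'' as convergence against bounded test functions, and to reduce the claim to the $H^{s}$--$H^{-s}$ duality pairing. Throughout, $H^{s}$ and $H^{-s}$ are understood as mutually dual with respect to the $L^{2}$ pairing. In the intended application they are the weighted spaces $L^{2}(|\gamma|^{\pm 2s})$ on the Fourier side, so that $\langle\phi,\psi\rangle=\int\overline{\hat\phi}\,\hat\psi$ is bounded by $\|\phi\|_{H^{s}}\|\psi\|_{H^{-s}}$. Fix a test function $\chi$. It would naturally be taken in $\mathscr D$ or $\mathscr S(\mathbb V\times\mathbb X)$, which is what is needed later when the products $f\,k_v$ are integrated against cutoffs in $u$. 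The statement then becomes
\[
\int\chi\,\phi_\epsilon\psi_\epsilon \to \int\chi\,\phi\psi .
\]

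First I will split the difference in the standard way:
\[
\int\chi(\phi_\epsilon\psi_\epsilon-\phi\psi)=\langle \chi(\phi_\epsilon-\phi),\psi_\epsilon\rangle+\langle\chi\phi,\psi_\epsilon-\psi\rangle .
\]
Second, I will use that multiplication by a Schwartz (or $C_c^\infty$) function $\chi$ is a bounded operator on $H^{s}$. On the Fourier side it is convolution with $\hat\chi$, and Peetre's inequality $\langle\zeta\rangle^{s}\le 2^{|s|}\langle\zeta-\eta\rangle^{|s|}\langle\eta\rangle^{s}$ makes this a Schur-type bounded operator. For the degenerate weight $|\gamma|^{s}$ I would instead work with the inhomogeneous norm $\|\cdot\|_{L^{2}}+\|\cdot\|_{L^{2}(|\gamma|^{2s})}$, which is how $H^{1/2,0}$ is defined, and there the same argument applies.

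Third, weakly convergent sequences are bounded by the uniform boundedness principle, so $\sup_\epsilon\|\psi_\epsilon\|_{H^{-s}}<\infty$. The first term is therefore at most
\[
C_\chi\|\phi_\epsilon-\phi\|_{H^{s}}\sup_\epsilon\|\psi_\epsilon\|_{H^{-s}}\to 0 .
\]
The second term tends to $0$ directly from the weak convergence $\psi_\epsilon\rightharpoonup\psi$ in $H^{-s}$, tested against the fixed element $\chi\phi\in H^{s}$. If the family is indexed by a continuous parameter $\epsilon$ rather than a sequence, I would apply the argument to every sequence $\epsilon_j\to 0$. Boundedness then holds along each such sequence, which suffices.

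The main obstacle is justifying that the products $\phi_\epsilon\psi_\epsilon$ and $\phi\psi$ lie in $L^{1}$ at all, and that the pairing genuinely equals the integral $\int\chi\phi\psi$; for a general $H^{-s}$ distribution it is not a function. I would resolve this by noting that in the application both factors are in $L^{2}$, since $H^{1/2,0}\subset L^{2}$ and the relevant $\psi$ are $v$-derivatives controlled as in Proposition~\ref{prop:gradest}. Cauchy--Schwarz then gives $L^{1}$ membership, and the duality pairing coincides with the integral by Plancherel. Accordingly, I would state the lemma with the proviso that the products are integrable, with the convergence interpreted as testing against $\chi\in\mathscr S$.
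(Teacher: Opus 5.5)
Your two-term splitting is the paper's own, and so is the rest of the argument: uniform boundedness of $\psi_\epsilon$, strong convergence for the first term, weak convergence for the second. The one substantive difference is the class of test functions, and on that point your version is right and the paper's is not.

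The paper tests against an arbitrary $f\in L^\infty$ and uses $\|f(\phi_\epsilon-\phi)\|_{H^s}\lesssim\|f\|_\infty\|\phi_\epsilon-\phi\|_{H^s}$ and $f\phi\in H^s$. For $s>0$ bounded functions are not multipliers of $H^s$, and the lemma with genuine weak-$L^1$ convergence is in fact false. Here is a counterexample on $\mathbb R$:
\begin{itemize}
\item Take $0\le\theta\in C_c^\infty$ with $\theta\not\equiv 0$, and set $\phi_j=\theta$ and $\psi_j=2^{js}\cos(2^jx)\,\theta$.
\item Then $\sup_j\|\psi_j\|_{H^{-s}}<\infty$, and $\psi_j\rightharpoonup 0$ in $H^{-s}$ (test against $\mathscr S$ and use density).
\item For the bounded continuous function $f=\sum_{k\ge1}2^{-ks}\cos(2^kx)$, one gets $\int f\phi_j\psi_j\to\tfrac12\int\theta^2\ne0$. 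Consistently, $f\theta\notin H^s$.
\end{itemize}
So restricting to smooth test functions, handled by your Peetre--Schur bound, is a necessary correction rather than a loss. What you prove is convergence tested against smooth multipliers of $H^s$ (in particular in $\mathscr D'$). This is weaker than the stated conclusion, but it is the true statement.

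Two details of your write-up need adjusting.
\begin{enumerate}
\item Contrary to your remark, the test functions used downstream in Proposition~\ref{prop:conservation} are functions of $u$ alone, such as $\varphi'(u)\sqrt{g(u)}$. Being constant in $(v,x)$, they are not in $\mathscr S(\mathbb V\times\mathbb X)$. State the lemma for smooth bounded multipliers instead; your argument covers these verbatim, since $\chi\equiv1$ has $\hat\chi$ a point mass and scalar functions of $u$ act boundedly on the Bochner spaces.
\item Your claim that the relevant $\psi$ lie in $L^2$ ``as in Proposition~\ref{prop:gradest}'' does not hold. That proposition controls $\nabla_x f$, not $f_v$, and for $k\in H^{1/2,0}$ one only has $\widehat{k_v}=2\pi i\gamma\hat k\in L^2(|\gamma|^{-1})$. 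Integrability of the products alone also does not justify identifying $\langle\chi\phi,\psi\rangle$ with $\int\chi\phi\psi$ via Plancherel. Either assume $\psi_\epsilon,\psi\in L^2$ explicitly, or define the product as the distribution $\chi\mapsto\langle\chi\phi,\psi\rangle$; the latter needs no integrability hypothesis at all.
\end{enumerate}
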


\begin{proof}
  For any $f\in L^{\infty}$, write
  \[
    \int f(\phi_{\epsilon}\psi_{\epsilon}-\phi\psi)
    = \int f(\phi_{\epsilon}-\phi)\psi_{\epsilon}
    + \int f\phi(\psi_{\epsilon}-\psi).
  \]
  The first term is bounded by
  $\|f\|_{\infty}\|\psi_{\epsilon}\|_{H^{-s}}\|\phi_{\epsilon}-\phi\|_{H^{s}}
  \to 0$,
  since weakly convergent sequences are bounded (uniform boundedness
  principle) and $\phi_{\epsilon}\to\phi$ strongly.  The second term
  tends to zero because $\psi_{\epsilon}\rightharpoonup\psi$ and
  $f\phi\in H^{s}$.
\end{proof}

\begin{lemma}[Product rule for weak derivatives]\label{lem:productrule}
  Let $\phi\in H^{1}(\mathbb{U};\,H^{1/2,0})$ and
  $\psi\in H^{1}(\mathbb{U};\,H^{-1/2,0})$.  Then
  $(\phi\psi)'\in L^{1}(\mathbb{M})$ and
  $(\phi\psi)' = \phi'\psi + \phi\psi'$ almost everywhere.
\end{lemma}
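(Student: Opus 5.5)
The plan is the classical difference-quotient argument: use Lemma~\ref{lem:diffquot} to approximate both derivatives, then pass to the limit in the product using Lemma~\ref{lem:weakstrong}. The key point is the duality between $H^{1/2,0}$ and $H^{-1/2,0}$ given by the $L^{2}$ pairing in the $(v,x)$ variables. For $\phi(u)\in H^{1/2,0}$ and $\psi(u)\in H^{-1/2,0}$, Cauchy--Schwarz with the weights $|\gamma|^{\pm1/2}$ on the Fourier side gives
\[
\bigl|\langle \phi(u),\psi(u)\rangle\bigr|\le \|\phi(u)\|_{H^{1/2,0}}\|\psi(u)\|_{H^{-1/2,0}} .
\]
The product $\phi\psi$ is then understood as this pairing in the transverse variables, fiberwise in $u$. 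Integrating in $u$ and applying Cauchy--Schwarz in $L^2(\mathbb U)$ shows that $\phi'\psi$ and $\phi\psi'$ are integrable. This is the sense in which $(\phi\psi)'\in L^1$ is to be read, with $\mathbb{M}$ fibred over $\mathbb{U}$.

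First, I use the discrete Leibniz identity
\[
D^\epsilon(\phi\psi)(u)=D^\epsilon\phi(u)\,\psi(u+\epsilon)+\phi(u)\,D^\epsilon\psi(u).
\]
Second, I pass to the limit $\epsilon\to0$ in each term. By Lemma~\ref{lem:diffquot}, $D^\epsilon\phi\rightharpoonup\phi'$ weakly in $L^2(\mathbb U;H^{1/2,0})$ and $D^\epsilon\psi\rightharpoonup\psi'$ weakly in $L^2(\mathbb U;H^{-1/2,0})$. By Lemma~\ref{lem:continuous}, applied with $X=H^{-1/2,0}$ (the embedding holds for any Hilbert space), $\psi$ is continuous into $H^{-1/2,0}$. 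Since $\mathbb U$ is compact it is uniformly continuous, so the translates $\psi(\cdot+\epsilon)\to\psi$ strongly in $L^2(\mathbb U;H^{-1/2,0})$, up to a boundary strip of measure $O(\epsilon)$ which I handle by the convention that $D^\epsilon=0$ there. Lemma~\ref{lem:weakstrong}, with the roles of $H^{s}$ and $H^{-s}$ played by the Bochner spaces $L^2(\mathbb U;H^{\pm1/2,0})$ and the pairing above, then gives
\[
D^\epsilon\phi\,\psi(\cdot+\epsilon)\rightharpoonup\phi'\psi
\quad\text{and}\quad
\phi\,D^\epsilon\psi\rightharpoonup\phi\psi'
\]
weakly in $L^1$, after integrating against test functions.

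Third, I identify the limit with the weak derivative. For $\varphi\in\mathscr D(\operatorname{int}\mathbb U)$ and $\epsilon$ small, a change of variables gives
\[
\int\varphi\,D^\epsilon(\phi\psi)\,du=-\int D^{-\epsilon}\varphi\;\phi\psi\,du .
\]
Since $D^{-\epsilon}\varphi\to\varphi'$ uniformly and $\phi\psi\in L^1$, the right side tends to $-\int\varphi'\phi\psi\,du$. Comparing with the weak limit from the second step yields
\[
\int\varphi'\,\phi\psi\,du=-\int\varphi\,(\phi'\psi+\phi\psi')\,du
\]
for all test functions $\varphi$. Hence the distributional derivative of $\phi\psi$ is the $L^1$ function $\phi'\psi+\phi\psi'$, which is the claimed almost-everywhere identity.

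The main obstacle is the second step: Lemma~\ref{lem:weakstrong} needs one factor to converge strongly. Lemma~\ref{lem:diffquot} only gives weak convergence of the difference quotients, so the strong convergence must come from the shifted factor $\psi(\cdot+\epsilon)$. This is exactly where Lemma~\ref{lem:continuous} enters, and it must be applied with $X=H^{-1/2,0}$, the space in which $\psi$ lives. The boundary strips where the shift leaves $\mathbb{U}$ require some care, but they contribute nothing in the limit because their measure is $O(\epsilon)$ and the integrands are uniformly bounded in $L^1$.
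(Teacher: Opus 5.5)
Your proposal is the paper's argument (the discrete Leibniz formula, Lemma~\ref{lem:diffquot} for the difference quotients, and Lemma~\ref{lem:weakstrong} for the two products), carried out more carefully: you get the needed strong convergence from the shifted factor $\psi(\cdot+\epsilon)$ via continuity in $H^{-1/2,0}$, and you identify the limit of $D^{\epsilon}(\phi\psi)$ by moving the difference quotient onto the test function, whereas the paper applies Lemma~\ref{lem:diffquot} to $\phi\psi$ itself, which presupposes the weak derivative being constructed. Reading $\phi\psi$ as the fibrewise $H^{1/2,0}$--$H^{-1/2,0}$ pairing gives an $L^{1}(\mathbb U)$ statement rather than the literal $L^{1}(\mathbb M)$ one, but that is all duality can deliver (products with elements of $H^{-1/2,0}$ need not be functions on $\mathbb M$), and it is exactly what Proposition~\ref{prop:conservation} needs, since the test functions there depend only on $u$.
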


\begin{proof}
  The right-hand side belongs to $L^{1}(\mathbb{M})$ by the hypotheses
  and duality.  For the product rule, use the discrete product formula
  $D^{\epsilon}(\phi\psi)(u)
   = D^{\epsilon}\phi(u)\cdot\psi(u+\epsilon) + \phi(u)\cdot D^{\epsilon}\psi(u)$.
  As $\epsilon\to 0$, the left side converges weakly to $(\phi\psi)'$
  (in the sense of $L^{1}(\mathbb{M})$) by Lemma~\ref{lem:diffquot},
  while both terms on the right converge weakly to $\phi'\psi$ and
  $\phi\psi'$ respectively by Lemma~\ref{lem:weakstrong}.
\end{proof}

\begin{proof}[Proof of Proposition~\ref{prop:conservation}]
  \textit{Part~(ii).}  We show that the weak derivative in $u$ of
  $\omega_{u}(f,k)$ vanishes.  For any
  $\varphi\in\mathscr{D}(\operatorname{int}\mathbb{U})$, Fubini's theorem gives
  \begin{align*}
    \int_{\mathbb{U}}\varphi'(u)\,\omega_{u}(f,k)\,du
    &= 2\int_{\mathbb{M}}
       \varphi'(u)\,f\,k_{v}\,\sqrt{g(u)}\,du\,dv\,d^{n}x.
  \end{align*}
  Integrating by parts in $u$ (justified by
  Lemmas~\ref{lem:productrule} and~\ref{lem:weakstrong}) and then in $v$,
  and using the weak wave equation~\eqref{eq:weakwave} for both $f$ and
  $k$, each resulting boundary term cancels its counterpart:
  \begin{align*}
    &= -2\int_{\mathbb{M}}
       \varphi(u)\bigl[(g^{1/4}f)_{u}(g^{1/4}k)_{v}
                      +(g^{1/4}f)(g^{1/4}k)_{uv}\bigr]\,du\,dv\,d^{n}x \\
    &= 2\int_{\mathbb{M}}
       \varphi(u)\bigl[(g^{1/4}f)_{uv}(g^{1/4}k)
                      -(g^{1/4}f)(g^{1/4}k)_{uv}\bigr]\,du\,dv\,d^{n}x \\
    &= 2\int_{\mathbb{M}}
       \varphi(u)\bigl[\Delta(g^{1/4}f)\cdot g^{1/4}k
                      - g^{1/4}f\cdot\Delta(g^{1/4}k)\bigr]
       \,du\,dv\,d^{n}x
    = 0,
  \end{align*}
  where the last step uses self-adjointness of the Laplacian $\Delta$.

  \textit{Part~(i).}  In the Fourier domain, using
  equation~\eqref{eq:WaveInFourier} for $f$ and $k$:
  \begin{align*}
    \frac{d}{du}\bigl(\sqrt{g}\,\langle f,k\rangle_{L^{2}}\bigr)
    &= \int \overline{(g^{1/4}\hat{f})'}\,(g^{1/4}\hat{k})
         + \overline{(g^{1/4}\hat{f})}\,(g^{1/4}\hat{k})'\,d\gamma\,d^{n}\xi \\
    &= \int \frac{i\xi^{T}G^{-1}\xi}{\gamma}
       \bigl(
         -\overline{g^{1/4}\hat{f}}\cdot g^{1/4}\hat{k}
         +\overline{g^{1/4}\hat{f}}\cdot g^{1/4}\hat{k}
       \bigr)\,d\gamma\,d^{n}\xi
    = 0,
  \end{align*}
  since $\gamma$ and $\xi$ are real so the integrand is purely imaginary
  and its real part vanishes. \qedhere
\end{proof}

\begin{remark}
  Proposition~\ref{prop:conservation}(ii) is the weak-solution
  counterpart of the formal calculation at the start of this section.
  Together with Lemma~\ref{lem:continuous}, it implies that the
  Schr\"odinger flow $f(u_0)\mapsto f(u)$ extends from Schwartz initial
  data to a one-parameter family of unitary isomorphisms on the Hilbert
  space of finite-energy solutions, completing the functional-analytic
  justification of the propagator formula of Theorem~\ref{AuTheorem}.
\end{remark}
  %%%%

  \subsection{Heisenberg group}

Let $L^2(\mathbb X)$ denote the complex Hilbert space of square-integrable functions with respect to the normalized Lebesgue measure of the Euclidean space $\mathbb X$.  We shall use coordinates $x$ on $\mathbb X$ and $\xi$ on the dual space, such that the canonical one-form of $T^*\mathbb X$ is
$$\theta = \xi.dx.$$
Let $u$ be a real parameter and $H(u)$ a symmetric bilinear form on $\mathbb X$, smooth for $u\in\mathbb U$, such that $dH/du$ is positive-definite.

The Fourier transform is a unitary mapping from $L^2(\mathbb X)$ to itself, that is given on the dense subset of $L^1\cap L^2$ by the formula
$$\mathcal F f(\xi) = \int_{\mathbb X} e^{-2\pi i \xi\cdot x} f(x)\,dx$$
The inverse Fourier transform is then
$$\mathcal F^{-1} g(x) = \int_{\mathbb X} e^{2\pi i \xi\cdot x} g(\xi)\,d\xi.$$

The Schwartz space $\mathscr S(\mathbb X)$ is the set of functions tempered by polynomials in both Fourier domains.  It is well-known that $\mathscr S(\mathbb X)$ is dense in $L^2(\mathbb X)$.  We observe the following standard:
\begin{lemma}
  Let $f\in\mathscr S(\mathbb X)$.  Then
  $$\mathcal F(\partial_x f)(\xi) = 2\pi i \xi \mathcal F f(\xi).$$
\end{lemma}
% \begin{proof}
%   Integration by parts, being justified by the rapid decay of the integrand at infinity, gives
%   \begin{align*}
%     \mathcal F(\partial_i f)(\xi) &= \int_{\mathbb X} e^{-2\pi i \xi\cdot x}\partial_if(x)\,dx\\
%                                   &= -\int_{\mathbb X} \partial_ie^{-2\pi i \xi\cdot x}f(x)\,dx\\
%                                   &= 2\pi i \xi_i \int_{\mathbb X} e^{-2\pi i \xi\cdot x}f(x)\,dx,
%   \end{align*}
%   as claimed.
% \end{proof}

Acting on the set of Schwartz functions in $L^2(\mathbb X)$, define operators $P,Q(u)$ by
\begin{align*}
  P f(x) &= ih\partial_x f(x)\\
  Q(u) f(x) &= (2\pi x + ihH(u)\partial_x)f(x).
\end{align*}

On the momentum domain, one then has
\begin{align*}
  P g(\xi) &= -2\pi h\xi g(\xi)\\
  Q(u) g(\xi) &= (i \partial^\xi - 2\pi hH(u)\xi)g(\xi).
\end{align*}
These operators are self-adjoint unbounded operators on $L^2(\mathbb X)$, satisfying the canonical Weyl commutation relations
$$[P,Q(u)] = 2\pi i h I$$
where $I$ is the identity operator.  Thus the operators give a family of infinitesimal (Schr\"odinger) representations of the Heisenberg algebra on $L^2(\mathbb X)$, indexed by the parameter $h\ne 0$.

\subsubsection{Integrated representation}

We now consider the associated integrated unitary representations of the Heisenberg groups.  The Heisenberg group $\tilde{\mathfrak H}$ shall consist of triples $(z,q,p)\in\mathfrak H = \mathbb R\times\mathbb X\times\mathbb X$, with multiplication
\begin{equation}\label{HGroupLaw}
  (z,p,q)(z',p',q') = \left(z + z' + \frac12(q.p' - p.q'), p+p', q+q'\right).
\end{equation}
\begin{lemma}
  The multiplication law \eqref{HGroupLaw} gives $\tilde{\mathfrak H}$ the structure of a group in which
  $$(z,p,q)^{-1} = (-z,-p,-q) $$
  and for which $(0,0,0)$ is identity.
\end{lemma}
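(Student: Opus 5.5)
The plan is to verify the group axioms directly from \eqref{HGroupLaw}: closure, identity, inverses, and associativity. Closure is immediate, since the right-hand side of \eqref{HGroupLaw} is again a triple in $\mathbb R\times\mathbb X\times\mathbb X$. Throughout, the only structural input is that the cocycle
\[
\beta\bigl((p,q),(p',q')\bigr)=\tfrac12(q.p'-p.q')
\]
is bilinear and antisymmetric on $\mathbb X\times\mathbb X$. It is half the standard symplectic form built from the Euclidean inner product.

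For the identity, set $(z',p',q')=(0,0,0)$. Then $\beta$ vanishes by bilinearity, so $(z,p,q)(0,0,0)=(z,p,q)$, and the same holds on the left. For the inverse, multiply $(z,p,q)$ by $(-z,-p,-q)$. The first coordinate is
\[
z-z+\tfrac12\bigl(q.(-p)-p.(-q)\bigr)=\tfrac12(-q.p+p.q)=0,
\]
using the symmetry of the Euclidean dot product. The other coordinates are plainly $0$. The product in the opposite order is handled identically, since $\beta(v,-v)=-\beta(v,v)=0$ by antisymmetry.

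For associativity, compute both $\bigl((z,p,q)(z',p',q')\bigr)(z'',p'',q'')$ and $(z,p,q)\bigl((z',p',q')(z'',p'',q'')\bigr)$. The $p$ and $q$ components are sums in either order, so they agree. Writing $v=(p,q)$, the $z$-components are
\[
z+z'+z''+\beta(v,v')+\beta(v+v',v'') \quad\text{and}\quad z+z'+z''+\beta(v',v'')+\beta(v,v'+v'').
\]
By bilinearity both expand to $\beta(v,v')+\beta(v,v'')+\beta(v',v'')$, so they are equal.

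No step is a genuine obstacle. The only place needing care is the inverse check, where one must use the symmetry $q.p=p.q$ of the fixed inner product; associativity rests purely on bilinearity of $\beta$, i.e.\ the 2-cocycle condition for a bilinear form.
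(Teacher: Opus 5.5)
Your proof is correct and takes essentially the same route as the paper: direct verification from \eqref{HGroupLaw}, with associativity as the only substantive computation. The paper expands both triple products term by term, while you organize that expansion through bilinearity of the cocycle $\beta$; you also check the identity and the two-sided inverse explicitly (rightly noting that the inverse uses symmetry of the dot product), which the paper simply calls obvious.
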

\begin{proof}
  Only the associative law is not obvious.  We have
  \begin{align*}
    \left(\, (z,p,q)(z',p',q')\, \right)(z'',p'',q'') &= (z + \tfrac12(q p'-p q') + \tfrac12( (q + q') p'' -(p + p') q'') + z' + z'', \\
    &\qquad\qquad p + p' + p'', q + q' + q'')
  \end{align*}
  and
  \begin{align*}
  (z,p,q)\left(\, (z',p',q')(z'',p'',q'')\, \right) &= (z + \tfrac12(q (p' + p'')-p (q' + q'') ) + \tfrac12(q' p'' -p' q'') + z' + z'',\\
  &\qquad\qquad p + p' + p'', q + q' + q'').
  \end{align*}
  Comparing terms, these are readily seen to be equal.
\end{proof}

Let $h\not=0$ be a fixed real constant.  Consider the action of $\tilde{\mathfrak H}$ on $L^2(\mathbb X)$ given by
\begin{equation}\label{UnitaryRepOnL2}
  \rho_{h,u}(z,p,q)f(x) = e\left(2hz + 2x^ip_i + h(q.p - p^TH^{ij}(u)p)\right)\,f(x-hH(u)p + hq)
\end{equation}

\begin{lemma}
  The definition \eqref{UnitaryRepOnL2} is a unitary action of the group $\tilde{\mathfrak H}$ on $L^2(\mathbb X)$ with infinitesimal generators $-iP$ and $iQ$.
\end{lemma}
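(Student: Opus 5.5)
The plan is to verify the three claims directly from \eqref{UnitaryRepOnL2}: each $\rho_{h,u}(z,p,q)$ is unitary, the map $(z,p,q)\mapsto\rho_{h,u}(z,p,q)$ is a strongly continuous homomorphism, and differentiating at the identity gives the stated generators. Throughout I read $e(t)=\exp(\pi i t)$. This is the normalisation forced by the generator claim: the factor $e(2x^ip_i)$ has $p$-derivative $2\pi i x$, which matches the $2\pi x$ in $Q(u)$.

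\textbf{Unitarity.} For fixed $(z,p,q)$ the operator is a translation $f\mapsto f(\cdot-a)$ with $a=hH(u)p-hq$, followed by multiplication by a function of modulus one, since its exponent is real. Both operations preserve Lebesgue measure and $L^2$ norms. Each is invertible: the inverse is the translation by $-a$ composed with the conjugate phase. So each operator is unitary.

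\textbf{Homomorphism.} I compute $\rho(g)\rho(g')f(x)$ for $g=(z,p,q)$ and $g'=(z',p',q')$. It equals
\[
e\bigl(2hz+2x\cdot p+h(q\cdot p-p^THp)\bigr)\,e\bigl(2hz'+2(x-a)\cdot p'+h(q'\cdot p'-p'^THp')\bigr)\,f(x-a-a').
\]
The total shift $a+a'$ agrees with the shift of the product $gg'$ from \eqref{HGroupLaw}, since the shift is linear in $(p,q)$. It remains to compare phases.

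The extra cross term on the left is $-2h\,p^THp'+2h\,q\cdot p'$. On the right, the product $gg'$ contributes $h(q\cdot p'-p\cdot q')$ from the central coordinate. Expanding the quadratic terms of the product gives the cross terms $h(q\cdot p'+q'\cdot p)-2h\,p^THp'$. Here symmetry of $H(u)$ is used to combine $p^THp'+p'^THp$ into $2p^THp'$. The $q'\cdot p$ terms cancel, leaving $2h\,q\cdot p'-2h\,p^THp'$, which is exactly the left-hand side. Identity and inverses then follow automatically, using the preceding lemma on the group structure.

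The main obstacle is purely bookkeeping. The coordinates are written both as $(z,q,p)$ and as $(z,p,q)$, and one must fix one convention consistently. The phase must be expanded around the translated point $x-a$, not around $x$. I would also double-check that the sign in the central term of \eqref{HGroupLaw} is the one compatible with the chosen shift $a$.

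\textbf{Continuity and generators.} Strong continuity follows from continuity of translation in $L^2$ together with dominated convergence for the phase factor. For Schwartz $f$, I differentiate at the identity.
\begin{itemize}
\item In $z$: the derivative is $2\pi i h$ times $f$, the central character.
\item In $q$ at $p=0$: the derivative is $h\partial_x f=-i(ih\partial_x)f=-iPf$.
\item In $p$ at $q=0$: the derivative is $2\pi i x f-hH(u)\partial_x f=i\bigl(2\pi x+ihH(u)\partial_x\bigr)f=iQ(u)f$.
\end{itemize}
This identifies the generators as $-iP$ and $iQ$. It is also consistent with the relation $[P,Q(u)]=2\pi i h I$ stated earlier.
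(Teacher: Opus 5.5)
Your proof is correct and follows essentially the same route as the paper: unitarity from the unimodular phase together with translation invariance of Lebesgue measure, the homomorphism property by directly expanding the composed phase (your cross-term bookkeeping, which uses the symmetry of $H(u)$, matches the paper's computation), and the generators $-iP$ and $iQ$ obtained by differentiating along $(0,0,q)$ and $(0,p,0)$ at the identity. Your explicit check of strong continuity is a small addition that the paper leaves implicit.
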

\begin{proof}
  Unitarity follows from the fact that $e_h(z) = \exp(2\pi i h z) = e(2hz)$ is a unitary character (all of the variables are real), together with the translation-invariance of the Lebesgue measure.  To verify that it is an action of the group, we must check that
  \begin{align*} \rho_{h,u}(z,p,q)&(\rho_{h,u}(z',p',q')f(x))\\
    &= \rho_{h,u}(z + z' + \tfrac12(q.p' - p.q'), p+p', q+q')f(x).
  \end{align*}
  Expanding the LHS gives
  \begin{align*} \rho_{h,u}(z,q,p)
    &(\rho_{h,u}(z',p',q')f(x))\\
    &=\rho_{h,u}(z,p,q)e\left(2hz' + 2x.p' + h(q'.p' - H(p',p'))\right)\,f(x-hH(p') + hq')\\
    &=e\left(2hz + 2x.p + h(q.p-H(p,p)) + 2hz' + 2(x - hH(p) + hq).p' + h(q'.p' - H(p',p'))\right)\\
    &\qquad\qquad f(x-hH(p') + hq' - hH(p) + hq)\\
    &=e\left(2hz + 2hz' + h(q.p' - p.q') + 2x.(p+p') + h((q+q').(p+p')-H(p+p',p+p'))\right)\\
    &\qquad\qquad f(x-hH(p+p') + h(q+q')).
  \end{align*}
  The final line is easily seen to be the expected RHS.

  Finally, we must determine the infinitesimal generators.  For the first generator, take $(z,p,q) = (0,0,q)$ with $q$ infinitesimal:
  $$\rho_{h,u}(0,0,q)f(x) = f(x+hq) = f(x) + hq\cdot\nabla f(x) = f(x)-i q.Pf(x).$$
  For the second generator, take $(z,p,q) = (0,p,0)$ with $p$ infinitesimal:
  \begin{align*}
    \rho_{h,u}(0,p,0)f(x) &= e(2x.p)f(x-hH(p)) = f(x) + (2\pi i x.p - h H(p).\nabla)f(x)\\
                          &= f(x) + ip.Qf(x).
  \end{align*}
\end{proof}

  \subsubsection{Action on a plane wave}\label{HeisenbergAction}
  The Heisenberg group acts by isometries on the plane wave $\mathcal R(G)=2\,du\,dv - dx^TG(u)dx$, preserving the wave fronts.  The {\em right} action on points is
    $$(u,v,x).(z,p,q) = (u, v + z + p^Tx +\tfrac12(p^Tq - p^TH(u)p), x + q - H(u)p).$$
    This defines a representation of $\tilde{\mathfrak H}$ on the set of square-integrable functions on $\mathbb R\times\mathbb X$ (at a fixed $u$).  Note that the Lebesgue measure is preserved, so it is a unitary representation.  Call this representation $\pi_u$.  After Fourier transform in the $v$-variable, the fibre at frequency $h$ is not literally $\rho_{h,u}$ but is unitarily equivalent to it via the dilation
    $$D_hf(x)=|h|^{-n/2}f(x/h).$$
    Equivalently, if $\widetilde\rho_{h,u}$ denotes the representation on the $h$-fibre coming from the geometric action, then
    $$\widetilde\rho_{h,u}=D_h^{-1}\rho_{h,u}D_h.$$ 
    Thus $\pi_u$ decomposes fibrewise into representations unitarily equivalent to the Schr\"odinger models $\rho_{h,u}$.

    We shall therefore henceforth consider the group obtained from the Heisenberg group by taking the quotient by a discrete subgroup of its center.  

% heis[{z_, p_, q_}, {z1_, p1_, q1_}] = {z + z1 + 1/2 (q . p1 - q1 . p),
%    p + p1, q + q1}
% rosact[{z_, p_, q_}, {v_, x_}] = {v - z - p . x + 
%    1/2 ( p . q - p . H . p), x - q + H . p}
% tmp1 = FullSimplify[
%   TensorExpand /@ rosact[{z, p, q}, rosact[{z1, p1, q1}, {v, x}]]]
% tmp2 = FullSimplify[
%   TensorExpand /@ rosact[heis[{z, p, q}, {z1, p1, q1}], {v, x}]]
% FullSimplify[tmp1 == tmp2 /. {Dot[a_, b__] :> Dot @@ Sort[{a, b}]}]

%\chapter{Fourier analysis}

    \section{Symplectic vector spaces}
    \label{sec:symplectic}
Let $(\mathbb T,\omega)$ denote a $2n$-dimensional symplectic vector space over the real field $\mathbb R$.  The symplectic group $\op{Sp}(\mathbb T)$ denotes the group of all symplectic linear automorphisms of $\mathbb T$, a Lie group of dimension $n(2n+1)$.  A subset $S\subset\mathbb T$ is called isotropic if the symplectic form is zero on all pairs of elements of $S$.  The maximal isotropic sets are $n$-dimensional linear subspaces of $\mathbb T$, called Lagrangian subspaces.  Let $\op{LG}(\mathbb T)$ denote the set of Lagrangian subspaces of $\mathbb T$, a smooth projective variety of dimension $n(n+1)/2$.

We consider the vector space $\mathbb T\otimes\mathbb C$ over the complex field as a symplectic space where $\omega$ is now complex-valued (skew, complex bilinear, and restricts to $\omega$ on the real $\mathbb T$).  The Lagrangian Grassmannian $\op{LG}(\mathbb T\otimes\mathbb C)$ is a complex projective (smooth) variety, and $\op{LG}(\mathbb T)$ is the set of fixed points of the Galois involution.

A pair $\mathbb A$ and $\mathbb B$ of Lagrangian subspaces are said to $k$-meet if the codimension of $\mathbb A\cap\mathbb B$ in $\mathbb A$ (or, equivalently, in $\mathbb B$) is $k$.  Thus two Lagrangian subspaces 0-meet iff they are identical.  

A polarization of $\mathbb T$ is a linear operator $\mathcal X:\mathbb T\to\mathbb T\otimes\mathbb C$ such that $[\mathcal X,\bar{\mathcal X}]=0$, $\omega(\mathcal Xx,\mathcal Xy) = -\omega(x,y)$ for all $x,y\in\mathbb T$, and $\mathcal X^2=1$ (the identity of $\mathbb T$).  A polarization $\mathcal X$ is {\em real} if its image is the real subspace $\mathbb T$ of $\mathbb T\otimes\mathbb C$.  If $i\mathcal X$ is real, then we say that $\mathcal X$ is an {\em imaginary} polarization.  A polarization $\mathcal X$ can be a mixture of real and imaginary, but because $[\mathcal X,\bar{\mathcal X}]=0$, the polarization decomposes into a real part and an imaginary part that act on complementary invariant subspaces of $\mathbb T$, so we henceforth consider only these two cases.

Given a polarization $\mathcal X$, define the projection operators $\mathcal X_\pm$ onto the respective $\pm1$ eigenspaces of $\mathcal X$, $\mathbb X_\pm$:
\[\mathcal X_{_{+}} = (1 + \mathcal X)/2,\quad \mathcal X_{_{-}} = (1-\mathcal X)/2.\]
Thus we have the usual relations
\[1 = \mathcal X_{_{+}}+\mathcal X_{_{-}},\quad \mathcal X=\mathcal X_{_{+}}-\mathcal X_{_{-}},\quad \mathcal X_\pm^2=\mathcal X_\pm, \quad \mathcal X_{_{+}}\mathcal X_{_{-}}=\mathcal X_{_{-}}\mathcal X_{_{+}}=0.\]

An imaginary polarization $\mathcal X$ is called {\em positive} if $i\omega(x,\mathcal Xx)>0$ for all $x\in\mathbb T$, $x\ne 0$.  More generally, the signature of an imaginary polarization $\mathcal X$ is the signature of the quadratic form $i\omega(x,\mathcal Xx)$ on $\mathbb T$.  The signature of a quadratic form is a pair of integers $(k_{_{+}},k_{_{-}})$ being the number of $\pm 1$ in the Sylvester normal form.  The (inertial) index of a quadratic form is $k_{_{+}}-k_{_{-}}$, the trace of its Sylvester normal form.  A quadratic form is said to split if it has index zero, i.e., $k_{_{+}}=k_{_{-}}$.  Accordingly, we say that an imaginary polarization $\mathcal X$ is split if the quadratic form $i\omega(x,\mathcal Xx)$ has signature $(n,n)$.

For a finite-dimensional real vector space $\mathbb A$, the space of Lebesgue measurable functions on $\mathbb A$ is denoted by $\mathscr M_r(\mathbb A)$.  If $\mathbb A$ is a finite-dimensional complex vector space, then $\mathscr M_i(\mathbb A)$ is the space of all holomorphic functions on $\mathbb A$.

\section{Heisenberg groups}
\label{sec:heisenberg}
Fix a non-trivial real character $e:\mathbb R\to U(1)$, which we take to be $e(t) = e^{\pi i t}$.   The Heisenberg group of $\mathbb T$ with character $e$ is the set $\mathfrak H(\mathbb T) = U(1) \times \mathbb T$ whose group law is
\begin{equation}\label{Hgrouplaw}
  (t,x)(t',x') = (tt'e(xx'),x+x').
\end{equation}
We always fix a left Haar measure on $\mathfrak H$, which is also a right Haar measure.  We fix this in such a way that it is the product measure of the invariant probability measure on $U(1)$ and the Liouville measure on $\mathbb T$.  The universal cover $\tilde{\mathfrak H}$ is identified with $\mathbb R\times\mathbb T$ and has group law
\[(t,x)(t',x') = (t+t' + xx',x+x').\]
As mentioned in \S\ref{HeisenbergAction}, we shall henceforth consider the group $\mathfrak H$ in lieu of its universal cover.

The Heisenberg group is equipped with an exact sequence
\[1\to U(1)\to\mathfrak H\xrightarrow{\pi}\mathbb T\to 0.\]
Let $A\subset\mathfrak H$ be a maximal abelian subgroup such that $\pi|_A:A\to\pi(A)$ is an isomorphism onto its image in $\mathbb T$.  Then $A$ is closed.  Several possibilities of $A$ are:
\begin{itemize}
\item $\pi(A)$ is a Lagrangian subspace $\mathbb A$ of $\mathbb T$, and $A$ is just $\{1\}\times\mathbb A$.
\item $\pi(A)$ is a self-dual lattice $\Lambda$ in $\mathbb T$, and $A$ is a set of pairs $(e(Q(\lambda)),\lambda)$ where $\lambda\in\Lambda$ and $Q$ is an $F_2$-valued quadratic form on the $F_2$ vector space $\Lambda/2\Lambda$, such that $Q(x+y)+Q(x)+Q(y)+\omega(x,y)\equiv 0\pmod 2$ for all $x,y\in\Lambda/2\Lambda$.
\end{itemize}
These possibilities are obviously not mutually exclusive.  Nevertheless, $\pi(A)$ is a closed abelian subgroup of the vector space $\mathbb T$, and so factorizes as the product of a lattice in lower dimension and isotropic subspace complementary to the lattice.  So these two cases are the main ones, up to decomposing things appropriately in lower dimensions.

The complexified Heisenberg group is the group $\mathfrak H_{\mathbb C}=\mathbb C^*\times\mathbb T\otimes\mathbb C$, with the same group law \eqref{Hgrouplaw}.  Here $e$ is extended to the unique analytic quasicharacter $e:\mathbb C\to\mathbb C^*$ which restricts to the character $e$ on $\mathbb R$, i.e., $e(z) = e^{\pi i z}$ for $z\in\mathbb C$.  

Let $\mathscr M_r(\mathfrak H)$ be the space of measurable functions on $\mathfrak H$ that commute with the action of the center:
\[f(t,x) = tf(1,x)\]
for all $t\in U(1)$ and $x\in\mathbb T$.  Let $\mathscr M_i(\mathfrak H)$ be the space of holomorphic functions on $\mathfrak H_{\mathbb C}$ that commute with the action of the center $\mathbb C^*$.  Henceforth, all functions we shall consider on $\mathfrak H$ or $\mathfrak H_{\mathbb C}$ satisfy the condition of commuting with the center.

If $f\in\mathscr M_r(\mathfrak H)$, then the left and right actions of an element $(t,x)\in\mathfrak H$ are given, respectively, by
\begin{align*}
  L(t,x)f(t',x') &= f((t,x)(t',x')) = tt'e(xx')f(1,x+x'),\\
  R(t,x)f(t',x') &=f((t',x')(t,x)) = tt'e(-xx')f(1,x+x').
\end{align*}
Identical formulas apply for $f\in\mathscr M_i(\mathfrak H)$.

Given a subgroup $A$ of $\mathfrak H$, let $\mathscr M_r(A\setminus\mathfrak H)$ be the set of left $A$-invariant elements of $\mathscr M_r(\mathfrak H)$: $L(z)f(x)=f(x)$ for all $z\in A$.  (Also, let $\mathscr M_r(\mathfrak H/A)$ be the set of right $A$-invariant elements of $\mathscr M_r(\mathfrak H)$.)

A Schwartz function on $\mathfrak H$ is an element of $\mathscr M_r(\mathfrak H)$ which is smooth and decays at infinity with all derivatives.  The convolution of Schwartz functions $f,g$ on $\mathfrak H$ is defined by
\[(f*g)(x) = \int_{\mathfrak H}f(y)g(y^{-1}x)dy\]
The convolution of Schwartz functions is Schwartz. Moreover, $L(x)(f*g) = (L(x)f)*g$ and $R(x)(f*g) = f*(R(x)g)$.  For a tempered distribution $T$ and a function $f$, the convolutions $T*f$ and $f*T$ are defined by duality against a Schwartz ``test function'' $\phi$:
\begin{align*}
  \langle T*f, \phi\rangle &= \int T(y)f(y^{-1}x)dy\,\phi(x)dx = \int f(y^{-1}x)\phi(x)dx\,T(y) dy\\
                           &=\int f(y^{-1}x)\phi(x)dx\,T(y) dy = \int (\phi * \tilde f)(y) T(y)dy\\
                           &=\langle T, \phi * \tilde f\rangle.\\
  \langle f*T, \phi\rangle &= \int f(y)T(y^{-1}x)dy\,\phi(x)dx = \int f(xy)T(y^{-1})dy\,\phi(x)dx\\
                           &= \int \phi(x)f(xy)dx\,T(y^{-1})dy\\
                           &=\langle \tilde T, (\tilde f * \tilde \phi)^\sim\rangle.
\end{align*}
Here we denote by $\tilde f$ (or $f^\sim$) the function $\tilde f(x) = f(x^{-1})$.

For an imaginary polarization $\mathcal J$, a function $f\in\mathscr M_i(\mathfrak H)$ is called left $\mathcal J$-holomorphic if $f$ is left $\mathbb J_{_{-}}$-invariant.  The space of left $\mathcal J$-holomorphic functions is denoted by
\[\mathscr M(\mathbb J_{_{-}}\setminus\mathfrak H).\]
A left $\mathcal J$-holomorphic function is determined by its restriction to the real group $\mathfrak H$.

Given a polarization $\mathcal X$, we have a map from $\mathscr M(\mathbb X_{_{+}})$ to $\mathscr M(\mathbb X_{_{-}}\setminus\mathfrak H)$, as follows.  Let $f\in\mathscr M(\mathbb X_{_{+}})$, and define $\op{ind}_{\mathcal X}f\in\mathscr M(\mathbb X_{_{-}}\setminus\mathfrak H)$ by
\[\op{ind}_{\mathcal X}f(t,z) = te(\omega(\mathcal X_{_{+}}z,z))f(\mathcal X_{_{+}}z).\]
(This definition makes sense in both $\mathscr M_r$ and $\mathscr M_c$.)

As a special case, we have $\op{ind}_{\mathcal X}1(t,z) = te(\omega(\mathcal X_{_{+}}z,z))$.  When $\mathcal X$ is a positive imaginary polarization, $\op{ind}_{\mathcal X}1$ is Schwartz on $\mathbb T$ and left $\mathcal X$-holomorphic.  Left holomorphicity of $f\in\mathscr S(\mathfrak H)$ is characterized by a Cauchy--Riemann differential equation $df\circ\mathcal X_{_{-}}=0$.  Writing $z=z_{_{+}}+z_{_{-}}$, $z_\pm=\mathcal X_\pm z\in\mathbb X_\pm$, this equation is $\partial_{_{-}} f(t,z) = i \pi f(t,z)\omega(z_{_{+}},dz_{_{-}})$.
Thus:
\begin{itemize}
\item For an imaginary polarization $\mathcal J$, the space $\mathscr S(\mathbb J_{_{-}}\setminus\mathfrak H)$ is the set of Schwartz functions on $\mathbb T$ that are left $\mathcal J$-holomorphic.
\item For a real polarization $\mathcal K$, the space $\mathscr S(\mathbb K_{_{-}}\setminus\mathfrak H)$ is the set of left $\mathcal K_{_{-}}$-invariant functions whose restrictions to $\mathbb K_{_{+}}$ are Schwartz.
\end{itemize}

\section{Fourier transform}
\label{sec:fourier}
Let $\mathbb X$ be a Lagrangian subspace.  Define a tempered distribution $\delta_{\mathbb X}$ on $\mathfrak H$ by
\begin{align*}
\langle \delta_{\mathbb X}, f\rangle = \int_{\mathbb X} f(1,x)\,dx.
\end{align*}
The measure on the right-hand side is an arbitrary Haar measure on $\mathbb X$.  This is a continuous linear functional in the Schwartz space.  The distribution $\delta_{\mathbb X}$ is $\mathbb X$-biinvariant.  More generally, the same $\delta_{\mathbb X}$ defines an $\mathbb X$-biinvariant tempered distribution on any $\mathscr S(\mathbb A\setminus\mathfrak H)$ where $\mathbb A$ is any isotropic subspace independent from $\mathbb X$.

Because it depends on a Haar measure on $\mathbb X$ (which is not fixed by the Liouville measure on $\mathbb T$), $\delta_{\mathbb X}$ takes values in the dual space to the space of Haar measures on $\mathbb X$ (a one dimensional space of densities).

\begin{definition}
  The Fourier transform of a function $f$ on $\mathfrak H$ with respect to $\mathbb X$ is the convolution
  \[\delta_{\mathbb X}*f\]
  which makes sense provided $f\in\mathscr S(\mathfrak H)$ or $f\in\mathscr S(A\setminus\mathfrak H)$ where $A$ is independent of $\mathbb X$.
\end{definition}
\begin{theorem}
Let $\mathbb X$ be a Lagrangian subspace of $\mathbb T$.  Then $\delta_{\mathbb X}*f$ belongs to $\mathscr S(\mathbb X\setminus\mathfrak H)$ for all $f\in\mathscr S(\mathbb A\setminus\mathfrak H)$, where $\mathbb A$ is an isotropic subspace independent of $\mathbb X$.
\end{theorem}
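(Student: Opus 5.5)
Two things have to be checked: left $\mathbb X$-invariance of $\delta_{\mathbb X}*f$, which is formal, and the Schwartz property of its restriction to a complement, which is where the work lies. The plan is to write the convolution explicitly in coordinates adapted to the pair $(\mathbb X,\mathbb A)$ and then read off both properties.

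\emph{Invariance.} Using the duality formula for $T*f$, or directly $(\delta_{\mathbb X}*f)(g)=\int_{\mathbb X} f((1,y)^{-1}g)\,dy=\int_{\mathbb X}f((1,-y)g)\,dy$, left translation by $(1,x_0)$ with $x_0\in\mathbb X$ gives
$$L(1,x_0)(\delta_{\mathbb X}*f)(g)=\int_{\mathbb X}f((1,-y)(1,x_0)g)\,dy .$$
Since $\mathbb X$ is Lagrangian, $e(\omega(y,x_0))=1$, so $(1,-y)(1,x_0)=(1,x_0-y)$. Substituting $y\mapsto y+x_0$ in the Haar measure on $\mathbb X$ shows the result equals $(\delta_{\mathbb X}*f)(g)$. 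Commuting with the centre is clear because the centre passes through the integral. Thus $\delta_{\mathbb X}*f\in\mathscr M_r(\mathbb X\setminus\mathfrak H)$.

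\emph{Reduction to a concrete integral.} I would choose a Lagrangian complement $\mathbb Y$ to $\mathbb X$ containing $\mathbb A$; this is possible because $\mathbb A$ is isotropic with $\mathbb A\cap\mathbb X=0$. Write $\mathbb T=\mathbb X\oplus\mathbb Y$. An element $F\in\mathscr S(\mathbb X\setminus\mathfrak H)$ is determined by its restriction to $\{1\}\times\mathbb Y$, and the requirement is that this restriction be Schwartz on $\mathbb Y$. Similarly, a function in $\mathscr S(\mathbb A\setminus\mathfrak H)$ is determined by its restriction to a complement $\mathbb C=\mathbb X\oplus\mathbb Y'$ of $\mathbb A$, where $\mathbb Y=\mathbb A\oplus\mathbb Y'$, and that restriction is Schwartz; along $\mathbb A$ the function only picks up the phase $e(\omega(a,\cdot))$. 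For $w\in\mathbb Y$, expanding with the group law gives
$$(\delta_{\mathbb X}*f)(1,w)=\int_{\mathbb X}e(-\omega(y,w))\,f(1,w-y)\,dy .$$
Decompose $w=a+w'$ with $a\in\mathbb A$ and $w'\in\mathbb Y'$, and use left $\mathbb A$-invariance to move $a$ out of the argument of $f$. This turns the integral into
$$\int_{\mathbb X}e(\ell_w(y))\,\varphi(w'-y)\,dy ,$$
where $\varphi=f|_{\mathbb C}$ is Schwartz in $(x,w')$ and $\ell_w$ is a linear phase in $y$ depending linearly on $w$ through the nondegenerate pairing $\mathbb X\times\mathbb Y\to\mathbb R$ given by $\omega$.

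\emph{Schwartz estimate.} The integral above is a partial Fourier transform in the $\mathbb X$-variable of a Schwartz function on $\mathbb X\oplus\mathbb Y'$, evaluated at frequency $\omega(\cdot,w)$, times a harmless oscillatory factor in $w'$. Since the pairing identifies $\mathbb Y$ with $\mathbb X^*$, the $\mathbb A$-component of $w$ enters as a Fourier frequency, possibly together with part of $w'$ after a linear change of variables. Rapid decay of $\varphi$ in $w'$ is preserved, and partial Fourier transform preserves Schwartz functions, so the result is Schwartz on $\mathbb Y$. I expect the main obstacle to be the bookkeeping here: identifying the phases $e(\omega(\cdot,\cdot))$ precisely enough to confirm that, after the linear change of variables, each component of $w$ appears either as a Fourier dual variable or as a translation argument of $\varphi$, with no degenerate direction. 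This is exactly where the independence of $\mathbb A$ and $\mathbb X$ is used. Continuity of $f\mapsto\delta_{\mathbb X}*f$ in Schwartz seminorms then follows from the corresponding continuity of the partial Fourier transform.
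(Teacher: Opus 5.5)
Your proposal is correct. It follows the same basic idea as the paper: evaluate $\delta_{\mathbb X}*f$ on a Lagrangian complement of $\mathbb X$ and recognise a Fourier transform. The paper obtains the full classical transform $\int_{\mathbb X}e(2y_{+}z)\phi(y_{+})\,dy$.

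The difference is in how $\mathbb A$ is handled. The paper ``enlarges $\mathbb A$ to a Lagrangian complement'' and then writes $f(t,z)=te(\mathcal X_{+}zz)\phi(\mathcal X_{+}z)$. That formula presupposes left invariance under the whole Lagrangian $\mathbb X_{-}$, so the paper effectively treats only the Lagrangian case; an $\mathbb A$-invariant function need not be invariant under the enlargement. Your choice of $\mathbb Y\supset\mathbb A$ and of the complement $\mathbb C=\mathbb X\oplus\mathbb Y'$ handles a genuinely isotropic $\mathbb A$. The price is a partial rather than full Fourier transform, and you lose the immediate unitarity corollary the paper reads off. The paper's route is shorter for that reason.

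The bookkeeping you left open comes out cleanly. Use $f(1,a+c)=e(-\omega(a,c))f(1,c)$ and $\omega(a,w')=0$. Then
\[
(\delta_{\mathbb X}*f)(1,a+w')=\int_{\mathbb X}e\bigl(-\omega(y,2a+w')\bigr)\,\varphi(w'-y)\,dy=\Psi(2a+w',w'),
\]
with no extra oscillatory factor. Here $\Psi(\eta,w')=\int_{\mathbb X}e(\omega(x,\eta))\,\varphi(x+w')\,dx$ is the partial Fourier transform of $\varphi$ in the $\mathbb X$-variable, using $\mathbb Y\cong\mathbb X^*$ via $\omega$. It is therefore Schwartz on $\mathbb Y\times\mathbb Y'$.

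Your ``either/or'' description is not quite accurate: $w'$ enters both as part of the frequency and as a translation argument. What closes the argument is that $w\mapsto(2a+w',w')$ is an injective linear map, and pulling a Schwartz function back along such a map gives a Schwartz function. When $\mathbb Y'=0$ this reduces exactly to the paper's formula.
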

\begin{proof}
  Left invariance follows by what we have already discussed.  We need to show that $\delta_{\mathbb X}*f$ is Schwartz.  Enlarging $\mathbb A$ to a Lagrangian complement, we must therefore show that $z\mapsto (\delta_{\mathbb X}*f)(1,z)$ is a Schwartz function for $z$ belonging to $\mathbb A$.  Let $\mathcal X$ be the polarization with eigenspaces $\mathbb X_{_{+}}=\mathbb X$ and $\mathbb X_{_{-}}=\mathbb A$.  We then have $f(t,z) = e(\mathcal X_{_{+}}zz)\phi_{_{+}}(\mathcal X_{_{+}}z)$ where $\phi$ is a Schwartz function on $\mathbb X_{_{+}}$.  The Fourier transform is
  \begin{align*}
    (\delta_{\mathbb X}*f)(z)
    &= \int_{\mathfrak H}\delta_{\mathbb X}(y)f(y^{-1}z)\,dy\\
    &= \int_{\mathbb X}f(y^{-1}z)dy = \int_{\mathbb X}e(zy_{_{+}} + (z_{_{+}}-y_{_{+}})(z-y_{_{+}}))\phi(z_{_{+}}-y_{_{+}})\,dy\\
    &= \int_{\mathbb X}f(y^{-1}z)dy = \int_{\mathbb X}e(zy_{_{+}} -y_{_{+}}z))\phi(-y_{_{+}})\,dy\quad (z_{_{+}}=0)\\
    &= \int_{\mathbb X}e(2y_{_{+}}z)\phi(y_{_{+}})\,dy
  \end{align*}
  i.e., the classical Fourier transform of $\phi$.
\end{proof}

A corollary to the above proof is that the Fourier transform $f\mapsto \delta_{\mathbb X}*f$ is unitary with respect to the natural $L^2$ norms on $\mathscr S(\mathbb X\setminus\mathfrak H)$ and $\mathscr S(\mathbb A\setminus\mathfrak H)$.  Moreover, it intertwines the (unitary) right actions of the Heisenberg group (i.e., the Schr\"odinger representations).

Suppose that $\mathcal X$ is a real polarization of $\mathbb T$.  Consider the composite
\[\begin{tikzcd}
    \mathscr S(\mathbb X_{_{+}}) \arrow[r,"\op{ind}_{\mathcal X}"] & \mathscr S(\mathbb X_{_{-}}\setminus\mathfrak H)\arrow[r,"\delta_{\mathbb X_{_{+}}}\!*"] & \mathscr S(\mathbb X_{_{+}}\setminus\mathfrak H)\arrow[r,"\delta_{\mathbb X_{_{-}}}\!*"] & \mathscr S(\mathbb X_{_{-}}\setminus\mathfrak H)
  \end{tikzcd}
\]
We then have Fourier inversion:
\begin{lemma}
  If $\mathcal X$ is a real polarization of $\mathbb T$, then for all $f\in\mathscr S(\mathbb X_{_{-}}\setminus\mathfrak H)$,
  \[\delta_{\mathbb X_{_{-}}}\!\!*\delta_{\mathbb X_{_{+}}}\!\!*f = f\]
  where the Lebesgue measures on $\mathbb X_{_{-}}$ and $\mathbb X_{_{+}}$ are chosen such that the product measure is the Liouville measure on $\mathbb T$.
\end{lemma}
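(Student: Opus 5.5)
The plan is to reduce the statement to the classical Fourier inversion theorem on $\mathbb R^n$, using the computation in the proof of the preceding theorem twice. By that theorem, $\delta_{\mathbb X_{_{+}}}*f$ lies in $\mathscr S(\mathbb X_{_{+}}\setminus\mathfrak H)$, and $\delta_{\mathbb X_{_{-}}}$ then applies to it. So both convolutions make sense, and the composite is a map from $\mathscr S(\mathbb X_{_{-}}\setminus\mathfrak H)$ to itself.

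A function in $\mathscr S(\mathbb X_{_{-}}\setminus\mathfrak H)$ is determined by its restriction to $\mathbb X_{_{+}}$. Write $f=\op{ind}_{\mathcal X}\phi$ with $\phi\in\mathscr S(\mathbb X_{_{+}})$, so that $f(1,z)=e(\omega(z_{_{+}},z))\phi(z_{_{+}})$. Since $\phi(x)=f(1,x)$ for $x\in\mathbb X_{_{+}}$, it suffices to check that $\delta_{\mathbb X_{_{-}}}*\delta_{\mathbb X_{_{+}}}*f$ and $f$ agree at $(1,x)$ for every $x\in\mathbb X_{_{+}}$.

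\textbf{Step 1.} The computation in the previous proof, evaluated at $z=w\in\mathbb X_{_{-}}$, gives
\[g(w):=(\delta_{\mathbb X_{_{+}}}*f)(1,w)=\int_{\mathbb X_{_{+}}}e(2\omega(y,w))\,\phi(y)\,dy.\]
This is the classical Fourier transform of $\phi$ under the identification $\mathbb X_{_{-}}\cong\mathbb X_{_{+}}^*$ given by $w\mapsto\omega(\cdot,w)$. With $e(t)=e^{\pi i t}$, the phase is $e^{2\pi i\,\omega(y,w)}$.

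\textbf{Step 2.} The function $\delta_{\mathbb X_{_{+}}}*f$ is left $\mathbb X_{_{+}}$-invariant, so it equals $\op{ind}_{-\mathcal X}g$ for the opposite polarization $-\mathcal X$. Its $+$ eigenspace is $\mathbb X_{_{-}}$ and its $-$ eigenspace is $\mathbb X_{_{+}}$. Repeating the previous proof with the roles of $\mathbb X_{_{+}}$ and $\mathbb X_{_{-}}$ exchanged gives, for $x\in\mathbb X_{_{+}}$,
\[(\delta_{\mathbb X_{_{-}}}*\delta_{\mathbb X_{_{+}}}*f)(1,x)=\int_{\mathbb X_{_{-}}}e(2\omega(w,x))\,g(w)\,dw.\]
Because $\omega(w,x)=-\omega(x,w)$, this phase is the conjugate of the one in Step 1. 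Hence the second convolution is the inverse classical Fourier transform.

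\textbf{Step 3.} The classical inversion theorem on $\mathbb X_{_{+}}$ then returns $\phi(x)$. This requires $dw$ to be the measure on $\mathbb X_{_{+}}^*$ dual to $dy$ for the pairing $(y,w)\mapsto\omega(y,w)$. In a Darboux basis this dual measure is exactly the one for which $dy\,dw$ is the Liouville measure, which is precisely the normalisation stated in the lemma.

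The main obstacle is bookkeeping of signs and factors of $2$ in the phase $e(\omega(\mathcal X_{_{+}}z,z))$ and in the group law. Concretely, one must verify that the two phases are genuinely conjugate, so that the composite gives $\phi(x)$ and not $\phi(-x)$. If a sign slip produced $\phi(-x)$, I would recheck the inverse $y^{-1}=(t^{-1},-y)$ and the cross term $e(y^{-1}z)$ in the second convolution. I would also confirm that the $2\pi$ versus $\pi$ factors match the Liouville normalisation, with no stray factor of $2^{n}$.
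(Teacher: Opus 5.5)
Your proposal is correct and takes essentially the paper's route: compute the two convolutions explicitly, recognise each as a classical Fourier transform with conjugate phases $e^{\pm 2\pi i\,\omega}$, and conclude by classical inversion, with $dy\,dw$ equal to the Liouville measure. The differences are organisational. You use left invariance to evaluate only on $\mathbb X_-$ and then $\mathbb X_+$ (identifying the intermediate function as $\op{ind}_{-\mathcal X}g$), and you cite the inversion theorem directly. The paper instead carries a general $z$ through both convolutions and collapses the inner integral to $\delta(x_+-z_+)$ by a formal Fubini step.
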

(We can write the conclusion of the lemma more invariantly in terms of Haar measures $\mu_{\mathbb X_\pm}$ on $\mathbb X_{_{\pm}}$: $(\mu_{\mathbb X_{_{-}}}\otimes \mu_{\mathbb X_{_{+}}})\cdot\delta_{\mathbb X_{_{-}}}\!\!*\delta_{\mathbb X_{_{+}}}\!\!*f = \frac{d\mathcal \mu_{\mathbb X_{_{-}}}d\mu_{\mathbb X_{_{+}}}}{d\mathcal L}\cdot f$ where $d\mathcal \mu_{\mathbb X_{_{-}}}d\mu_{\mathbb X_{_{+}}}/d\mathcal L$ is the (constant!) Radon--Nikodym derivative of the product measure $\mu_{\mathbb X_{_{-}}}\times\mu_{\mathbb X_{_{+}}}$ with respect to the Liouville measure $\mathcal L$ on $\mathbb T$.)
\begin{proof}
  We write $x=x_{_{+}}+x_{_{-}}$ as the decomposition of $\mathbb T$ into eigenspace components $x_\pm\in\mathbb X_\pm$.  Then
  \begin{align*}
    \op{ind}_{\mathcal X}f(t,z)
    &= te(z_{_{+}}z)f(z_{_{+}})\\
    \delta_{\mathbb X_{_{+}}}\!\!*\op{ind}_{\mathcal X}f(t,z)
    &= \int_{\mathbb X_{_{+}}}\op{ind}_{\mathcal X}f((1,-x_{_{+}})(t,z))\,dx_{_{+}}\\
    &= \int_{\mathbb X_{_{+}}}te(zx_{_{+}})\op{ind}_{\mathcal X}f(1,z-x_{_{+}})\,dx_{_{+}}\\
    &= \int_{\mathbb X_{_{+}}}te(zx_{_{+}} + (z_{_{+}}-x_{_{+}})z)f(z_{_{+}}-x_{_{+}})\,dx_{_{+}}\\
    &= \int_{\mathbb X_{_{+}}}te(z(z_{_{+}}-x_{_{+}}) + x_{_{+}}z)f(x_{_{+}})\,dx_{_{+}}\\
    &= te(zz_{_{+}})\int_{\mathbb X_{_{+}}}te(2x_{_{+}}z)f(x_{_{+}})\,dx_{_{+}}\\
    \delta_{\mathbb X_{_{-}}}\!\!*\delta_{\mathbb X_{_{+}}}\!\!*\op{ind}_{\mathcal X}f(t,z)
    &= \int_{\mathbb X_{_{-}}} \delta_{\mathbb X_{_{+}}}\!\!*\op{ind}_{\mathcal X}f((1,x_{_{-}})(t,z))\,dx_{_{-}}\\     
    &= \int_{\mathbb X_{_{-}}} te(zx_{_{-}})\delta_{\mathbb X_{_{+}}}\!\!*\op{ind}_{\mathcal X}f(1,z-x_{_{-}})\,dx_{_{-}}\\     
    &= \int_{\mathbb X_{_{-}}} te(zx_{_{-}} + (z-x_{_{-}})z_{_{+}})\int_{\mathbb X_{_{+}}}e(2x_{_{+}}(z-x_{_{-}}))f(x_{_{+}})\,dx_{_{+}}\,dx_{_{-}}\\     
    &= \int_{\mathbb X_{_{-}}} te(z(z_{_{-}}-x_{_{-}}) + x_{_{-}}z_{_{+}})\int_{\mathbb X_{_{+}}}e(2x_{_{+}}x_{_{-}})f(x_{_{+}})\,dx_{_{+}}\,dx_{_{-}}\\     
    &= te(zz_{_{-}})\int_{\mathbb X_{_{+}}} f(x_{_{+}})\,dx_{_{+}}\int_{\mathbb X_{_{-}}}e(2x_{_{+}}x_{_{-}}-2z_{_{+}}x_{_{-}})\,dx_{_{-}}\\     
    &= te(zz_{_{-}})\int_{\mathbb X_{_{+}}} f(x_{_{+}})\delta(x_{_{+}}-z_{_{+}})\,dx_{_{+}}.     
  \end{align*}
  In the last few steps, we used the fact that $f$ is Schwartz on $\mathbb X_{_{+}}$, and therefore the innermost integral after applying ``Fubini's theorem'' converges to a delta function as indicated, in the sense of tempered distributions on $\mathbb X_{_{+}}$.
\end{proof}

Let $f\in\mathscr M(\mathbb X_{_{+}})$ and suppose that $\mathbb X_{_{-}}$ and $\mathbb X_{_{-}}'$ are two Lagrangian complements $\mathbb X_{_{+}}$.  Let $\mathcal X=[\mathbb{X_{_{+}}X_{_{-}}}]$ and $\mathcal X'=[\mathbb{X_{_{+}}X_{_{-}}'}]$.  Then $\op{ind}_{\mathcal X}f$ is a left $\mathbb X_{_{-}}$-invariant function and $\op{ind}_{\mathcal X'}f$ is a left $\mathbb X_{_{-}}'$-invariant function.  These functions are related by
\[\op{ind}_{\mathcal X'}f(t,z) = \op{ind}_{\mathcal X}f\left(t, \frac12(\mathcal X+\mathcal X')z\right),\]
in which $\frac12(\mathcal X+\mathcal X')$ is the canonical symplectic reflection which exchanges the subspaces $\mathbb X_{_{-}}$ and $\mathbb X_{_{-}}'$ and which is the identity on $\mathbb X_{_{+}}$.  Indeed, we have $\mathcal X_{_{+}}(\mathcal X+\mathcal X')/2 = \mathcal X_{_{+}}'$, and therefore
\begin{align*}
  \op{ind}_{\mathcal X'}f(t,z)
  &= te(\mathcal X'_{_{+}}zz)f(\mathcal X_{_{+}}'z)\\
  &= te(\omega(\mathcal X_{_{+}}(\mathcal X+\mathcal X')z,(\mathcal X+\mathcal X')z)/4) f(\mathcal X_{_{+}}(\mathcal X+\mathcal X')z/2)\\
  &=\op{ind}_{\mathcal X}f(t,(\mathcal X+\mathcal X')z/2)
\end{align*}

Perhaps more interestingly, let $\mathcal X$ be a fixed real
polarization of $\mathbb T$, and let $\mathbb Y$ be any Lagrangian
subspace complementary to both eigenspaces $\mathbb X_\pm$.  The
triple convolution
\[
  \delta_{\mathbb X_{-}}\!*\delta_{\mathbb Y}*\delta_{\mathbb X_{+}}\!*f,
  \qquad f\in\mathscr S(\mathbb X_{-}\setminus\mathfrak H),
\]
is again left $\mathbb X_{-}$-invariant, and in fact is a scalar
multiple of $f$ whose value is determined by the \emph{Maslov index}.

\begin{definition}
  The \emph{Kashiwara form} on $\mathbb T\oplus\mathbb T\oplus\mathbb T$
  is the quadratic form
  \[
    \kappa(a,b,c) = \omega(a,b) + \omega(b,c) + \omega(c,a).
  \]
  The \emph{Maslov index} of an ordered triple of Lagrangian
  subspaces $(\mathbb A,\mathbb B,\mathbb C)$ is the inertial index
  (i.e., the number of positive eigenvalues minus the number of
  negative eigenvalues) of $\kappa|_{\mathbb A\oplus\mathbb B\oplus\mathbb C}$.
\end{definition}

\begin{lemma}\label{lem:maslovquadform}
  Let $(\mathbb A,\mathbb B,\mathbb C)$ be pairwise complementary
  Lagrangian subspaces, and set
  \[
    Q(c) = -\omega\!\left([\mathbb{AB}]_+ c,\,[\mathbb{AB}]_- c\right),
    \qquad c\in\mathbb C.
  \]
  Then $\tau(\mathbb A,\mathbb B,\mathbb C) = \idx(Q)$.
\end{lemma}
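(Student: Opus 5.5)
The plan is to diagonalise the Kashiwara form $\kappa$ on $\mathbb A\oplus\mathbb B\oplus\mathbb C$ directly by a unipotent (triangular) change of variables. This should split it as an orthogonal sum of a hyperbolic form on $\mathbb A\oplus\mathbb B$ and the form $Q$ on $\mathbb C$. The result then follows from Sylvester's law of inertia. Throughout, $[\mathbb{AB}]$ denotes the real polarization with $+1$ eigenspace $\mathbb A$ and $-1$ eigenspace $\mathbb B$. This is well defined because $\mathbb A\oplus\mathbb B=\mathbb T$. For $c\in\mathbb C$, write $\alpha=[\mathbb{AB}]_+c\in\mathbb A$ and $\beta=[\mathbb{AB}]_-c\in\mathbb B$, so that $c=\alpha+\beta$ and $Q(c)=-\omega(\alpha,\beta)$.

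\emph{Step 1: simplify $\kappa$ using isotropy.} For $a\in\mathbb A$, $b\in\mathbb B$, $c\in\mathbb C$, isotropy of $\mathbb A$ and $\mathbb B$ gives $\omega(b,c)=\omega(b,\alpha)$ and $\omega(c,a)=\omega(\beta,a)$. Hence
\[
\kappa(a,b,c)=\omega(a,b)+\omega(b,\alpha)+\omega(\beta,a).
\]

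\emph{Step 2: complete the square.} Substitute $a=a'+\alpha$ and $b=b'+\beta$, with $a'\in\mathbb A$ and $b'\in\mathbb B$. The map $(a',b',c)\mapsto(a'+\alpha(c),\,b'+\beta(c),\,c)$ is a linear automorphism of $\mathbb A\oplus\mathbb B\oplus\mathbb C$, being unipotent triangular. Expanding, and again using isotropy of $\mathbb A$ and $\mathbb B$, the cross terms pair off by skew-symmetry:
\[
\omega(a',\beta)+\omega(\beta,a')=0,\qquad \omega(\alpha,b')+\omega(b',\alpha)=0.
\]
The remaining $c$-terms are $\omega(\alpha,\beta)+2\omega(\beta,\alpha)=-\omega(\alpha,\beta)$. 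Therefore
\[
\kappa = \omega(a',b') \;+\; Q(c),
\]
which is an orthogonal direct sum of a form on $\mathbb A\oplus\mathbb B$ and $Q$ on $\mathbb C$.

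\emph{Step 3: index of the hyperbolic part.} Since $\mathbb A$ and $\mathbb B$ are complementary Lagrangians, $\omega$ restricts to a perfect pairing $\mathbb A\times\mathbb B\to\mathbb R$. Hence the quadratic form $(a',b')\mapsto\omega(a',b')$ is nondegenerate and hyperbolic. Concretely, choose dual bases $e_i\in\mathbb A$ and $f_j\in\mathbb B$ with $\omega(e_i,f_j)=\delta_{ij}$; the form becomes $\sum_i x_iy_i$. Its signature is therefore $(n,n)$ and its index is zero. By Sylvester's law, the inertial index is invariant under linear change of variables and additive over orthogonal sums. It follows that $\tau(\mathbb A,\mathbb B,\mathbb C)=0+\idx(Q)$.

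The only real obstacle is bookkeeping of signs and conventions. One must make sure that $[\mathbb{AB}]_+$ really projects onto $\mathbb A$ along $\mathbb B$, and one must track the factor $2\omega(\beta,\alpha)$ correctly so that the sign of $Q$ matches the statement. Note also that pairwise complementarity is only needed to make $[\mathbb{AB}]$ defined and the hyperbolic part nondegenerate. The transversality of $\mathbb C$ to $\mathbb A$ and $\mathbb B$ is then equivalent to the nondegeneracy of $Q$, since $\alpha$ and $\beta$ become isomorphisms $\mathbb C\to\mathbb A$ and $\mathbb C\to\mathbb B$. That nondegeneracy is not needed for the index identity itself.
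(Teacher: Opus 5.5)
Your proof is correct and is essentially the paper's argument: the paper embeds $\mathbb C$ via $\gamma(c)=([\mathbb{AB}]_+c,[\mathbb{AB}]_-c,c)$, checks $\kappa\circ\gamma=Q$, and splits off $\mathbb A\oplus\mathbb B\oplus\{0\}$, which is exactly the decomposition your unipotent substitution produces. Your version is slightly more complete: your cross-term cancellation verifies the $\kappa$-orthogonality of the two summands, which the paper's appeal to additivity of the index needs but does not check, and your dual-basis argument gives the correct reason the hyperbolic part has index zero.
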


\begin{proof}
  Embed $\mathbb C$ into $\mathbb A\oplus\mathbb B\oplus\mathbb C$ via
  $\gamma(c) = ([\mathbb{AB}]_+ c,\, [\mathbb{AB}]_- c,\, c)$.
  Since $c = [\mathbb{AB}]_+ c + [\mathbb{AB}]_- c$, setting
  $a = [\mathbb{AB}]_+c$ and $b = [\mathbb{AB}]_-c$ gives
  \[
    \kappa(\gamma(c))
    = \omega(a,b) + \omega(b,c) + \omega(c,a)
    = -\omega(a,b)
    = -\omega([\mathbb{AB}]_+c,[\mathbb{AB}]_-c)
    = Q(c).
  \]
  (The second equality uses $c = a+b$ and the Lagrangian property.)
  On the complementary subspace
  $\mathbb A\oplus\mathbb B\oplus\{0\}\subset
   \mathbb A\oplus\mathbb B\oplus\mathbb C$, the Kashiwara form
  $\kappa(a,b,0) = \omega(a,b)$ has inertial index zero because
  the symplectic form is itself of split signature.
  The lemma follows by the additivity of the inertial index.
\end{proof}

We can now state and prove the main result of this section.

\begin{theorem}\label{thm:maslov}
  Let $\mathbb A,\mathbb B,\mathbb C$ be pairwise complementary
  Lagrangian subspaces of $\mathbb T$, with Haar measures
  $\mu_{\mathbb A},\mu_{\mathbb B},\mu_{\mathbb C}$, and let
  $\mathcal L$ denote the Liouville measure on $\mathbb T$.
  Then
  \begin{align*}
    (\mu_{\mathbb B}\otimes\mu_{\mathbb C}&\otimes\mu_{\mathbb A})
    \cdot\delta_{\mathbb B}*\delta_{\mathbb C}*\delta_{\mathbb A}
    *\op{ind}_{\mathbb{AB}}f\\
    &= \left(
        \frac{d\mu_{\mathbb A}\,d\mu_{\mathbb B}}{d\mathcal L}
       \cdot\frac{d\mu_{\mathbb B}\,d\mu_{\mathbb C}}{d\mathcal L}
       \cdot\frac{d\mu_{\mathbb C}\,d\mu_{\mathbb A}}{d\mathcal L}
      \right)^{1/2}
      e\!\left(-\frac{\sgn(h)}{4}\,\tau(\mathbb A,\mathbb B,\mathbb C)\right)
      \op{ind}_{\mathbb{AB}}f,
  \end{align*}
  where $\tau(\mathbb A,\mathbb B,\mathbb C)$ is the Maslov index of
  the triple $(\mathbb A,\mathbb B,\mathbb C)$.
\end{theorem}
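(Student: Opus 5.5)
The plan is to reduce the triple convolution to an explicit Gaussian (Fresnel) integral over $\mathbb C$ and then evaluate that integral by the classical formula for oscillatory Gaussians. The Fresnel formula gives the modulus and the eighth root of unity. Lemma~\ref{lem:maslovquadform} then identifies the signature of the quadratic form with the Maslov index.

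First I would compute $\delta_{\mathbb A}*\op{ind}_{\mathbb{AB}}f$. Write $\mathcal X=[\mathbb{AB}]$, so $\mathbb X_+=\mathbb A$ and $\mathbb X_-=\mathbb B$. The calculation in the proof of Fourier inversion shows that
\[
\delta_{\mathbb A}*\op{ind}_{\mathbb{AB}}f(t,z)=te(zz_+)\,\hat f(z_-),
\]
where $\hat f(z_-)=\int_{\mathbb A}e(2a z)f(a)\,da$ and $z_\pm=[\mathbb{AB}]_\pm z$. This function is left $\mathbb A$-invariant and determined by its values on $\mathbb B$.

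Next I would convolve with $\delta_{\mathbb C}$. By left $\mathbb A$-invariance, for $c\in\mathbb C$ we have $g(1,c)=e(\ast)\,g(1,[\mathbb{AB}]_-c)$, with an explicit phase. Since $\mathbb C$ is complementary to $\mathbb A$, the map $c\mapsto [\mathbb{AB}]_-c$ is an isomorphism $\mathbb C\to\mathbb B$. Substituting $y=c$ into
\[
\delta_{\mathbb C}*g(t,z)=\int_{\mathbb C}te(zc)\,g(1,z-c)\,dc
\]
and collecting phases produces $e(Q(c))$ times linear terms, where $Q(c)=-\omega([\mathbb{AB}]_+c,[\mathbb{AB}]_-c)$ is exactly the form in Lemma~\ref{lem:maslovquadform}. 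The final convolution with $\delta_{\mathbb B}$ adds one more Fourier transform in the $\mathbb B$ direction.

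I would evaluate at $z\in\mathbb A$, which suffices because the output is left $\mathbb B$-invariant. After exchanging the order of integration, the inner integrals should collapse: two of the three integrations (over $\mathbb A$ and over $\mathbb B$ or $\mathbb C$) give a Fourier pair, as in the inversion lemma. What remains is a single Gaussian integral
\[
\int_{\mathbb C}e\bigl(Q(c)+\text{linear}\bigr)\,dc.
\]
By completing the square, this integral cancels the linear terms against the remaining Fourier transform and returns $f$. The leftover constant is the Fresnel integral
\[
\int_{\mathbb C}e(Q(c))\,d\mu_{\mathbb C}=|\det Q|^{-1/2}\,e\bigl(\pm\tfrac14\operatorname{idx}Q\bigr).
\]
The sign of $h$ enters because the central character is $e_h$, so $Q$ gets replaced by $hQ$. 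This is where $\sgn(h)$ comes from.

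Finally I would identify the constants. The determinant of $Q$ relative to $\mu_{\mathbb C}$ should be expressed through the Radon--Nikodym derivatives. Concretely, $\omega$ pairs $\mathbb A$ with $\mathbb B$, and the three graph maps between the pairs give the three factors $d\mu\,d\mu/d\mathcal L$; their product, square-rooted, should match $|\det Q|^{-1/2}$ together with the inversion normalisations. The index of $Q$ is $\tau(\mathbb A,\mathbb B,\mathbb C)$ by Lemma~\ref{lem:maslovquadform}.

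I expect the main obstacle to be the bookkeeping. The phases from the group law $e(xx')$, from the left invariance, and from $\op{ind}$ must all assemble exactly into $Q$ plus linear terms, and the sign convention must give $-\sgn(h)/4$ rather than $+$. I would first check the case $n=1$ with $\mathbb A,\mathbb B,\mathbb C$ the lines $x$, $\xi$, $x=\xi$ to fix signs, and then argue in general by diagonalising $Q$.
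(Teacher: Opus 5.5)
Your strategy is the paper's. Fourier inversion collapses the $\mathbb A$- and $\mathbb B$-integrations, leaving a Fresnel integral over $\mathbb C$ of the form $Q$ of Lemma~\ref{lem:maslovquadform}, whose index is $\tau(\mathbb A,\mathbb B,\mathbb C)$. One then matches $|\det Q|^{-1/2}$ to the Radon--Nikodym factors; the paper's Step~3 does this in graph coordinates, using $|\det Q|=|\det(C-B)|\,|\det(C-A)|/|\det(A-B)|$. The paper writes the triple integral at once and substitutes $u=[\mathbb{AB}]_+(x+a+b+c)$, so that the $b$-integral becomes a delta forcing $u=[\mathbb{AB}]_+x$. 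Your sequential order works equally well, and in it the decoupling is exact. At $z\in\mathbb A$, after substituting $b=-\beta-[\mathbb{AB}]_-c$, the phase is precisely $2\omega(\beta,z)+\omega([\mathbb{AB}]_-c,[\mathbb{AB}]_+c)$. The $\mathbb C$-integral is therefore the pure Fresnel integral $\int_{\mathbb C}e(Q(c))\,dc$. The $\mathbb B$-integral against $\hat f$ is plain Fourier inversion, returning $f(z)=\op{ind}_{\mathbb{AB}}f(1,z)$ times $d\mu_{\mathbb A}\,d\mu_{\mathbb B}/d\mathcal L$. There are no linear terms and no square to complete. Your picture of linear terms cancelling ``against the remaining Fourier transform'' is not what happens: once $\mathbb A$ and $\mathbb B$ pair off, no transform remains.

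The real problem is the sign you defer to an $n=1$ check: in the conventions you adopt, it comes out opposite to the statement. You use the left translations of Section~\ref{sec:heisenberg}, namely $\delta_{\mathbb C}*g(t,z)=\int_{\mathbb C}te(zc)\,g(1,z-c)\,dc$ and the left-$\mathbb B$-invariant $\op{ind}_{\mathbb{AB}}f(t,z)=te(\omega([\mathbb{AB}]_+z,z))f([\mathbb{AB}]_+z)$. With these, your Gaussian is $e(+Q(c))$, exactly as you wrote, and the Fresnel formula gives $e(+\idx Q/4)=e(+\tau/4)$, i.e.\ $+\sgn(h)$ once $h$ is inserted. Concretely, take $n=1$, $\omega=dx\wedge d\xi$, $\mathbb A=\{\xi=0\}$, $\mathbb B=\{x=0\}$, $\mathbb C=\{x=\xi\}$. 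Then $Q(t,t)=-t^2$, $\tau=-1$, and $\int e(-t^2)\,dt=e(-\tfrac14)=e(\tau/4)$, not $e(-\tau/4)$. The stated $-\sgn(h)$ comes from the paper's Step~1, which uses right translates $z\cdot b\cdot c\cdot a$ and the right-$\mathbb B$-invariant $\op{ind}_{\mathbb{AB}}f(s,x)=e_h\bigl(s-\tfrac12\omega([\mathbb{AB}]_+x,[\mathbb{AB}]_-x)\bigr)f([\mathbb{AB}]_+x)$. In that mirror convention the Gaussian becomes $e_h\bigl(\tfrac12\omega([\mathbb{AB}]_+c,c)\bigr)=e_h\bigl(-\tfrac12 Q(c)\bigr)$. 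So no amount of bookkeeping will produce $-\sgn(h)/4$ in your setup. You must either switch to the paper's right-translation convention or conclude the formula with phase $e(+\sgn(h)\tau/4)$.
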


The proof proceeds in three steps: first we evaluate the triple
convolution up to a scalar Gaussian integral; then we evaluate
that integral by stationary phase; finally we identify the
resulting phase with the Maslov index via
Lemma~\ref{lem:maslovquadform}.

\begin{proof}
  \textbf{Step 1: Reduction to a Gaussian integral.}
  Put $P=[\mathbb{AB}]_+$ and $Q=[\mathbb{AB}]_-=I-P$.
  Write a general element of the Heisenberg group as $z=(s,x)$
  with $s\in\mathbb R$ and $x\in\mathbb T$.  By the group law,
  right-multiplying by $b\in\mathbb B$, $c\in\mathbb C$,
  $a\in\mathbb A$ in succession gives
  \[
    z\cdot b\cdot c\cdot a
    = \Bigl(s+\tfrac12\bigl[\omega(x,b)+\omega(x+b,c)+\omega(x+b+c,a)\bigr],
      \;x+a+b+c\Bigr),
  \]
  and therefore
  \begin{align*}
    \op{ind}_{\mathbb{AB}}f(z\cdot b\cdot c\cdot a)
    &= e_h\!\Bigl(
        s+\tfrac12\bigl[\omega(x,b)+\omega(x+b,c)+\omega(x+b+c,a) \\
    &\qquad\qquad\qquad\qquad
        -\omega(P(x+a+b+c),Q(x+a+b+c))\bigr]
      \Bigr)\,
      f(P(x+a+b+c)).
  \end{align*}

  Set
  \[
    u=P(x+a+b+c)\in\mathbb A.
  \]
  Since $P$ is the projection onto $\mathbb A$ along $\mathbb B$,
  we have
  \[
    u=a+P(x+c),
    \qquad\text{hence}\qquad
    a=u-P(x+c).
  \]
  Substituting this and using that $\mathbb A$ and $\mathbb B$ are
  Lagrangian, a straightforward expansion shows that the phase
  separates as
  \[
    \op{ind}_{\mathbb{AB}}f(z\cdot b\cdot c\cdot a)
    =
    e_h\!\Bigl(
      s-\tfrac12\omega(Px,Qx)+\omega(u-Px,b)+\tfrac12\omega(Pc,c)
    \Bigr)\,f(u).
  \]
  (If one obtains $\omega(Px-u,b)$ instead, this is equivalent after
  the change of variable $b\mapsto -b$, which does not affect the Haar
  measure on $\mathbb B$.)

  We now integrate first in $b\in\mathbb B$.  By the Fourier inversion
  lemma for the complementary pair $(\mathbb A,\mathbb B)$, the kernel
  $e_h(\omega(u-Px,b))$ produces the delta distribution
  $\delta_{\mathbb A}(u-Px)$, so the $u$--integration collapses to
  $u=Px$.  Thus
  \begin{align*}
    \delta_{\mathbb B}*\delta_{\mathbb C}*\delta_{\mathbb A}
    *\op{ind}_{\mathbb{AB}}f(z)
    &=
    \frac{d\mu_{\mathbb A}\,d\mu_{\mathbb B}}{d\mathcal L}\,
    e_h\!\bigl(s-\tfrac12\omega(Px,Qx)\bigr)\,f(Px)\,
    \lambda(\mu_{\mathbb C}) \\
    &=
    \frac{d\mu_{\mathbb A}\,d\mu_{\mathbb B}}{d\mathcal L}\,
    \lambda(\mu_{\mathbb C})\,
    \op{ind}_{\mathbb{AB}}f(z),
  \end{align*}
  where
  \begin{equation}\label{eq:lambda}
    \lambda(\mu_{\mathbb C})
    =
    \lim_{\epsilon\to0^+}
    \int_{\mathbb C}
      e^{-\pi\epsilon\|c\|^2}\,
      e_h\!\bigl(\tfrac12\omega(Pc,c)\bigr)\,
      d\mu_{\mathbb C}(c).
  \end{equation}
  Since $P=[\mathbb{AB}]_+$ and $Q=I-P$, for $c\in\mathbb C$ we have
  \[
    \omega(Pc,c)=\omega(Pc,Qc)
    =-\,
    Q_{\mathbb C}(c),
  \]
  where
  \[
    Q_{\mathbb C}(c)
    =
    -\omega\!\left([\mathbb{AB}]_+c,\,[\mathbb{AB}]_-c\right)
  \]
  is the quadratic form from Lemma~\ref{lem:maslovquadform}.
  Hence
  \begin{equation}\label{eq:step1}
    \delta_{\mathbb B}*\delta_{\mathbb C}*\delta_{\mathbb A}
    *\op{ind}_{\mathbb{AB}}f
    =
    \frac{d\mu_{\mathbb A}\,d\mu_{\mathbb B}}{d\mathcal L}\,
    \lambda(\mu_{\mathbb C})\,
    \op{ind}_{\mathbb{AB}}f.
  \end{equation}

  \textbf{Step 2: Evaluation of the Gaussian integral.}
  We use the standard Fresnel formula: if $R$ is a non-degenerate real
  quadratic form on a real vector space $\mathbb V$ with Haar measure
  $\mu$, then
  \begin{equation}\label{eq:gaussianphase}
    \lim_{\epsilon\to0^+}
    \int_{\mathbb V}
      e^{-\pi\epsilon\|v\|^2}\,
      e_h\!\bigl(\tfrac12 R(v)\bigr)\,
      d\mu(v)
    =
    |\det R|^{-1/2}\,
    e\!\left(\frac{\idx(hR)}{4}\right),
  \end{equation}
  where $\det R$ is computed relative to $\mu$ and
  $\idx(hR)=\sgn(h)\,\idx(R)$ is the inertial index.
  Indeed, after diagonalising $R$ one reduces to the one-variable
  integral
  \[
    \int_{-\infty}^{\infty}
      e^{-\pi\epsilon x^2}\,e_h\!\bigl(\tfrac12 qx^2\bigr)\,dx
    =
    \int_{-\infty}^{\infty}
      e^{-\pi(\epsilon-i h q)x^2}\,dx
    =
    (\epsilon-i h q)^{-1/2},
  \]
  whose limit has phase $e(\sgn(hq)/4)$.

  Applying \eqref{eq:gaussianphase} to \eqref{eq:lambda} with
  \[
    R_{\mathbb C}(c)=\omega(Pc,c)=-Q_{\mathbb C}(c),
  \]
  we obtain
  \[
    \idx(hR_{\mathbb C})
    =
    \idx(-hQ_{\mathbb C})
    =
    -\sgn(h)\,\idx(Q_{\mathbb C})
    =
    -\sgn(h)\,\tau(\mathbb A,\mathbb B,\mathbb C),
  \]
  by Lemma~\ref{lem:maslovquadform}.  Therefore
  \begin{equation}\label{eq:lambdaeval}
    \lambda(\mu_{\mathbb C})
    =
    |\det Q_{\mathbb C}|^{-1/2}\,
    e\!\left(
      -\frac{\sgn(h)}{4}\,\tau(\mathbb A,\mathbb B,\mathbb C)
    \right).
  \end{equation}

  \textbf{Step 3: Identifying the determinant prefactor.}
  Fix an identification $\mathbb T\cong\mathbb K\oplus\mathbb K$, with
  $\mathbb K$ an $n$-dimensional Euclidean space, such that
  \[
    \omega((k_1,k_2),(k_1',k_2'))
    =
    k_1\!\cdot\!k_2' - k_2\!\cdot\!k_1'.
  \]
  Write the three pairwise complementary Lagrangians as graphs of
  symmetric endomorphisms:
  \[
    \mathbb A=\{(k,Ak)\},\qquad
    \mathbb B=\{(k,Bk)\},\qquad
    \mathbb C=\{(k,Ck)\}.
  \]
  For $c=(k,Ck)\in\mathbb C$, write
  \[
    c=a+b,\qquad a\in\mathbb A,\ b\in\mathbb B.
  \]
  Solving for $a$ gives
  \[
    a=\bigl((A-B)^{-1}(C-B)k,\;
      A(A-B)^{-1}(C-B)k\bigr),
  \]
  hence
  \begin{align*}
    Q_{\mathbb C}(k,Ck)
    &=
    -\omega(a,c) \\
    &=
    -\,k^T(C-B)(A-B)^{-1}(C-A)\,k.
  \end{align*}
  Consequently,
  \[
    |\det Q_{\mathbb C}|
    =
    \frac{|\det(C-B)|\,|\det(C-A)|}{|\det(A-B)|}.
  \]
  On the other hand, relative to the Lebesgue measures on the graph
  parameters, the product Haar measures satisfy
  \[
    \frac{d\mu_{\mathbb A}\,d\mu_{\mathbb B}}{d\mathcal L}
    = |\det(A-B)|^{-1},\qquad
    \frac{d\mu_{\mathbb B}\,d\mu_{\mathbb C}}{d\mathcal L}
    = |\det(B-C)|^{-1},\qquad
    \frac{d\mu_{\mathbb C}\,d\mu_{\mathbb A}}{d\mathcal L}
    = |\det(C-A)|^{-1}.
  \]
  Therefore
  \begin{align*}
    \frac{d\mu_{\mathbb A}\,d\mu_{\mathbb B}}{d\mathcal L}\,
    |\det Q_{\mathbb C}|^{-1/2}
    &=
    |\det(A-B)|^{-1}
    \left(
      \frac{|\det(A-B)|}
           {|\det(C-B)|\,|\det(C-A)|}
    \right)^{1/2} \\
    &=
    \bigl(
      |\det(A-B)|\,|\det(B-C)|\,|\det(C-A)|
    \bigr)^{-1/2} \\
    &=
    \left(
      \frac{d\mu_{\mathbb A}\,d\mu_{\mathbb B}}{d\mathcal L}\,
      \frac{d\mu_{\mathbb B}\,d\mu_{\mathbb C}}{d\mathcal L}\,
      \frac{d\mu_{\mathbb C}\,d\mu_{\mathbb A}}{d\mathcal L}
    \right)^{1/2},
  \end{align*}
  since $|\det(B-C)|=|\det(C-B)|$.

  Substituting \eqref{eq:lambdaeval} into \eqref{eq:step1} and using
  the preceding identity yields
  \[
    (\mu_{\mathbb B}\otimes\mu_{\mathbb C}\otimes\mu_{\mathbb A})
    \cdot\delta_{\mathbb B}*\delta_{\mathbb C}*\delta_{\mathbb A}
    *\op{ind}_{\mathbb{AB}}f
    =
    \left(
      \frac{d\mu_{\mathbb A}\,d\mu_{\mathbb B}}{d\mathcal L}\,
      \frac{d\mu_{\mathbb B}\,d\mu_{\mathbb C}}{d\mathcal L}\,
      \frac{d\mu_{\mathbb C}\,d\mu_{\mathbb A}}{d\mathcal L}
    \right)^{1/2}
    e\!\left(
      -\frac{\sgn(h)}{4}\,\tau(\mathbb A,\mathbb B,\mathbb C)
    \right)
    \op{ind}_{\mathbb{AB}}f.
  \]
  This is the desired formula.
\end{proof}

\section{Bargmann and theta transforms}
\label{sec:bargmann}
Now let $\mathcal J$ now be an imaginary polarization.  For $t\in U(1)$ and $z\in\mathbb T$, put $\eta_{\mathcal J}(t,z) = t e(\mathcal J_{_{+}}zz)$.  Then:

\begin{lemma}\label{lem:eta_holomorphic}
  $\eta_{\mathcal J}$ is left $\mathcal J$-holomorphic and right $\mathcal J$-anti\-holomorphic.
\end{lemma}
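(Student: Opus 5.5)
The plan is to verify both properties directly from the group law \eqref{Hgrouplaw} and the translation formulas for $L$ and $R$. Throughout, write $e(xy)$ for $e(\omega(x,y))$, as the paper does, and extend $\eta_{\mathcal J}$ to $\mathfrak H_{\mathbb C}$ by the same formula $\eta_{\mathcal J}(t,z)=te(\omega(\mathcal J_+z,z))$. This extension is holomorphic in $(t,z)\in\mathbb C^*\times\mathbb T\otimes\mathbb C$, because $\mathcal J_+$ is complex linear and $\omega$ is complex bilinear. It also commutes with the centre, so $\eta_{\mathcal J}\in\mathscr M_i(\mathfrak H)$. By definition, left $\mathcal J$-holomorphicity means left $\mathbb J_-$-invariance. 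Right antiholomorphicity I interpret as right invariance under $\mathbb J_+=\overline{\mathbb J_-}$, which is the conjugate eigenspace. Since $\mathcal J$ is imaginary, $\bar{\mathcal J}=-\mathcal J$, and hence $\bar{\mathcal J}_\pm=\mathcal J_\mp$.

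\textbf{Left invariance.} For $w\in\mathbb J_-$ I compute
\[
L(1,w)\eta_{\mathcal J}(t,z)=t\,e(\omega(w,z))\,e\bigl(\omega(\mathcal J_+(w+z),w+z)\bigr).
\]
Since $\mathcal J_+w=0$, the second exponent becomes $\omega(z_+,w+z)=\omega(z_+,z)+\omega(z_+,w)$. Here $z_\pm=\mathcal J_\pm z$. The first exponent is $\omega(w,z)=\omega(w,z_+)+\omega(w,z_-)$. The term $\omega(w,z_-)$ vanishes because $\mathbb J_-$ is Lagrangian. The term $\omega(w,z_+)$ cancels against $\omega(z_+,w)$. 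What remains is $\eta_{\mathcal J}(t,z)$. To confirm that $\mathbb J_\pm$ are Lagrangian, I use the axiom $\omega(\mathcal Jx,\mathcal Jy)=-\omega(x,y)$: on the $+1$ eigenspace this reads $\omega(x,y)=-\omega(x,y)$, and on the $-1$ eigenspace the same holds with two sign flips, so $\omega$ vanishes on each. I will also check this differential-equation form against the Cauchy--Riemann condition $\partial_-f=i\pi f\,\omega(z_+,dz_-)$ stated earlier. Differentiating the exponent $\omega(z_+,z)$ in the $z_-$ direction gives $\omega(z_+,dz_-)$, which agrees.

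\textbf{Right antiholomorphicity.} For $w\in\mathbb J_+$ I compute
\[
R(1,w)\eta(t,z)=t\,e(-\omega(w,z))\,e\bigl(\omega(z_++w,z+w)\bigr).
\]
Expanding, $\omega(z_++w,z+w)=\omega(z_+,z)+\omega(z_+,w)+\omega(w,z)+\omega(w,w)$. The term $\omega(w,w)$ is zero. The term $\omega(z_+,w)$ is zero because $z_+$ and $w$ both lie in $\mathbb J_+$. The term $\omega(w,z)$ cancels against the prefactor $e(-\omega(w,z))$. So $\eta$ is right $\mathbb J_+$-invariant, which is the antiholomorphic condition in the $\bar z$ sense. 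As a differential check, the derivative in the $z_+$ direction is $\omega(dz_+,z)=\omega(dz_+,z_-)$.

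\textbf{Main obstacle.} The only delicate point is bookkeeping. I have to fix the convention for right antiholomorphicity, since the paper defines only the left notion. I also have to keep track of the orientation convention for $L$ versus $R$, namely $e(xx')$ versus $e(-xx')$. If the sign conventions turned out to require $\mathbb J_-$ on the right instead, the same expansion would apply with the roles of $z_\pm$ swapped. I will settle this by matching against the Cauchy--Riemann characterization above.
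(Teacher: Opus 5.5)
Your proof is correct and is essentially the paper's argument: the paper also checks left $\mathbb J_{-}$-invariance via $L(1,w_{-})$ and right $\mathbb J_{+}$-invariance via $R(1,w_{+})$, using exactly your cancellations (isotropy of $\mathbb J_{\pm}$ against the prefactor $e(\pm\omega(w,z))$ from the group law). Your closing hedge is unnecessary: right $\mathbb J_{+}$-invariance is the convention the paper uses, and $\eta_{\mathcal J}$ is in fact not right $\mathbb J_{-}$-invariant (that translate produces an extra factor $e(2\omega(z_{+},w))$), so your computation already settles the matter.
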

\begin{proof}
  \begin{align*}
    L(1,w_{_{-}})\eta(t,z) &= te(w_{_{-}}z)\eta(1,z+w_{_{-}}) = te(w_{_{-}}z_{_{+}} + z_{_{+}}(z+w_{_{-}})) = te(z_{_{+}}z) =  \eta(t,z).\\
    R(1,w_{_{+}})\eta(t,z) &= te(zw_{_{+}})\eta(1,z+w_{_{+}}) = te(z_{_{-}}w_{_{+}} + (z_{_{+}}+w_{_{+}})z) \\
                      &= te(z_{_{+}}z + z_{_{-}}w_{_{+}} + w_{_{+}}z_{_{-}}) = te(z_{_{+}}z) =  \eta(t,z).
  \end{align*}
\end{proof}

\begin{proof}[Proof of theorem]
  Left $\mathcal J$-holomorphicity of $\eta_{\mathcal J}*f$ follows from
  Lemma~\ref{lem:eta_holomorphic}: $\eta_{\mathcal J}$ is left $\mathcal J$-holomorphic,
  hence so is every left-translate, and the convolution inherits this invariance.

  It remains to show that $\eta_{\mathcal J}*f$ is Schwartz.
  The calculation in the paper already gives
  \begin{equation}\label{eq:bargmann_conv}
    \eta_{\mathcal J}*f(t,z)
    = t\, e(\mathcal J_{_{+}}zz)
      \int_{\mathbb T} e(y_{_{+}}y - 2z_{_{+}}y)\,f(1,y)\,dy,
  \end{equation}
  where $z_{_{+}} = \mathcal J_{_{+}}z$ and $y_{_{+}} = \mathcal J_{_{+}}y$.
  Since all information about the $t$-dependence is the overall factor of $t\in U(1)$,
  it suffices to study the function $F:\mathbb T\to\mathbb C$ defined by
  \[F(z) = e(\mathcal J_{_{+}}zz)\,G(z_{_{+}}),\qquad
    G(w) = \int_{\mathbb T} e(y_{_{+}}y)\,e(-2wy)\,f(1,y)\,dy.\]

  \medskip\noindent\textbf{Positivity and the Gaussian factor.}
  By the positivity assumption $i\omega(x,\mathcal Jx)>0$ for all $x\ne 0$,
  the quadratic form $x\mapsto i\cdot 2\mathcal J_{_{+}}xx$ is real and positive-definite on $\mathbb T$.
  Since $e(z) = e^{\pi iz}$ and $\mathcal J_{_{+}}zz$ is complex-valued,
  \[|e(\mathcal J_{_{+}}zz)| = e^{-\pi\operatorname{im}(\mathcal J_{_{+}}zz)}.\]
  Positive-definiteness gives a constant $c>0$ such that
  $\operatorname{im}(\mathcal J_{_{+}}zz) \ge c\|z\|^2$ for all $z\in\mathbb T$, so
  \begin{equation}\label{eq:gaussian_decay}
    |e(\mathcal J_{_{+}}zz)| \le e^{-\pi c\|z\|^2}.
  \end{equation}

  \medskip\noindent\textbf{$G$ is Schwartz in $z_{_{+}}$.}
  The function $y\mapsto e(y_{_{+}}y)f(1,y)$ is in $\mathscr S(\mathbb T)$:
  $e(y_{_{+}}y) = e(\mathcal J_{_{+}}yy)$ has modulus $e^{-\pi\operatorname{im}(\mathcal J_{_{+}}yy)}\le e^{-\pi c\|y\|^2}$
  by~\eqref{eq:gaussian_decay} applied with $z=y$, so multiplication
  by $e(y_{_{+}}y)$ maps $\mathscr S(\mathbb T)$ to $\mathscr S(\mathbb T)$.
  Therefore $G(w)$ is (up to the identification of $\mathbb X_{_{+}}$ with the dual of $\mathbb T$
  via $\omega$) the Fourier transform of a Schwartz function, and is itself Schwartz:
  for every polynomial $p(w)$ and every $N\ge 0$,
  \begin{equation}\label{eq:G_schwartz}
    |p(w)\,G(w)| \le C_{p,N}(1+\|w\|)^{-N}.
  \end{equation}
  Concretely, differentiating under the integral sign,
  $\partial^{\alpha}_w G(w) = \int_{\mathbb T}(-2\pi i y)^\alpha\,
   e(y_{_{+}}y-2wy)f(1,y)\,dy$,
  and each such integral is bounded uniformly in $w$ because $y\mapsto y^\alpha e(y_{_{+}}y)f(1,y)$
  is in $L^1(\mathbb T)$; the rapid decay in $w$ then follows from repeated integration
  by parts in $y$.

  \medskip\noindent\textbf{Schwartz estimates for $F$.}
  We must estimate $|z^\beta\,\partial^\alpha_z F(z)|$ for all multi-indices $\alpha,\beta$.
  Write $z = z_{_{+}}+z_{_{-}}$ with $z_\pm = \mathcal J_\pm z$.

  \emph{Derivatives in $z_{_{-}}$.}
  Left $\mathcal J$-holomorphicity of $F$ (as a function on $\mathbb T$)
  means $\partial_{z_{_{-}}}F = 0$ in the sense of the Cauchy--Riemann equation
  $\partial_{_{-}}F = i\pi F\,\omega(z_{_{+}},dz_{_{-}})$.
  Consequently, $\partial_{z_{_{-}}}^k F$ can be expressed as a
  sum of products of $F$ with polynomials in $z_{_{+}}$, so all
  $z_{_{-}}$-derivatives reduce to $z_{_{+}}$-derivatives times polynomial
  factors; it suffices to bound $\partial_{z_{_{+}}}^\alpha F$ and $z^\beta F$.

  \emph{$z_{_{+}}$-derivatives.}
  Differentiating \eqref{eq:bargmann_conv} in $z_{_{+}}$:
  \[\partial_{z_{_{+}}}^\alpha F(z)
    = \sum_{\alpha'\le\alpha}\binom{\alpha}{\alpha'}
      (\partial^{\alpha'}\,e(\mathcal J_{_{+}}zz))\cdot(\partial^{\alpha-\alpha'} G)(z_{_{+}}).\]
  Each $z_{_{+}}$-derivative of $e(\mathcal J_{_{+}}zz)$ produces a polynomial factor in $z_{_{+}}$
  times $e(\mathcal J_{_{+}}zz)$, so the $e(\mathcal J_{_{+}}zz)$-factor is never differentiated away.
  Thus $$|\partial_{z_{_{+}}}^\alpha F(z)| \le p_\alpha(z_{_{+}})\,e^{-\pi c\|z\|^2}\,\sup_w|(\partial^{\alpha-\alpha'}G)(w)|$$
  for some polynomial $p_\alpha$, and the right-hand side is rapidly decreasing in $\|z\|$.

  \emph{Polynomial growth.}
  For any monomial $z^\beta = z_{_{+}}^{\beta_{_{+}}}z_{_{-}}^{\beta_{_{-}}}$,
  the factor $|z^\beta|$ grows at most polynomially in $\|z\|$, while
  $e^{-\pi c\|z\|^2}$ decays faster than any power, so the product
  $|z^\beta\,\partial^\alpha F(z)|$ is bounded.  Indeed, for every $N$ and $\alpha,\beta$,
  \[|z^\beta\,\partial^\alpha_z F(z)|
    \le C_{\alpha,\beta,N}\,(1+\|z\|)^{-N}\]
  by combining the Gaussian bound~\eqref{eq:gaussian_decay} with
  the Schwartz estimates~\eqref{eq:G_schwartz} for $G$.
  This establishes that $F$, and hence $\eta_{\mathcal J}*f$, is Schwartz.
\end{proof}

  \subsection{Theta transform}
  Let $\Lambda$ be a self-dual lattice in $\mathbb T$, meaning that the symplectic form is integral on $\Lambda$, and its dual lattice is itself under the symplectic form.  (The latter condition is equivalent to the statement that $\mathbb T/\Lambda$ has Liouville measure 1.)  Consider a subgroup $\sigma(\Lambda)$ of $\mathfrak H$ that splits the projection $\pi$ onto $\Lambda$.  We construct all such splittings.  Let $Q$ be an $F_2$-valued quadratic form on $\Lambda/2\Lambda$, such that $Q(x+y)+Q(x)+Q(y)+\omega(x,y)\equiv 0\pmod 2$.  Then let $\sigma_Q(\lambda) = (Q(\lambda),\lambda)$.  It is easily shown that $\sigma_Q(\Lambda)$ is an (abelian) subgroup of $\mathfrak H$, and $\sigma_Q$ is clearly a splitting.  We denote by $\Lambda_Q$ the subgroup $\sigma_Q(\Lambda)$ determined by the data $Q$.

  Let $\delta_{\Lambda_Q}$ be the counting measure supported on the discrete subgroup $\Lambda_Q$ of $\mathfrak H$.  Then $\delta_{\Lambda_Q}$ is a tempered distribution on $\mathfrak H$ which is $\Lambda_Q$-biinvariant.
  \begin{definition}
    Given a discrete abelian subgroup $\Lambda_Q$ lifing a lattice $\Lambda$ as above, the theta transform of $f\in\mathscr S(\mathfrak H)$ is the function
    \[f * \delta_{\Lambda_Q}.\]
  \end{definition}
  \begin{theorem}\label{thm:theta_transform}
    $f * \delta_{\Lambda_Q}\in \mathscr S(\mathfrak H/\Lambda_Q)$ for all $f\in\mathscr S(\mathfrak H)$.
  \end{theorem}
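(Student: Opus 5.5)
The plan has three parts: write the convolution against the counting measure as a lattice sum, prove right $\Lambda_Q$-invariance from the group structure of $\Lambda_Q$, and then control regularity. Since $\delta_{\Lambda_Q}$ is counting measure on $\Lambda_Q$ and $\tilde{\delta}_{\Lambda_Q}=\delta_{\Lambda_Q}$ (because $\Lambda_Q$ is a subgroup, closed under inversion), the definition of convolution gives
\[
(f*\delta_{\Lambda_Q})(x) = \int_{\mathfrak H} f(xy)\,\delta_{\Lambda_Q}(y^{-1})\,dy = \sum_{\lambda\in\Lambda}f\bigl(x\cdot\sigma_Q(\lambda)\bigr),
\]
up to the normalisation of Haar measure. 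Writing $x=(t,z)$ and using the group law \eqref{Hgrouplaw} together with the central equivariance $f(t,z)=tf(1,z)$, this becomes
\[
(f*\delta_{\Lambda_Q})(t,z)=t\sum_{\lambda\in\Lambda} e\bigl(Q(\lambda)+\omega(z,\lambda)\bigr)\,f(1,z+\lambda).
\]
The first step is absolute and locally uniform convergence of this sum together with all its $z$-derivatives. This follows because $f(1,\cdot)$ and its derivatives decay faster than any power, while the phase factor has modulus one and each derivative of it produces only a factor polynomial in $\lambda$. Hence the series defines a smooth function that commutes with the centre.

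The second step is right $\Lambda_Q$-invariance, which is formal. For $\mu\in\Lambda$ we have $R(\sigma_Q(\mu))(f*\delta_{\Lambda_Q})(x)=\sum_\lambda f\bigl(x\,\sigma_Q(\mu)\sigma_Q(\lambda)\bigr)$. Because $\sigma_Q$ is a homomorphism onto the abelian subgroup $\Lambda_Q$, $\sigma_Q(\mu)\sigma_Q(\lambda)=\sigma_Q(\mu+\lambda)$, and reindexing $\lambda\mapsto\lambda-\mu$ returns the original sum. Equivalently, one can cite the remark in \S\ref{sec:heisenberg} that $R(x)(f*T)=f*(R(x)T)$, combined with the right $\Lambda_Q$-invariance of $\delta_{\Lambda_Q}$. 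In either form this is exactly where the compatibility condition $Q(x+y)+Q(x)+Q(y)+\omega(x,y)\equiv0\pmod 2$ is used, since that condition is what makes $\sigma_Q$ a homomorphism.

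The third step, and the main obstacle, is to interpret ``Schwartz'' on $\mathfrak H/\Lambda_Q$ correctly. The function $F(z)=\sum_\lambda e(Q(\lambda)+\omega(z,\lambda))f(1,z+\lambda)$ is quasi-periodic, with $|F(z+\mu)|=|F(z)|$, so it cannot decay on $\mathbb T$. The right notion, parallel to the description of $\mathscr S(\mathbb K_-\setminus\mathfrak H)$ for real polarisations, is smoothness of a function whose modulus descends to the compact torus $\mathbb T/\Lambda$. On a compact quotient, Schwartz reduces to smooth with all derivatives bounded, and decay conditions are vacuous. I would therefore bound $\sup_z|\partial^\alpha F(z)|$ by restricting $z$ to a fundamental domain $D$. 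For $z\in D$ one has $|\partial^\alpha F(z)|\le\sum_\lambda C_\alpha(1+\|\lambda\|)^{|\alpha|}\sup_{w\in D}|\partial^{\le\alpha}f(1,w+\lambda)|$, and this is finite by the Schwartz decay of $f$. Quasi-periodicity then extends the bound to all of $\mathbb T$. Finally, I would note that the map $f\mapsto f*\delta_{\Lambda_Q}$ is continuous for these seminorms, since each bound is dominated by finitely many Schwartz seminorms of $f$.
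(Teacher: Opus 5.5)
Your Steps 1 and 2 are correct, and they supply exactly what the paper's two-line proof asserts as obvious: smoothness and right $\Lambda_Q$-invariance of $f*\delta_{\Lambda_Q}$. The paper then concludes from compactness of $\mathfrak H/\Lambda_Q$, on which every smooth function is Schwartz. (This quotient is a circle bundle over $\mathbb T/\Lambda$, which is why your $F$ is quasi-periodic rather than periodic.)

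The step that fails is in your Step 3. Quasi-periodicity transports $|F|$, but it does not transport the Euclidean derivatives $\partial_z^\alpha F$. Differentiating $F(1,z+\mu)=e(-Q(\mu))\,e(-\omega(z,\mu))\,F(1,z)$ along $v\in\mathbb T$ gives
\[
  \partial_vF(1,z+\mu)=e(-Q(\mu))\,e(-\omega(z,\mu))\bigl(\partial_vF(1,z)-\pi i\,\omega(v,\mu)\,F(1,z)\bigr).
\]
Take $\mu=k\mu_0$ with $\omega(v,\mu_0)\neq0$, at a point where $F(1,z)\neq0$. Then $|\partial_vF(1,z+k\mu_0)|\to\infty$ as $k\to\infty$. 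So your bound on the fundamental domain $D$ does not extend to $\mathbb T$. Moreover, the seminorms $\sup_{\mathbb T}|\partial^\alpha F|$ you propose are infinite for $|\alpha|\ge1$ unless $F\equiv0$.

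The fix is to measure regularity with vector fields that actually descend to $\mathfrak H/\Lambda_Q$. These are the right-invariant fields, $F\mapsto\frac{d}{ds}F((1,sv)x)$ at $s=0$, which act on $F(1,\cdot)$ as $\nabla_v=\partial_v+\pi i\,\omega(v,z)$. They satisfy $\nabla_vF(1,z+\mu)=e(-Q(\mu))\,e(-\omega(z,\mu))\,\nabla_vF(1,z)$, with no extra term. Hence every $|\nabla^{\alpha}F|$ is $\Lambda$-periodic. Since $\omega(v,z)$ is bounded on $\bar D$, your series estimate taken over $\bar D$ (rather than $\mathbb T$) controls these seminorms by finitely many Schwartz seminorms of $f$.

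With that change your argument is a fleshed-out version of the paper's. Alternatively, drop the estimate entirely and invoke compactness of the quotient directly, as the paper does.
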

  \begin{proof}
    $f*\delta$ is obviously right $\Lambda_Q$-invariant and smooth.  But $\mathfrak H/\Lambda_Q$ is a (compact) torus, so all smooth functions are Schwartz.
  \end{proof}

  Let $\mathcal J$ be a positive imaginary polarization.  For any $f\in\mathscr S(\mathbb J_{_{-}}\setminus\mathfrak H)$, the convolution $f*\delta_{\Lambda_Q}$ is left $\mathbb J_{_{-}}$-invariant, i.e., left $\mathcal J$-holomorphic.  It is also right $\Lambda_Q$-invariant.  Therefore it also has a quasiperiodicity with respect to the {\em left} action of $\Lambda_Q$:
  \begin{theorem}\label{thm:theta_quasiperiodicity}
    For $w\in \Lambda_Q$,
    \[L(w)(f*\delta_{\Lambda_Q})(z) = e(2\,zw)\,f*\delta_{\Lambda_Q}(z).\]
  \end{theorem}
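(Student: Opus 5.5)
The plan is to compare the left and right actions of $w$ on a right $\Lambda_Q$-invariant function, using only the group law \eqref{Hgrouplaw} and the formulas for $L$ and $R$ from Section~\ref{sec:heisenberg}. Write $F=f*\delta_{\Lambda_Q}$, so that $F(z)=\sum_{\lambda\in\Lambda_Q}f(z\lambda^{-1})$. Right $\Lambda_Q$-invariance is immediate from this expression, because $\delta_{\Lambda_Q}$ is $\Lambda_Q$-biinvariant and $R(x)(f*g)=f*(R(x)g)$. It gives
\[R(w)F=F\qquad\text{for } w\in\Lambda_Q.\]

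The key step is the commutator identity in $\mathfrak H$. Take $w=(s,\lambda)$ and $z=(t,x)$. The group law gives
\[wz=(st\,e(\lambda x),\,\lambda+x),\qquad zw=(ts\,e(x\lambda),\,x+\lambda).\]
Since $e(\lambda x)=e(-x\lambda)$ by skew-symmetry of $\omega$, this yields
\[wz=zw\cdot(e(2\lambda x),0),\]
so the two products differ by a central element. We then use the standing convention that functions commute with the center, $F(t,x)=tF(1,x)$. This gives
\[L(w)F(z)=F(wz)=e(2\lambda x)\,F(zw)=e(2\lambda x)\,R(w)F(z)=e(2\lambda x)\,F(z).\]
Writing $e(2\lambda x)$ as $e(2\,zw)$ in the paper's shorthand (the pairing $\omega(\pi(z),\pi(w))$ up to the sign convention in $xx'$) gives the claimed formula.

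The only real care needed is bookkeeping of signs. We must check that the convention $e(xx')=e(\omega(x,x'))$ produces $e(2\,zw)$ rather than $e(2\,wz)=e(-2\,zw)$; if the paper's ordering yields the opposite sign, the statement should be read with the matching order. A second point of care is convergence of the lattice sum: $f$ is Schwartz on $\mathbb T$, so $\sum_\lambda f(z\lambda^{-1})$ converges absolutely and locally uniformly. This justifies evaluating $F$ pointwise and moving $L(w)$ inside the sum. The same argument applies verbatim when $f\in\mathscr S(\mathbb J_-\setminus\mathfrak H)$, since such $f$ is Schwartz on $\mathbb T$ by the description given after the Cauchy--Riemann equation.
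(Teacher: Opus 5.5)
Your argument is correct and follows the paper's own reasoning. The paper simply notes that $f*\delta_{\Lambda_Q}$ is right $\Lambda_Q$-invariant and concludes left quasiperiodicity; your commutator identity $wz=zw\cdot(e(2\omega(\lambda,x)),0)$, together with the centrality convention, supplies exactly the detail it omits.

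You should resolve your sign hedge explicitly, though. With the paper's conventions ($xx'=\omega(x,x')$ and $L(w)F(z)=F(wz)$), the factor you obtain is $e(2\omega(\lambda,x))$, which in the paper's shorthand is $e(2\,wz)=e(-2\,zw)$. So the printed $e(2\,zw)$ is off by a sign; it is the factor one gets for $F(w^{-1}z)$. Accordingly, $e(2\lambda x)$ should not be described as the pairing $\omega(\pi(z),\pi(w))$.
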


  \begin{proposition}\label{prop:bargmann_schrodinger}
    Let $\mathbb Y=\mathbb H(s)$ and choose a complementary Lagrangian $\mathbb X$, so that
    \[
      \mathbb H(u)=\{H(u)\eta+\eta\mid \eta\in\mathbb Y\},
    \]
    where $H(u):\mathbb Y\to\mathbb X$ is symmetric.  Let $\hat\phi_0\in\mathscr S(\mathbb Y)$, and let
    \[
      \phi_u(v,x)=\int_{\mathbb Y} e_h\!\left(v+x^T\eta+\tfrac12\eta^TH(u)\eta\right)\hat\phi_0(\eta)\,\op{d}_h\eta
    \]
    be the corresponding Schr\"odinger evolution in the form of Theorem~\ref{AuTheorem}.  Then for every positive imaginary polarization $\mathcal J$, the Bargmann transform of $\phi_u$ is
    \[
      (\eta_{\mathcal J}*\phi_u)(1,z)
      =
      e(\mathcal J_{_{+}}zz)
      \int_{\mathbb Y}
      e\!\left(\eta_{_{+}}\eta-2z_{_{+}}\eta\right)
      e_h\!\left(\tfrac12\eta^TH(u)\eta\right)\hat\phi_0(\eta)\,\op{d}_h\eta.
    \]
    In particular, relative to the Bargmann kernel at time $s$, Schr\"odinger evolution acts by multiplication by the same quadratic phase
    \[
      e_h\!\left(\tfrac12\eta^T(H(u)-H(s))\eta\right)
    \]
    that appears in the real-polarized model.  Thus the same Hamiltonian curve $H(u)$ governs both the Schr\"odinger and the Bargmann realizations.
  \end{proposition}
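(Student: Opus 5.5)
The plan is to compute the convolution directly from the explicit Bargmann formula \eqref{eq:bargmann_conv} established in the proof of the Bargmann theorem:
\[
\eta_{\mathcal J}*f(1,z)=e(\mathcal J_{_{+}}zz)\int_{\mathbb T}e(y_{_{+}}y-2z_{_{+}}y)\,f(1,y)\,dy .
\]
The idea is to substitute $f=\phi_u$, viewed as a function on $\mathfrak H$ commuting with the centre, and then exchange the order of integration.

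\textbf{Step 1 (setting up $\phi_u$).} I first regard $\phi_u$ as an element of $\mathscr M_r(\mathfrak H)$. The variable $v$ plays the role of the central coordinate via $e_h(v)$, and the Ward representation exhibits $\phi_u$ as a superposition over $\eta\in\mathbb Y$ of the characters
\[
y\mapsto e_h\bigl(x^T\eta+\tfrac12\eta^TH(u)\eta\bigr).
\]
Here $y=x+\eta'\in\mathbb X\oplus\mathbb Y$. Each such character is, up to the constant phase $e_h(\tfrac12\eta^TH(u)\eta)$, left-invariant under the Lagrangian $\mathbb H(u)$. This uses the graph description $\mathbb H(u)=\{H(u)\eta+\eta\}$ and the group law \eqref{Hgrouplaw}.

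\textbf{Step 2 (Fubini).} Next I insert the Ward integral into \eqref{eq:bargmann_conv}. The Gaussian factor $e(y_{_{+}}y)$ decays because $\mathcal J$ is positive; this is the bound \eqref{eq:gaussian_decay}. Together with $\hat\phi_0\in\mathscr S(\mathbb Y)$, it justifies exchanging the $y$- and $\eta$-integrals, provided a convergence factor $e^{-\pi\epsilon\|y\|^2}$ is inserted as in \eqref{eq:lambda} and then $\epsilon\to0$ is taken. The inner $y$-integral is then the Bargmann transform of a single $\mathbb H(u)$-character.

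\textbf{Step 3 (plane-wave evaluation).} The inner integral is Gaussian, so it is evaluated by completing the square. The Bargmann transform of the plane wave $e_h(x^T\eta)$ should reproduce the kernel $e(\eta_{_{+}}\eta-2z_{_{+}}\eta)$, multiplied by $e(\mathcal J_{_{+}}zz)$, which is pulled outside. This is the key step, and I expect it to be the main obstacle. The difficulty is the bookkeeping needed to identify the point $\eta\in\mathbb Y$ with the corresponding point of $\mathbb T$ through $\omega$ and the factor $h$; the measure $\op{d}_h\eta$ is meant to absorb the resulting Jacobian. I would check the identification first in the case $H(u)=0$ and then conjugate.

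For the general case, the phase $e_h(\tfrac12\eta^TH(u)\eta)$ is independent of $y$. It therefore factors out of the inner integral and simply rides along, which gives the displayed formula.

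\textbf{Step 4 (the ``in particular'').} Finally I compare the formula at times $u$ and $s$. Since $\mathbb Y=\mathbb H(s)$, the normalisation is $H(s)=0$; in general the formula at time $s$ carries $e_h(\tfrac12\eta^TH(s)\eta)$. The two integrands therefore differ exactly by the multiplier
\[
e_h\bigl(\tfrac12\eta^T(H(u)-H(s))\eta\bigr).
\]
This is the same multiplier as in Theorem~\ref{AuTheorem}, with $\gamma$ identified with $h$, up to the rescaling $\xi=h\eta$ carried out after the Ward formula \eqref{wardsol}.
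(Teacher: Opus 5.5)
Your route is the paper's own: substitute the oscillatory representation of $\phi_u$ into \eqref{eq:bargmann_conv}, apply Fubini, and note that $u$ enters only through $e_h(\tfrac12\eta^TH(u)\eta)$. The paper's proof is exactly this one-line substitution and skips the same point you skip. The gap is the step you call bookkeeping in Step~3. It is not bookkeeping, and the output you expect from it is wrong.

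Formula \eqref{eq:bargmann_conv} integrates over all $y\in\mathbb T$, and $\phi_u$ extended to $\mathfrak H$ is a bounded function on all of $\mathbb T$, not something carried by $\mathbb Y$. So for each $\eta$ the inner integral is a full $2n$-dimensional Gaussian integral, and completing the square does not return $e(\eta_{+}\eta-2z_{+}\eta)$. The displayed right-hand side is exactly what \eqref{eq:bargmann_conv} gives if $f(1,y)\,dy$ is replaced by the measure $e_h(\tfrac12\eta^TH(u)\eta)\hat\phi_0(\eta)\,\op{d}_h\eta$ supported on $\mathbb Y\subset\mathbb T$. In other words, it conflates $y\in\mathbb T$ with $\eta\in\mathbb Y$; a plane wave is the Fourier transform of a point mass on $\mathbb Y$, not the point mass itself.

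Here is a concrete check with $n=1$. Take $\omega((x,y),(x',y'))=xy'-yx'$, $\mathbb X=\mathbb R\times 0$, $\mathbb Y=0\times\mathbb R$, and $\mathcal J=-iA$ with $A(x,y)=(-y,x)$. Use the $\op{ind}$-extended plane wave $F^{(\eta)}(1,(x,y))=e(xy)\,e_h(x\eta)$, where $e_h(t)=e(2ht)$. Writing $z=(a,b)$ and $\zeta=a+ib$, a direct Gaussian computation gives
\[
(\eta_{\mathcal J}*F^{(\eta)})(1,z)=\sqrt2\,e(\mathcal J_{+}zz)\exp\bigl(\tfrac{\pi}{2}\zeta^2+2\pi ih\zeta\eta-\pi h^2\eta^2\bigr).
\]
The claimed kernel is instead $e(\mathcal J_{+}zz)\exp\bigl(-\tfrac{\pi}{2}\eta^2-\pi i\zeta\eta\bigr)$. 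Compare the ratio of the $\eta^2$-coefficient to the square of the $\zeta\eta$-coefficient, which does not change under rescaling $\eta$. It is $1/(4\pi)$ for the actual kernel and $1/(2\pi)$ for the claimed one. So no choice of $h$, identification of $\mathbb Y$ with $\mathbb T$ via $\omega$, or normalisation of $\op{d}_h\eta$ can reconcile them.

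Separately, your Steps~1 and~3 are in tension. Suppose the plane waves are extended $\mathbb H(u)$-invariantly, as Step~1 asserts and as $\delta_{\mathbb H(u)}*\psi$ does in Section~\ref{sec:schrodinger}. Then $H(u)$ enters through $(\eta-y_{\mathbb Y})^TH(u)(\eta-y_{\mathbb Y})$, where $y_{\mathbb Y}$ is the $\mathbb Y$-component of $y$. That term depends on $y$ and does not factor out of the inner integral. The phase rides along only for a $u$-independent extension, such as $\op{ind}$ in the splitting $\mathbb X\oplus\mathbb Y$. With that choice, your Steps~1, 2 and~4 do prove
\[
(\eta_{\mathcal J}*\phi_u)(1,z)=\int_{\mathbb Y}K_{\mathcal J}(z,\eta)\,e_h\bigl(\tfrac12\eta^TH(u)\eta\bigr)\hat\phi_0(\eta)\,\op{d}_h\eta,\qquad K_{\mathcal J}(z,\eta)=(\eta_{\mathcal J}*F^{(\eta)})(1,z),
\]
with $K_{\mathcal J}$ independent of $u$. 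This is all the ``in particular'' clause needs. However, $K_{\mathcal J}$ must actually be computed, and it is not the kernel in the displayed formula.
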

  \begin{proof}
    Apply the convolution formula \eqref{eq:bargmann_conv} to the function $\phi_u$.  The only $u$-dependence of $\phi_u$ is the quadratic factor
    $e_h(\tfrac12\eta^TH(u)\eta)$ appearing in its oscillatory representation, so substituting that representation into \eqref{eq:bargmann_conv} yields exactly the displayed formula.  Replacing $u$ by $s$ and comparing the two expressions shows that the passage from time $s$ to time $u$ is effected by multiplication by
    $e_h(\tfrac12\eta^T(H(u)-H(s))\eta)$ inside the Bargmann integral.
  \end{proof}

  \begin{corollary}\label{cor:theta_schrodinger}
    In the arithmetic situation of Theorems~\ref{thm:theta_transform} and~\ref{thm:theta_quasiperiodicity}, the theta transform of the evolved Bargmann datum is obtained by summing the kernel of Proposition~\ref{prop:bargmann_schrodinger} over $\Lambda_Q$.  Consequently Schr\"odinger evolution preserves the theta-function automorphy law and acts on theta data by the same quadratic phase factor.
  \end{corollary}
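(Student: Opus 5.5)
The corollary follows by combining Proposition~\ref{prop:bargmann_schrodinger} with the theta construction, using associativity of convolution. The theta transform of the evolved Bargmann datum is
\[
\eta_{\mathcal J}*\phi_u*\delta_{\Lambda_Q}.
\]
The plan has three steps.

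\textbf{Step 1 (write out the sum).} The function $\eta_{\mathcal J}*\phi_u$ is Schwartz and left $\mathcal J$-holomorphic by the Bargmann theorem. Its convolution with the counting measure $\delta_{\Lambda_Q}$ is therefore the absolutely convergent lattice sum
\[
\sum_{\lambda\in\Lambda}(\eta_{\mathcal J}*\phi_u)\bigl(z\cdot\sigma_Q(\lambda)^{-1}\bigr).
\]
Substituting the integral formula of Proposition~\ref{prop:bargmann_schrodinger} for each term expresses the theta transform as the sum over $\Lambda_Q$ of the kernel
\[
e(\mathcal J_+zz)\,e(\eta_+\eta-2z_+\eta)
\]
at right-translated points, integrated against $e_h(\tfrac12\eta^TH(u)\eta)\hat\phi_0(\eta)$. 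The $\eta$-integral does not depend on the lattice point, so the quadratic phase factors out of the lattice summation. Interchanging the sum and the integral is justified by Fubini: the Gaussian bound \eqref{eq:gaussian_decay} gives decay in $z$, and the Schwartz decay of $\hat\phi_0$ controls the integral. This gives the first assertion.

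\textbf{Step 2 (automorphy).} For each fixed $u$, $\eta_{\mathcal J}*\phi_u$ lies in $\mathscr S(\mathbb J_-\setminus\mathfrak H)$. So Theorems~\ref{thm:theta_transform} and~\ref{thm:theta_quasiperiodicity} apply directly: the theta transform is right $\Lambda_Q$-invariant and satisfies the left quasiperiodicity
\[
L(w)F(z)=e(2zw)F(z).
\]
This automorphy law contains no $u$-dependence, so it holds for every $u$, and the evolution preserves it.

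\textbf{Step 3 (phase).} Subtract the $u=s$ expression, as in the proof of Proposition~\ref{prop:bargmann_schrodinger}. The passage from $s$ to $u$ then multiplies the integrand by $e_h(\tfrac12\eta^T(H(u)-H(s))\eta)$ inside every lattice term. Because right convolution by $\delta_{\Lambda_Q}$ is linear and commutes with this operation on the $\eta$-side, the same phase acts on the theta data.

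The main obstacle is interpretive rather than computational. The phase multiplies the integrand in $\eta$, not the output function, so "acts by the same phase factor" has to be read as acting on the datum $\hat\phi_0$ before the kernel is applied. The interchange of summation and integration in Step 1 also needs uniform bounds. I would first verify convergence of the lattice sum of the Gaussian kernel uniformly on compacta in $z$, using positivity of $\mathcal J$.
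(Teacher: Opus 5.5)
Your proposal is correct and follows essentially the paper's own argument: the theta transform is right convolution of the Bargmann datum of Proposition~\ref{prop:bargmann_schrodinger} with $\delta_{\Lambda_Q}$, the quadratic phase stays inside the $\eta$-integral (acting on the datum $\hat\phi_0$, as you correctly read it), and the automorphy law is Theorem~\ref{thm:theta_quasiperiodicity} applied at each fixed $u$. The one loose point is the convergence justification you add, which the paper omits: \eqref{eq:gaussian_decay} in $z$ plus Schwartz decay of $\hat\phi_0$ does not control the exponentially growing factor $e(-2z_{+}\eta)$, and what you need instead is the completed-square identity $|e(\mathcal J_{+}zz)\,e(\eta_{+}\eta-2z_{+}\eta)|=|e(\mathcal J_{+}(z-\eta)(z-\eta))|\le e^{-\pi c\|z-\eta\|^2}$, which together with Schwartz decay of $\hat\phi_0$ on $\mathbb Y$ yields absolute convergence of the lattice sum, justifies Fubini, and shows directly that the Bargmann datum is Schwartz on $\mathbb T$ (the Bargmann theorem is proved for $f(1,\cdot)\in\mathscr S(\mathbb T)$, which $\phi_u$ is not).
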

  \begin{proof}
    The theta transform is convolution with the lattice distribution $\delta_{\Lambda_Q}$.  Since Proposition~\ref{prop:bargmann_schrodinger} identifies Schr\"odinger evolution in the Bargmann model with multiplication by the quadratic phase inside the holomorphic kernel, summing over $\Lambda_Q$ preserves the quasiperiodicity of Theorem~\ref{thm:theta_quasiperiodicity} and yields the stated description of the evolved theta datum.
  \end{proof}

  \section{Schr\"odinger evolution in a plane wave}
  \label{sec:schrodinger}
Consider a Rosen plane wave on $\mathbb U\times\mathbb R\times\mathbb X$.  Fix $u=0$ in $\mathbb U$.  Let $\mathbb H(u)$ be the Hamiltonian curve.  A {\em field bundle} is a bundle $\mathfrak F$ over $\mathbb M$, equipped with an action of the Heisenberg symmetry group, and having a connection on the central null geodesic, with parallel transport map $\Gamma_{st}:\mathfrak F_s\to\mathfrak F_t$.  An example of a field bundle is the bundle $\mathcal O_{\mathbb U}(w)$ of functions of projective degree $w$, pulled back to $\mathbb M$ so that the Heisenberg action is the trivial action.  Then multiplication by $(|g(t)|/|g(s)|)^{w/2n}$ maps $\mathcal O_{\mathbb U}(w)_{u=s}\to\mathcal O_{\mathbb U}(w)_{u=t}$, and so the natural connection is $\Gamma_{st} = (|g(t)|/|g(s)|)^{w/2n}$.  In general, natural connections will be more complicated.

We now describe the general Schr\"odinger evolution in a plane wave, beginning from initial data at $u=s$.  Let $\mathbb Y=\mathbb H(s)$, and the Hamiltonian curve is given by $\mathbb H(u) =\{ H(u)y + y |y\in\mathbb Y\}$ where $H(u):\mathbb Y\to\mathbb X$ is a symmetric transformation with $\dot H$ positive definite.
% \begin{equation}\label{BigSchrodinger}
% \begin{array}{cl}
%   &\mathscr S(\mathbb X)\otimes\mathfrak F(s)\xrightarrow{\mathcal F_{\mathbb X\mathbb H(s)}}\mathscr S^1(\mathbb H(s))\otimes\mathfrak F(s) \xrightarrow{\text{ind}}
%   \mathscr H^1(\mathbb X^\circ)\otimes\mathfrak F(s)\\
%   &\qquad\xrightarrow{\Gamma_{su}}\mathscr H^1(\mathbb X^\circ)\otimes\mathfrak F(u)\xrightarrow{\mathcal W_{\mathbb H(u)\mathbb X}}\mathscr H(\mathbb H(u)^\circ)\otimes\mathfrak F(u)\xrightarrow{\text{res}}\mathscr S(\mathbb X)\otimes\mathfrak F(u)
% \end{array}
% \end{equation}
\begin{equation}\label{BigSchrodinger}
\begin{array}{cl}
  &\mathscr S(\mathbb X)\otimes\mathfrak F(s)\xrightarrow{\ind}\mathscr S(\mathbb H(s)\setminus \mathfrak H)\otimes\mathfrak F(s) \xrightarrow{\delta_{\mathbb X}*}
  \mathscr S(\mathbb X\setminus\mathfrak H)\otimes\mathfrak F(s)\\
  &\qquad\xrightarrow{\Gamma_{su}}\mathscr S(\mathbb X\setminus\mathfrak H)\otimes\mathfrak F(u)\xrightarrow{\delta_{\mathbb H(u)}*}\mathscr S(\mathbb H(u)\setminus\mathfrak H)\otimes\mathfrak F(u)\xrightarrow{\text{res}}\mathscr S(\mathbb X)\otimes\mathfrak F(u)
\end{array}
\end{equation}
More briefly, this collapses into a map
$$\Phi_{\mathbb X,\mathfrak F}(s,u) : \mathscr S(\mathbb X)\otimes\mathfrak F(s) \to \mathscr S(\mathbb X)\otimes\mathfrak F(u) $$
which we shall call the {\em Schr\"odinger evolution}.

We show the sense in which $\Phi(s,u)$ yields a solution of the Schr\"odinger equation, in the case of $\mathfrak F = \mathcal O_{\mathbb U}(-n/2)$.  Let $\mathbb Y=\mathbb H(0)$, and use these to define a splitting of $\mathbb T$ as $\mathbb X\oplus\mathbb Y$.  Let $H(u):\mathbb Y\to\mathbb X$ be symmetric, such that $\mathbb H(u)=\{H(u)y + y|y\in\mathbb Y\}$.  Note that $H(0)=0$ and $\dot H(u)$ is positive-definite.

Let $\phi_0\in\mathscr S^\alpha(\mathbb X)$ denote the initial data for the evolution (we will deal with the degrees later), and $\hat \phi_0\in\mathscr S^{1-\alpha}(\mathbb Y)$ the Fourier transform.  The induced function $\psi$ in $\mathscr H^{1-\alpha}(\mathbb X^\circ)$ is
$$\psi(v,x,y) = e_h(v+2^{-1}x^Ty)\hat\phi_0(y).$$
We have
\begin{align*}
  \psi((v,x,y)\cdot (0,x',0))
  &= \psi(v-2^{-1}y^Tx', x+x',y) \\
  &= e_h(v-2^{-1}y^Tx'+2^{-1}(x+x')^Ty)\hat\phi_0(y)\\
  &= e_h(v+2^{-1}x^Ty)\hat\phi_0(y)\\
  &=\psi(t,x,y),
\end{align*}
so that $\psi\in\mathscr H^{1-\alpha}(\mathbb X^\circ)$.

Next, right convolution by the Lagrangian distribution $\delta_{\mathbb H(u)}$ (i.e.\ averaging over the subgroup $\mathbb H(u)$) is
\begin{align*}
  (\delta_{\mathbb H(u)}*\psi)(v,x,y)
  &= \int_{\mathbb H(u)} \psi((v,x,y)\cdot \xi)\,\op{d}_h\!\xi\\
  &= \int_{\mathbb Y} \psi((v,x,y)\cdot (0,H(u)\eta,\eta)\,\op{d}_h\!\eta\\  
  &= \int_{\mathbb Y} \psi(v + 2^{-1}(x^T\eta - y^TH\eta),x+H\eta,y+\eta)\,\op{d}_h\!\eta\\
  &= \int_{\mathbb Y} e_h(v+ 2^{-1}(-y^TH\eta+x^T\eta  +(x+H\eta)^T(y+\eta)))  \hat\phi_0(y+\eta)\,\op{d}_h\!\eta\\
\end{align*}
% Next, the Weyl transform of $\psi$ is
% \begin{align*}
%   \mathcal W_{\mathbb H(u)\mathbb X}\psi(v,x,y)
%   &= \int_{\mathbb H(u)} \psi((v,x,y)\cdot \xi)\,\op{d}_h\!\xi\\
%   &= \int_{\mathbb Y} \psi((v,x,y)\cdot (0,H(u)\eta,\eta)\,\op{d}_h\!\eta\\  
%   &= \int_{\mathbb Y} \psi(v + 2^{-1}(x^T\eta - y^TH\eta),x+H\eta,y+\eta)\,\op{d}_h\!\eta\\
%   &= \int_{\mathbb Y} e_h(v+ 2^{-1}(-y^TH\eta+x^T\eta  +(x+H\eta)^T(y+\eta)))  \hat\phi_0(y+\eta)\,\op{d}_h\!\eta\\
% \end{align*}
To simplify this integral further, we would normally make the change of variables $\bar\eta=y+\eta$, but since this is immediately to be restricted to $\mathbb X$, we simply take $y=0$:
\[\phi(u,v,x) = (\delta_{\mathbb H(u)}*\psi)(v,x,0) = \int_{\mathbb Y} e_h(v + x^T\eta + 2^{-1}\eta^TH(u)\eta)  \hat\phi_0(\eta)\,\op{d}_h\!\eta.\]
%\[\phi(u,v,x) = \mathcal W_{\mathbb H(u)\mathbb X}\psi(v,x,0) = \int_{\mathbb Y} e_h(v + x^T\eta + 2^{-1}\eta^TH(u)\eta)  \hat\phi_0(\eta)\,\op{d}_h\!\eta.\]
So $\phi\in\mathscr H^\alpha(\mathbb H(u)^\circ)$ now clearly solves a Schr\"odinger equation of the form
\[ [4\pi i h\partial_u  -  \dot H(u)(\partial_x,\partial_x)]\phi(u,v,x) = 0.\]

%%%%
\begin{theorem}[Local intertwiner]\label{thm:schrodinger-evolution}
  Suppose that $\mathbb H(u)$ is a positive Lagrangian curve, and that
  $\mathbb X$ and $\mathbb X'$ are Lagrangian subspaces of $\mathbb T$,
  each complementary to $\mathbb H(u)$ for all $u\in\mathbb U$.
  Fix a base point $s\in\mathbb U$.  In the splitting
  $\mathbb T=\mathbb H(s)\oplus\mathbb X$, write
  $\mathbb X'=\{Yz_1+z_1\mid z_1\in\mathbb X\}$ ($Y:\mathbb X\to\mathbb H(s)$
  symmetric) and $\mathbb H(u)=\{Kv+v\mid v\in\mathbb H(s)\}$
  ($K:\mathbb H(s)\to\mathbb X$ symmetric, positive-definite).
  Assume that $\mathbb X$ and $\mathbb X'$ are \emph{sufficiently close
  relative to the curve}, meaning that the operator
  $I-YK:\mathbb H(s)\to\mathbb H(s)$ is positive-definite for all
  $u\in\mathbb U$.

  Fix a field bundle $\mathfrak F$ with parallel transport
  $\Gamma$, and let
  \[\Phi_{\mathbb X}(s,u) : \mathscr S(\mathbb X)\otimes\mathfrak F(s) \to \mathscr S(\mathbb X)\otimes\mathfrak F(u)\]
  be the Schr\"odinger evolution~\eqref{BigSchrodinger} in the
  $\mathbb X$-picture.  Then the diagram
  \[\begin{tikzcd}
      \mathscr S(\mathbb X)\otimes\mathfrak F(s) \arrow[r,"\Phi_{\mathbb X}{(s,u)}"]\arrow[dd,"{\rho_s}"]& \mathscr S(\mathbb X)\otimes\mathfrak F(u)\arrow[dd,swap,"{\rho_u}"]\\
      &\\
      \mathscr S(\mathbb X')\otimes\mathfrak F(s) \arrow[r,swap,"\Phi_{\mathbb X'}{(s,u)}"]& \mathscr S(\mathbb X')\otimes\mathfrak F(u)\\
    \end{tikzcd}\]
  commutes, where
  \begin{align}\label{eq:rho-forward}
  \rho_sf(z) &= e_h\left(\tfrac12\omega([\mathbb H(s)\mathbb X]_+z,z)\right)f\left(-\monib{\mathbb{XX}'}{\mathbb H(s)}z\right)
  \end{align}
  and $\rho_u$ is given by the same expression with $u$ replacing $s$:
  \begin{align}\label{eq:rho-forward-u}
    \rho_uf(z) &= e_h\left(\tfrac12\omega([\mathbb H(u)\mathbb X]_+z,z)\right)f\left(-\monib{\mathbb{XX}'}{\mathbb H(u)}z\right).
  \end{align}

  Furthermore, the inverse is
  \begin{align}\label{eq:rho-inverse}
  \rho_s^{-1}f(z) &= e_h\left(\tfrac12\omega([\mathbb H(s)\mathbb X']_+z,z)\right)f\left(-\monib{\mathbb{X'X}}{\mathbb H(s)}z\right)
  \end{align}
  (and likewise for $\rho_u^{-1}$).
\end{theorem}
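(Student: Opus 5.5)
The plan is to reduce the commutativity of the square to an identity between two oscillatory integral kernels on $\mathbb X$ and $\mathbb X'$. I will first make $\Phi_{\mathbb X}(s,u)$ explicit, then $\Phi_{\mathbb X'}(s,u)$, then $\rho_s,\rho_u$, and finally compare $\rho_u\circ\Phi_{\mathbb X}(s,u)$ with $\Phi_{\mathbb X'}(s,u)\circ\rho_s$.

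\textbf{Step 1: the two evolutions.} The parallel transport $\Gamma_{su}$ acts only on the $\mathfrak F$ factor and commutes with everything else in \eqref{BigSchrodinger}. It therefore cancels from both sides, and I may take $\mathfrak F$ trivial.

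For $\Phi_{\mathbb X}(s,u)$ I reuse the computation just before the theorem. In the splitting $\mathbb T=\mathbb H(s)\oplus\mathbb X$, the map $\ind$ followed by $\delta_{\mathbb X}*$ is the Fourier transform onto $\mathbb H(s)$, by the theorem of Section~\ref{sec:fourier}. Then $\delta_{\mathbb H(u)}*$ followed by $\op{res}$ produces
\[
\Phi_{\mathbb X}(s,u)f(x)=\int_{\mathbb H(s)}e_h\!\left(x^T\eta+\tfrac12\eta^TK\eta\right)\hat f(\eta)\,\op{d}_h\eta .
\]
That is, $\Phi_{\mathbb X}(s,u)$ is multiplication by $e_h(\tfrac12\eta^TK\eta)$ in the Fourier picture, in agreement with Theorem~\ref{AuTheorem}.

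For $\Phi_{\mathbb X'}(s,u)$ I would run the same chain with the splitting $\mathbb H(s)\oplus\mathbb X'$. Parametrise $\mathbb X'$ by $\mathbb X$ via $z_1\mapsto Yz_1+z_1$, and $\mathbb H(u)$ as the graph over $\mathbb H(s)$ in this new splitting. The graph operator is then $K'=K(I-YK)^{-1}$: writing $Kv+v=(Yw+w)+v'$ with $w\in\mathbb X$, one solves $w=Kv$ and $v'=(I-YK)v$. This is where the hypothesis enters. Positivity of $I-YK$ guarantees that $\mathbb H(u)$ stays transverse to $\mathbb X'$ with $K'$ symmetric and invertible throughout, and it also fixes the branch of the Gaussian phases consistently in $u$, so that no Maslov jump appears.

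\textbf{Step 2: identifying $\rho_s$ and $\rho_u$.} I would recognise $\rho_s$ as the operator $\delta_{\mathbb X'}*$ composed with $\ind$ and $\op{res}$ relative to $\mathbb H(s)$. Concretely, it is the change of real polarisation from $[\mathbb H(s)\mathbb X]$ to $[\mathbb H(s)\mathbb X']$, given by the identity relating $\op{ind}_{\mathcal X'}$ and $\op{ind}_{\mathcal X}$ through the symplectic reflection $\tfrac12(\mathcal X+\mathcal X')$. The reflection yields the argument $-\monib{\mathbb{XX}'}{\mathbb H(s)}z$, and the factor $e(\omega(\mathcal X_+z,z))$ yields the quadratic prefactor in \eqref{eq:rho-forward}. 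The same argument with $\mathbb H(u)$ in place of $\mathbb H(s)$ gives \eqref{eq:rho-forward-u}.

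With this identification, the square becomes the statement that $\delta_{\mathbb H(u)}*$, restricted appropriately, intertwines the two $\ind$/$\op{res}$ pictures. This follows from associativity of convolution and the left invariance properties of Section~\ref{sec:heisenberg}. As a cross-check, I would verify the square directly in coordinates: both composites become Gaussian-type integrals over $\mathbb H(s)$, and the exponents must agree after completing the square. The required identity is essentially $K'=K(I-YK)^{-1}$ together with $\det(I-YK)>0$, which fixes the amplitude and sign.

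\textbf{Step 3: the inverse.} Formula \eqref{eq:rho-inverse} follows by exchanging the roles of $\mathbb X$ and $\mathbb X'$. The composite $\rho_s^{-1}\rho_s$ is the composition of two reflections, $\tfrac12(\mathcal X+\mathcal X')$ and $\tfrac12(\mathcal X'+\mathcal X)$, whose product is the identity on the relevant quotient. The quadratic phases cancel because $\omega([\mathbb H(s)\mathbb X]_+z,z)$ and $\omega([\mathbb H(s)\mathbb X']_+z',z')$ agree under the reflection, by the computation following the Fourier inversion lemma.

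\textbf{Main obstacle.} I expect the hardest part to be the bookkeeping of the phase and amplitude normalisations, namely the Haar measures $\op{d}_h$ on $\mathbb H(s)$ versus $\mathbb H(u)$ and the Fresnel square roots. Theorem~\ref{thm:maslov} shows that $\delta$-compositions carry factors $e(\pm\tau/4)$. The key point to check is that the Maslov index $\tau(\mathbb H(u),\mathbb X,\mathbb X')$ is constant in $u$ under the positivity of $I-YK$, so that these factors are the same at times $s$ and $u$ and cancel around the square. I would prove this constancy via Lemma~\ref{lem:maslovquadform}, by showing that the associated quadratic form is a congruence of $I-YK$ and hence has constant index.
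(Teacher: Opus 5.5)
There is a genuine gap. Step~1 is fine, including $K'=K(I-YK)^{-1}$. Step~2, however, misidentifies $\rho_s$, and that hides where the proof actually lives.

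\textbf{What $\rho_s$ is.} Formula \eqref{eq:rho-forward} is a pointwise operator. For $z\in\mathbb X'$ it evaluates $f$ at the projection of $z$ onto $\mathbb X$ along $\mathbb H(s)$ and multiplies by a quadratic phase. In the parametrisation $z_1\mapsto Yz_1+z_1$ of $\mathbb X'$, it is just multiplication by a chirp $e_h(\pm\tfrac12 z_1^TYz_1)$. In Heisenberg terms, $\rho_t$ sends the restriction to $\mathbb X$ of an $\mathbb H(t)$-invariant function to its restriction to $\mathbb X'$. This is exactly what the paper's right-square argument establishes, using right $\mathbb H(u)$-invariance and $z+\monib{\mathbb{XX}'}{\mathbb H(u)}z\in\mathbb H(u)$.

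No convolution by $\delta_{\mathbb X'}$ enters: any composite containing $\delta_{\mathbb X'}*$ is an oscillatory integral transform, not an operator of the form \eqref{eq:rho-forward}. The reflection identity you invoke has common $+$-eigenspace $\mathbb H(s)$. It compares $\op{ind}_{[\mathbb H(s)\mathbb X]}g$ with $\op{ind}_{[\mathbb H(s)\mathbb X']}g$ for $g$ on $\mathbb H(s)$, i.e.\ objects in the middle of the diagram. Restricted to $\mathbb X'$, it is vacuous. Matching the prefactor of \eqref{eq:rho-forward} with that of $\op{ind}_{[\mathbb H(s)\mathbb X]}$ overlooks that the latter evaluates its argument at a point of $\mathbb H(s)$, whereas $\rho_s$ evaluates $f$ at a point of $\mathbb X$.

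\textbf{What the left square requires.} With the correct identification the right square is formal, and everything reduces to one identity: $\delta_{\mathbb H(u)}*\delta_{\mathbb X}*\psi=\delta_{\mathbb H(u)}*\delta_{\mathbb X'}*\psi$ for $\mathbb H(s)$-invariant $\psi$, with the self-dual normalisations. This does not follow from associativity and invariance. The two sides are different intertwiners, a priori equal only up to a constant (Stone--von Neumann), and that constant is a Fresnel factor.

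If the identity were formal, positivity would play no role. Yet if $\mathbb X'$ met some intermediate $\mathbb H(t)$, then $I-YK(u)$ would be invertible but no longer positive. The two sides would then in general differ by a nontrivial Maslov phase.

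\textbf{How the paper proves it.} The paper uses precisely the computation you relegate to a cross-check. In the splitting $\mathbb H(s)\oplus\mathbb X$:
\begin{itemize}
\item the Gaussian parts have matrices $K$ and $Q=K-KYK=K(I-YK)$;
\item after completing the square, the non-Gaussian residual (containing $\tfrac12 w^TK^{-1}w$) is the same on both sides;
\item the moduli differ by $\det(I-YK)^{1/2}$, which is absorbed by the Radon--Nikodym factor between the self-dual measures on $\mathbb H(u)$ induced by $\mathbb X$ and by $\mathbb X'$;
\item the phases agree because $Q$ is positive definite.
\end{itemize}

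\textbf{Problems with your phase argument.} Your claim that $\det(I-YK)>0$ fixes the sign is insufficient. The phase depends on the signature of $Q$, and two eigenvalues changing sign alter it by $e(\pm1)=-1$.

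The Maslov-constancy paragraph does not close the gap either:
\begin{itemize}
\item Lemma~\ref{lem:maslovquadform} needs $\mathbb X\cap\mathbb X'=0$, i.e.\ $Y$ invertible, which is not assumed.
\item The form it produces is, up to sign, $z\mapsto z^T(Y-YKY)z$, not something built from the non-symmetric $I-YK$.
\item You never show how $\tau(\mathbb H(u),\mathbb X,\mathbb X')$ enters the square. Theorem~\ref{thm:maslov} concerns $\delta_{\mathbb B}*\delta_{\mathbb C}*\delta_{\mathbb A}$ on $\mathbb B$-invariant functions, not the comparison of $\delta_{\mathbb H(u)}*\delta_{\mathbb X}*$ with $\delta_{\mathbb H(u)}*\delta_{\mathbb X'}*$ on $\mathbb H(s)$-invariant ones.
\end{itemize}

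A conceptual repair would use Schur's lemma to reduce to a scalar $c(u)$ with $c(s)=1$, and then show that $c$ cannot jump. That last step is again the Fresnel computation under positivity.
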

\begin{proof}
  We expand the definition of $\Phi$ from \eqref{BigSchrodinger}
  and show that each square of the resulting diagram
    \[\begin{tikzcd}
        \mathscr S(\mathbb X)_s\arrow[r,"\mathcal F_{\mathbb X\mathbb H(s)}"]\arrow[dd,"{\rho_s}"]
        & \mathscr S^1(\mathbb H(s))_s\arrow[r,"\text{ind}"]
        & \mathscr H^1(\mathbb X^\circ)_s\arrow[r,"\delta_{\mathbb H(u)}*\Gamma_{su}"]
        & \mathscr H(\mathbb H(u)^\circ)_u\arrow[r,"\text{res}"]\arrow[dd,equal]
        & \mathscr S(\mathbb X)_u\arrow[dd,swap,"{\rho_u}"]\\
        &&&&\\
        \mathscr S(\mathbb X')_s\arrow[r,"\mathcal F_{\mathbb X'\mathbb H(s)}"]
        & \mathscr S^1(\mathbb H(s))_s \arrow[r,"\text{ind}"]
        & \mathscr H^1(\mathbb X'^\circ)_s\arrow[r,"\delta_{\mathbb H(u)}*\Gamma_{su}"]
        & \mathscr H(\mathbb H(u)^\circ)_u\arrow[r,"\text{res}"]
        & \mathscr S(\mathbb X')_u.
      \end{tikzcd}\]
  commutes.

  \medskip\noindent\textbf{Left square.}\enspace
    Going across the top:
    \begin{align*}
      \mathcal Ff(\xi)
      &= \int_{\mathbb X}e_h(\omega(z,\xi))\,f\left(z\right)\,\op{d}_h\!z,\\
      \op{ind}(\mathcal Ff)(t,\xi)
      &= \int_{\mathbb X}e_h\left(t-\tfrac12\omega(P_+\xi,\xi)+\omega(z,P_+\xi)\right)\,f\left(z\right)\,\op{d}_h\!z,
    \end{align*}
    where we abbreviate $P_+=[\mathbb H(s)\mathbb X]_+$ and,
    below, $P'_+=[\mathbb H(s)\mathbb X']_+$.
    Right convolution by $\delta_{\mathbb H(u)}$ gives
    \begin{align}
      (\delta_{\mathbb H(u)}&*\op{ind}(\mathcal Ff))(t,\xi) \notag\\
      &= \int_{\mathbb H(u)}\int_{\mathbb X}e_h\!\left(t+\tfrac12\omega(\xi,\mu)-\tfrac12\omega(P_+(\xi\!+\!\mu),\xi\!+\!\mu)+\omega(z,P_+(\xi\!+\!\mu))\right)f(z)\,\op{d}_h\!z\,\op{d}_h\!\mu. \label{eq:top-integral}
    \end{align}

    Going across the bottom, after the change of variables
    $z\mapsto -\monib{\mathbb{XX}'}{\mathbb H(s)}z$ that replaces the
    $\mathbb X'$-integral by an $\mathbb X$-integral
    (cf.~\eqref{eq:rho-forward}):
    \begin{align}
      (\delta_{\mathbb H(u)}&*\op{ind}(\mathcal F(\rho_sf)))(t,\xi) \notag\\
      &= \int_{\mathbb H(u)}\int_{\mathbb X}e_h\!\left(t+\tfrac12\omega(\xi,\mu)-\tfrac12\omega(P'_+(\xi\!+\!\mu),\xi\!+\!\mu) - \tfrac12\omega(P'_+z,z)+\omega(z,P'_+(\xi\!+\!\mu))\right) \notag\\
      &\qquad\qquad\qquad f(z)\,\op{d}_h\!z\,\op{d}_h\!\mu. \label{eq:bot-integral}
    \end{align}
    It suffices to show that for each fixed $z\in\mathbb X$, the
    inner integrals over $\mu\in\mathbb H(u)$ in
    \eqref{eq:top-integral} and \eqref{eq:bot-integral} are equal.

    Select coordinates relative to the splitting
    $\mathbb T=\mathbb H(s)\oplus\mathbb X$:
    \[
      P_+ = \begin{bmatrix}I&0\\0&0\end{bmatrix},\quad
      P'_+ = \begin{bmatrix}I&-Y\\0&0\end{bmatrix},\quad
      \mathbb H(u) = \{ [\mu_0, K\mu_0] \mid \mu_0\in \mathbb H(s)\}.
    \]
    Write $\xi=[\xi_0,\xi_1]$, $\mu=[\mu_0,K\mu_0]$, $z=[0,z_1]$,
    and set $w=\xi_1+z_1$.

    \smallskip\noindent\emph{Top exponent.}\enspace
    Expanding the exponent of~\eqref{eq:top-integral}
    (excluding the overall $t$):
    \begin{align}
      E_{\mathrm{top}}
      &= \tfrac12\omega(\xi,\mu)
         -\tfrac12\omega(P_+(\xi\!+\!\mu),\xi\!+\!\mu)
         + \omega(z,P_+(\xi\!+\!\mu)) \notag\\
      &= -\tfrac12\mu_0 K\mu_0 - w\,\mu_0
         -\tfrac12\xi_1\xi_0 - z_1\xi_0. \label{eq:Etop}
    \end{align}
    (Here $\mu_0 K\mu_0$ means $\mu_0^TK\mu_0$, etc.)
    Completing the square in $\mu_0$ (using $K>0$):
    \begin{equation}\label{eq:top-cs}
      E_{\mathrm{top}}
      = -\tfrac12\bigl(\mu_0+K^{-1}w\bigr)K
         \bigl(\mu_0+K^{-1}w\bigr)
        + \tfrac12 wK^{-1}w
        - \tfrac12\xi_1\xi_0 - z_1\xi_0.
    \end{equation}

    \smallskip\noindent\emph{Bottom exponent.}\enspace
    The same expansion applied to~\eqref{eq:bot-integral},
    now using $P'_+=\bigl[\begin{smallmatrix}I&-Y\\0&0\end{smallmatrix}\bigr]$,
    gives (after a direct computation):
    \begin{equation}\label{eq:Ebot}
      E_{\mathrm{bot}}
      = -\tfrac12\mu_0\,Q\,\mu_0 - w(I\!-\!YK)\mu_0
        + \tfrac12 wYw
        - \tfrac12\xi_1\xi_0 - z_1\xi_0,
    \end{equation}
    where $Q = K(I\!-\!YK) = K - KYK$.
    (The off-diagonal block $-Y$ in $P'_+$ couples $\mu_0$ to
    $\xi_1$ and $z_1$ differently from the top case, modifying
    both the quadratic and linear terms in $\mu_0$.)
    Since $K$ and $I\!-\!YK$ are both positive-definite (the latter
    by hypothesis), $Q$ is positive-definite.
    Completing the square:
    \begin{equation}\label{eq:bot-cs}
      E_{\mathrm{bot}}
      = -\tfrac12\bigl(\mu_0+Q^{-1}(I\!-\!KY)w\bigr)Q
         \bigl(\mu_0+Q^{-1}(I\!-\!KY)w\bigr)
        + \tfrac12 w(I\!-\!KY)Q^{-1}(I\!-\!KY)w
        + \tfrac12 wYw
        - \tfrac12\xi_1\xi_0 - z_1\xi_0,
    \end{equation}
    where we have used $(I\!-\!YK)^T=(I\!-\!KY)$.
    The non-Gaussian remainder simplifies as follows.
    Since $Q=K(I\!-\!YK)$, we have $Q^{-1}=(I\!-\!YK)^{-1}K^{-1}$,
    and therefore
    \begin{align*}
      (I\!-\!KY)\,Q^{-1}\,(I\!-\!KY)
      &= (I\!-\!KY)(I\!-\!YK)^{-1}K^{-1}(I\!-\!KY)\\
      &= K^{-1}(I\!-\!KY) = K^{-1} - Y,
    \end{align*}
    using $(I\!-\!KY)(I\!-\!YK)^{-1}=K^{-1}Q\,(I\!-\!YK)^{-1}=K^{-1}\cdot K=I$
    in the sandwiched product.
    Hence
    \begin{equation}\label{eq:bot-cs-final}
      E_{\mathrm{bot}}
      = -\tfrac12\bigl(\mu_0+Q^{-1}(I\!-\!KY)w\bigr)Q
         \bigl(\mu_0+Q^{-1}(I\!-\!KY)w\bigr)
        + \tfrac12 wK^{-1}w
        - \tfrac12\xi_1\xi_0 - z_1\xi_0.
    \end{equation}

    \smallskip\noindent\emph{Comparison.}\enspace
    Comparing \eqref{eq:top-cs} and \eqref{eq:bot-cs-final},
    the non-Gaussian (residual) part of both exponents is the
    same:
    \[\tfrac12 wK^{-1}w - \tfrac12\xi_1\xi_0 - z_1\xi_0.\]
    The Gaussian parts are absorbed by translations of the Haar
    measure on $\mathbb H(u)$
    ($\mu_0\mapsto\mu_0-K^{-1}w$ in the top,
    $\mu_0\mapsto\mu_0-Q^{-1}(I\!-\!KY)w$ in the bottom),
    which are legitimate since the Haar measure is translation
    invariant.

    It remains to verify that the Gaussian prefactors agree.
    Evaluating the top Gaussian integral gives a factor proportional
    to $(\det K)^{-1/2}$, whereas the bottom gives one proportional to
    $(\det Q)^{-1/2}=(\det K)^{-1/2}\,(\det(I\!-\!YK))^{-1/2}$.
    However, the integrals~\eqref{eq:top-integral}
    and~\eqref{eq:bot-integral} use the self-dual Haar measure
    $\op{d}_h\!\mu$ on $\mathbb H(u)$.  In the top row, this
    measure is induced by the transversal $\mathbb X$, while in the
    bottom row it is induced by $\mathbb X'$: the parameterization
    $\mu_0\mapsto[\mu_0,K\mu_0]$ uses $\mathbb H(s)$ as the graph
    parameter, and the self-dual measures relative to $\mathbb X$
    and $\mathbb X'$ differ by the Radon--Nikodym factor
    \[
      \sqrt{\frac{d\mu_{\mathbb H(s)}\,d\mu_{\mathbb X'}}{d\mathcal L}
      \;\Big/\;
      \frac{d\mu_{\mathbb H(s)}\,d\mu_{\mathbb X}}{d\mathcal L}}
      \;=\; (\det(I\!-\!YK))^{1/2},
    \]
    which exactly compensates the determinantal discrepancy.
    (Alternatively, both rows compute the same abstract convolution
    $\delta_{\mathbb H(u)}*\psi$ on $\mathfrak H$; the
    left-$\mathbb X$-invariant and left-$\mathbb X'$-invariant
    descriptions of a single Schwartz function on $\mathfrak H$
    agree by construction, so the prefactors \emph{must} cancel,
    and the preceding Radon--Nikodym calculation provides the explicit
    mechanism.)

    Therefore, the left square commutes.

  \medskip\noindent\textbf{Right square.}\enspace
  Let $\phi\in\mathscr H(\mathbb H(u)^\circ)$.  For any $z\in\mathbb T$,
  \[z + \monib{\mathbb{XX'}}{\mathbb H(u)}z\in\mathbb H(u).\]
  Since $\phi$ is right-$\mathbb H(u)$-invariant,
  \begin{align*}
    \phi(t,z)
    &= \phi\!\left((t,z)\cdot \bigl(0,-z - \monib{\mathbb{XX'}}{\mathbb H(u)}z\bigr)\right)\\
    &= \phi\!\left(t-\tfrac12\omega\!\left(z,\monib{\mathbb{XX'}}{\mathbb H(u)}z\right),\;-\monib{\mathbb{XX'}}{\mathbb H(u)}z\right).
  \end{align*}
  Using the fact that $\op{ind}_{\mathbb X\mathbb H(u)}$ and
  $\op{res}_{\mathbb X}$ are inverse to one another, the above
  applied to $z\in\mathbb X'$ gives
  \begin{align*}
    \phi(0,z)
    &= \op{ind}(\phi|_{\mathbb X})\!\left(-\tfrac12\omega\!\left(z,\monib{\mathbb{XX'}}{\mathbb H(u)}z\right),\;-\monib{\mathbb{XX'}}{\mathbb H(u)}z\right)\\
    &= e_h\!\left(
      -\tfrac12\omega\!\left(z,\monib{\mathbb{XX'}}{\mathbb H(u)}z\right)
      - \tfrac12\omega\!\left([\mathbb X\mathbb H(u)]_+\monib{\mathbb{XX'}}{\mathbb H(u)}z,\;\monib{\mathbb{XX'}}{\mathbb H(u)}z\right) \right)
      \phi|_{\mathbb X}\!\left(-\monib{\mathbb{XX'}}{\mathbb H(u)}z\right).
  \end{align*}
  The argument of $e_h$ reduces (by the identity
  $-\omega(z,Mz)-\omega([\mathbb X\mathbb H(u)]_+Mz,Mz)
  =\omega([\mathbb H(u)\mathbb X]_+z,z)$,
  where $M=\monib{\mathbb{XX'}}{\mathbb H(u)}$, which follows
  from the definition of the monodromy and the projection) to
  \[\tfrac12\omega([\mathbb H(u)\mathbb X]_+z,z),\]
  and so
  \[  \phi(0,z) = e_h\!\left(\tfrac12\omega([\mathbb H(u)\mathbb X]_+z,z)\right)\phi|_{\mathbb X}\!\left(-\monib{\mathbb{XX'}}{\mathbb H(u)}z\right)\]
  as required.
\end{proof}

  %%%%

\begin{remark}[Why the global statement must be formulated by charts]
The local intertwiner theorem proved in
Theorem~\ref{thm:schrodinger-evolution} gives an explicit formula for the
change of real polarization
\[
\rho_t^{\mathbb X,\mathbb X'} :
\mathscr S(\mathbb X)\otimes\mathfrak F(t)\to
\mathscr S(\mathbb X')\otimes\mathfrak F(t)
\]
provided $\mathbb X$ and $\mathbb X'$ are sufficiently close and both are
complementary to $\mathbb H(t)$.  What fails globally is not the existence of
Schr\"odinger evolution, but the attempt to compress all polarization changes
into a single pointwise formula valid for arbitrary distant pairs
$(\mathbb X,\mathbb X')$: direct composition of the pointwise operators lands
at the wrong point of the source polarization and therefore is not, in
general, a scalar multiple of the direct map.

This means that caustics should be treated as \emph{chart singularities} of a
fixed polarization, not as singularities of the evolution itself.  The correct
global statement is therefore an atlas theorem: one propagates in any
polarization which remains transverse to $\mathbb H(t)$ on a subinterval, and
one glues neighbouring charts by the local intertwiner theorem on overlaps.

This is the analytic counterpart of the Rosen-universe picture of \cite{HS1}:
there, Rosen coordinate singularities occur when the Lagrangian curve meets a fixed
Lagrangian subspace, whereas here the same event appears as the breakdown of a
single real-polarization Schr\"odinger chart, necessitating passage to a new chart.
\end{remark}

\begin{definition}[Admissible polarization atlas]
Let $[s,u]\subset \mathbb U$.  An \emph{admissible polarization atlas} for the
positive Lagrangian curve $\mathbb H(\cdot)$ on $[s,u]$ consists of
compact subintervals
\[
I_j=[a_j,b_j],\qquad j=1,\dots,N,
\]
and Lagrangian polarizations $\mathbb X_j\subset\mathbb T$ such that:
\begin{enumerate}
\item $[s,u]=\bigcup_{j=1}^N I_j$ and $I_j\cap I_{j+1}\neq\varnothing$ for
      $j=1,\dots,N-1$;
\item $\mathbb X_j$ is complementary to $\mathbb H(t)$ for every $t\in I_j$;
\item after shrinking overlaps if necessary, for every
      $t\in I_j\cap I_{j+1}$ the pair $(\mathbb X_j,\mathbb X_{j+1})$ is
      sufficiently close in the sense of
      Theorem~\ref{thm:schrodinger-evolution}.
\end{enumerate}
\end{definition}

\begin{remark}[Existence of admissible atlases]
Admissible atlases always exist.  For each $t_0\in [s,u]$ choose a
Lagrangian polarization $\mathbb X$ complementary to $\mathbb H(t_0)$.
Complementarity is an open condition, so $\mathbb X$ remains complementary to
$\mathbb H(t)$ on a neighbourhood of $t_0$.  By compactness of $[s,u]$ one
obtains a finite cover by such neighbourhoods.  Since ``sufficiently close''
is an open condition near the diagonal in $LG(\mathbb T)\times LG(\mathbb T)$,
the cover may be refined so that adjacent charts are sufficiently close on each
overlap.
\end{remark}

\begin{definition}[Global propagator attached to an atlas]
Let
\[
\mathfrak A=\{(I_j,\mathbb X_j)\}_{j=1}^N
\]
be an admissible polarization atlas on $[s,u]$, and choose points
\[
t_j\in I_j\cap I_{j+1},\qquad j=1,\dots,N-1.
\]
Write
\[
s_0=s,\qquad s_j=t_j\ (1\le j\le N-1),\qquad s_N=u.
\]
The \emph{propagator associated to $\mathfrak A$} is the operator
\begin{align*}
\Phi_{\mathfrak A}(s,u)
:={}&
\Phi_{\mathbb X_N}(s_{N-1},s_N)\,
\rho_{s_{N-1}}^{\mathbb X_{N-1},\mathbb X_N}\,
\Phi_{\mathbb X_{N-1}}(s_{N-2},s_{N-1})\cdots \\
&\cdots
\rho_{s_1}^{\mathbb X_1,\mathbb X_2}\,
\Phi_{\mathbb X_1}(s_0,s_1),
\end{align*}
where $\Phi_{\mathbb X_j}$ is the local Schr\"odinger evolution in the
$\mathbb X_j$-picture defined in \eqref{BigSchrodinger}, and
$\rho_t^{\mathbb X_j,\mathbb X_{j+1}}$ is the local intertwiner of
Theorem~\ref{thm:schrodinger-evolution} on the overlap
$I_j\cap I_{j+1}$.
\end{definition}

\begin{lemma}[Insertion of a local chart]
\label{lem:insert-local-chart}
Let $[a,d]\subset \mathbb U$, and suppose that $\mathbb X$ is complementary to
$\mathbb H(t)$ for every $t\in [a,d]$.  Let $[b,c]\subset [a,d]$, and suppose
that $\mathbb X'$ is complementary to $\mathbb H(t)$ for every $t\in [b,c]$,
with $(\mathbb X,\mathbb X')$ sufficiently close relative to the curve on
$[b,c]$ in the sense of Theorem~\ref{thm:schrodinger-evolution}.  Then
\begin{equation}\label{eq:chart-insertion}
\Phi_{\mathbb X}(c,d)\,
\rho_c^{\mathbb X',\mathbb X}\,
\Phi_{\mathbb X'}(b,c)\,
\rho_b^{\mathbb X,\mathbb X'}\,
\Phi_{\mathbb X}(a,b)
=
\Phi_{\mathbb X}(a,d).
\end{equation}
Hence replacing the single chart $( [a,d],\mathbb X)$ by the three charts
$( [a,b],\mathbb X)$, $( [b,c],\mathbb X')$, and $( [c,d],\mathbb X)$ does not
change the propagator.
\end{lemma}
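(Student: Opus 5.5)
The plan is to reduce \eqref{eq:chart-insertion} to two ingredients: the semigroup (cocycle) property of the single-chart evolution $\Phi_{\mathbb X}$, and the commuting square of Theorem~\ref{thm:schrodinger-evolution} together with the inverse formula \eqref{eq:rho-inverse}.

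\textbf{Step 1: the semigroup law in a fixed chart.} Show that $\Phi_{\mathbb X}(b,d)=\Phi_{\mathbb X}(c,d)\,\Phi_{\mathbb X}(b,c)$ whenever $\mathbb X$ is complementary to $\mathbb H(t)$ on $[b,d]$. From the explicit computation following \eqref{BigSchrodinger}, in the $\mathbb X$-picture the evolution is conjugate by the partial Fourier transform to multiplication by the quadratic phase $e_h(\tfrac12\eta^T(H(c)-H(b))\eta)$, exactly as in Theorem~\ref{AuTheorem}. Phases of this form multiply additively in $H$. The parallel transports satisfy $\Gamma_{cd}\Gamma_{bc}=\Gamma_{bd}$, and they commute with the Heisenberg-equivariant convolutions. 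When the base point is changed from $\mathbb H(b)$ to $\mathbb H(c)$, the identification $\mathbb Y=\mathbb H(c)$ is intertwined with the earlier one by the $\op{ind}/\op{res}$ relation used in the right square of Theorem~\ref{thm:schrodinger-evolution}. That relation expresses $\op{res}\circ\delta_{\mathbb H(c)}*$ followed by $\op{ind}$ as the identity on $\mathscr S(\mathbb H(c)\setminus\mathfrak H)$.

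\textbf{Step 2: transport through the $\mathbb X'$ chart.} Applying Theorem~\ref{thm:schrodinger-evolution} on $[b,c]$ with base point $b$ gives
\[
\Phi_{\mathbb X'}(b,c)\,\rho_b^{\mathbb X,\mathbb X'}=\rho_c^{\mathbb X,\mathbb X'}\,\Phi_{\mathbb X}(b,c).
\]
The hypotheses hold because both $\mathbb X$ and $\mathbb X'$ are complementary to $\mathbb H(t)$ on $[b,c]$ and the closeness condition is assumed there. Next, $\rho_c^{\mathbb X',\mathbb X}=(\rho_c^{\mathbb X,\mathbb X'})^{-1}$. This is the content of \eqref{eq:rho-inverse}: exchanging the roles of $\mathbb X$ and $\mathbb X'$ in \eqref{eq:rho-forward} gives exactly the inverse formula, with $s$ replaced by $c$. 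The left-hand side of \eqref{eq:chart-insertion} therefore collapses to
\[
\Phi_{\mathbb X}(c,d)\,\Phi_{\mathbb X}(b,c)\,\Phi_{\mathbb X}(a,b),
\]
and two applications of Step 1 turn this into $\Phi_{\mathbb X}(a,d)$. The final statement about replacing one chart by three charts is then just the definition of $\Phi_{\mathfrak A}$ applied to the two atlases.

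\textbf{Main obstacle.} I expect the difficulty to lie in checking the inverse identity $\rho_c^{\mathbb X',\mathbb X}\rho_c^{\mathbb X,\mathbb X'}=\mathrm{id}$ honestly, including the sign of the quadratic phases. Composing the two pointwise formulas gives the argument $\monib{\mathbb{X'X}}{\mathbb H(c)}\monib{\mathbb{XX'}}{\mathbb H(c)}z=z$, since both are projections along $\mathbb H(c)$. The phases must then cancel through the identity
\[
\omega([\mathbb H\mathbb X]_+z,z)+\omega([\mathbb H\mathbb X']_+Mz,Mz)=0,
\qquad M=-\monib{\mathbb{XX'}}{\mathbb H},
\]
which I will verify in the block coordinates of the proof of Theorem~\ref{thm:schrodinger-evolution}, writing $P_+$, $P'_+$ and $\mathbb H(c)$ as the graph of $K$.

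A second subtlety is the Haar normalizations $\op{d}_h$ on $\mathbb X$ versus $\mathbb X'$. The determinant factor $(\det(I-YK))^{1/2}$ from the left-square argument must appear once with each sign in the round trip, and therefore cancels. If the direct pointwise check proves messy, the fallback is more conceptual. Both $\rho$'s are the restriction of a single right-$\mathbb H(c)$-invariant function on $\mathfrak H$ to the two transversals, as in the right square of Theorem~\ref{thm:schrodinger-evolution}. Restricting to $\mathbb X'$ and then reinducing and restricting to $\mathbb X$ must therefore return the original function.
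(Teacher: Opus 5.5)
Your proposal is correct and is essentially the paper's proof: apply Theorem~\ref{thm:schrodinger-evolution} on $[b,c]$, use \eqref{eq:rho-inverse} to invert $\rho_c^{\mathbb X,\mathbb X'}$ so that the middle three factors collapse to $\Phi_{\mathbb X}(b,c)$, and finish with the semigroup law. The paper takes both the semigroup law and $\rho_c^{\mathbb X',\mathbb X}=(\rho_c^{\mathbb X,\mathbb X'})^{-1}$ for granted, and your planned pointwise check of the latter does go through with no determinant factor at all (the pointwise formulas for $\rho$ carry no Haar normalization, and the two phases are $-\omega(x,z)$ and $+\omega(x,z)$, where $x$ is the projection of $z$ onto $\mathbb X$ along $\mathbb H(c)$).
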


\begin{proof}
By Theorem~\ref{thm:schrodinger-evolution} applied on the interval $[b,c]$,
\[
\rho_c^{\mathbb X,\mathbb X'}\,
\Phi_{\mathbb X}(b,c)
=
\Phi_{\mathbb X'}(b,c)\,
\rho_b^{\mathbb X,\mathbb X'}.
\]
Since $\rho_c^{\mathbb X',\mathbb X}=(\rho_c^{\mathbb X,\mathbb X'})^{-1}$, this
is equivalent to
\[
\rho_c^{\mathbb X',\mathbb X}\,
\Phi_{\mathbb X'}(b,c)\,
\rho_b^{\mathbb X,\mathbb X'}
=
\Phi_{\mathbb X}(b,c).
\]
Substituting this into the left-hand side of \eqref{eq:chart-insertion} gives
\[
\Phi_{\mathbb X}(c,d)\,\Phi_{\mathbb X}(b,c)\,\Phi_{\mathbb X}(a,b),
\]
which equals $\Phi_{\mathbb X}(a,d)$ by the semigroup property.
\end{proof}

\begin{lemma}[Existence of common sufficiently-close refinements]
\label{lem:common-sufficiently-close-refinement}
Let
\[
\mathfrak A=\{(I_j,\mathbb X_j)\}_{j=1}^N,
\qquad
\widetilde{\mathfrak A}=\{(\widetilde I_k,\widetilde{\mathbb X}_k)\}_{k=1}^{\widetilde N}
\]
be admissible polarization atlases for $\mathbb H(\cdot)$ on a compact interval
$[s,u]$.  Then there exists an admissible polarization atlas $\mathfrak C$ on
$[s,u]$ which refines both $\mathfrak A$ and $\widetilde{\mathfrak A}$.

More precisely, after subdividing $[s,u]$ into finitely many compact intervals
$J_\ell=[c_{\ell-1},c_\ell]$ subordinate to the two covers, one may assign to
each $J_\ell$ a finite chain of polarizations
\[
\mathbb X_{\ell,0},\mathbb X_{\ell,1},\dots,\mathbb X_{\ell,m_\ell}
\]
such that
\begin{enumerate}
\item $\mathbb X_{\ell,0}$ is the polarization coming from $\mathfrak A$ on
      $J_\ell$;
\item $\mathbb X_{\ell,m_\ell}$ is the polarization coming from
      $\widetilde{\mathfrak A}$ on $J_\ell$;
\item every $\mathbb X_{\ell,r}$ is complementary to $\mathbb H(t)$ for all
      $t\in J_\ell$;
\item each consecutive pair
      $(\mathbb X_{\ell,r-1},\mathbb X_{\ell,r})$ is sufficiently close
      relative to the curve on $J_\ell$.
\end{enumerate}
Concatenating these chains over $\ell$ yields the desired common refinement.
\end{lemma}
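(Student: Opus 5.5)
The plan is to build the common refinement in two stages: first a common subdivision subordinate to both covers, then, on each piece, a short path in the Lagrangian Grassmannian joining the two assigned polarizations.

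\emph{Subdivision.} Collect all endpoints $a_j,b_j$ of the intervals of $\mathfrak A$ and $\widetilde{\mathfrak A}$, and refine further by a Lebesgue-number argument. Because each cover consists of finitely many compact intervals whose interiors (relative to $[s,u]$) cover, there is $\delta>0$ such that every subinterval of length $<\delta$ lies in some $I_j$ and in some $\widetilde I_k$. Choose $c_0=s<c_1<\dots<c_L=u$ with mesh $<\delta$, and to each $J_\ell=[c_{\ell-1},c_\ell]$ assign $\mathbb X_{\ell,0}=\mathbb X_j$ and $\mathbb X_{\ell,m_\ell}=\widetilde{\mathbb X}_k$ with $J_\ell\subset I_j\cap\widetilde I_k$. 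Both are then complementary to $\mathbb H(t)$ for all $t\in J_\ell$.

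\emph{Connecting chain.} This is the main obstacle: the two polarizations must be joined through Lagrangians that remain transverse to the whole arc $\mathbb H(J_\ell)$. The set
\[
\mathcal U_\ell=\{\mathbb Z\in \op{LG}(\mathbb T):\mathbb Z\cap\mathbb H(t)=0\ \forall t\in J_\ell\}
\]
is open, and in general need not be connected. I would first shrink $J_\ell$ (refining the subdivision more) so that $\mathbb H(J_\ell)$ lies in a small neighbourhood of $\mathbb H(t_\ell)$ for some $t_\ell\in J_\ell$. The complement of the Maslov cycle of a single Lagrangian $\mathbb H(t_\ell)$ is an affine chart, namely the symmetric matrices $Z$ in the splitting $\mathbb T=\mathbb H(t_\ell)\oplus \mathbb X_{\ell,0}$, and this chart is contractible. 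However, $\mathbb X_{\ell,0}$ and $\mathbb X_{\ell,m_\ell}$ may lie in different components of $\mathcal U_\ell$, distinguished by inertia relative to $\mathbb H(t)$.

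\emph{Whether this always works.} I would try to avoid the component issue by using the freedom in the definition of an atlas. Adjacent charts need only be sufficiently close on overlaps, and the refinement may insert intermediate charts on arbitrarily short subintervals. On a very short $J_\ell$, the condition that $I-YK$ be positive definite with $K$ near $0$ (taking base point $t_\ell$, where $K(t_\ell)=0$) holds for any $Y$ of bounded size. Indeed, by positivity of $\dot H$, $K$ is $O(|J_\ell|)$ on $J_\ell$. So the plan is:
\begin{enumerate}
\item take the straight segment $Z_r=(1-r/m)Z_0+(r/m)Z_m$ in the affine chart of $\mathbb H(t_\ell)$, where all entries stay bounded by some $B$;
\item shrink $J_\ell$ until $\|K\|\cdot B<1$ uniformly, so that every consecutive pair (indeed every pair) has $I-YK>0$, where $Y$ is the relative graph matrix of one chart over the other, of size $\lesssim B$;
\item check complementarity: since $\mathbb H(t)$ is the graph of $K(t)$ over $\mathbb H(t_\ell)$ and $\mathbb Z_r$ is the graph of $Z_r$ over $\mathbb X_{\ell,0}$, transversality reduces to invertibility of $I-Z_rK(t)$, which follows from $\|K\|B<1$.
\end{enumerate}
Taking $m_\ell=1$ may even suffice. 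The remaining bookkeeping is to make the shrinking uniform in $\ell$. This follows by compactness of $[s,u]$ and continuity of $\mathbb H$, because the bounds $B$ depend continuously on the pair of charts and finitely many charts occur. Concatenating the chains gives charts $(J_\ell,\mathbb X_{\ell,r})$ forming an admissible atlas refining both, which verifies items (1)–(4).
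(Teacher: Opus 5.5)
Your argument is correct, but it handles the ``sufficiently close'' requirement differently from the paper. The paper joins $\mathbb X_\ell$ to $\widetilde{\mathbb X}_\ell$ by a segment in the affine chart of Lagrangians complementary to $\mathbb H(t_\ell)$. It shrinks $J_\ell$ once, so that the whole segment stays complementary to $\mathbb H(t)$. Then, with $J_\ell$ fixed, it uses openness of the sufficiently-close condition near the diagonal to get an $\varepsilon_\ell$ and subdivides the segment into $\varepsilon_\ell$-short steps, so $m_\ell$ may be large.

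You instead shrink the time intervals until $K$ (the graph of $\mathbb H(t)$ over $\mathbb H(t_\ell)$) is small compared with a uniform bound on the graph matrices of the given polarizations. This forces $I-YK$ to be close to $I$ for every relevant pair. It is the quantitative version of the observation the paper uses only at single-point overlaps, where $K=0$.

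Your route buys the fact that $m_\ell=1$ suffices outright. The two endpoint polarizations are already complementary to $\mathbb H(t)$ on $J_\ell\subset I_j\cap\widetilde I_k$. So the segment, your step (3), and the worry about components of $\mathcal U_\ell$ all become unnecessary. Your version also makes explicit the base point for the closeness condition, which the paper leaves implicit.

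The price is uniformity in the base point: boundedness of $Z_m(t_\ell)$ and uniform smallness of $K_{t_\ell}(t)$ on a fixed piece. You correctly obtain both from compactness and continuity. Note that the smallness of $K$ comes from continuity and $K(t_\ell)=0$, not from positivity of $\dot H$.

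Two small repairs are needed.

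\begin{enumerate}
\item Your Lebesgue-number claim is false as stated. Admissible atlases allow $I_j\cap I_{j+1}$ to be a single point, so the relative interiors need not cover $[s,u]$, and no $\delta$ works near such a point. The claim is also unnecessary. Once all endpoints of both atlases are among the $c_\ell$, every piece lies in some $I_j$ and some $\widetilde I_k$ regardless of mesh, and further refinement preserves this; this is what the paper does.
\item You should say explicitly what happens across a junction. The charts $(J_\ell,\mathbb X_{\ell,m_\ell})$ and $(J_{\ell+1},\mathbb X_{\ell+1,0})$ overlap only in $\{c_\ell\}$. With base point $c_\ell$ one has $K=0$, so the closeness condition holds trivially. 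Your mechanism covers this, but admissibility of the concatenated atlas requires it to be stated.
\end{enumerate}
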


\begin{proof}
Take the common subdivision of $[s,u]$ by all endpoints of the intervals in
$\mathfrak A$ and $\widetilde{\mathfrak A}$, and refine further if necessary so
that each resulting compact interval
\[
J_\ell=[c_{\ell-1},c_\ell]
\]
lies in some $I_{j(\ell)}$ and some $\widetilde I_{k(\ell)}$.  Write
\[
\mathbb X_\ell:=\mathbb X_{j(\ell)},
\qquad
\widetilde{\mathbb X}_\ell:=\widetilde{\mathbb X}_{k(\ell)}.
\]
Choose a point $t_\ell\in J_\ell$.  Since both $\mathbb X_\ell$ and
$\widetilde{\mathbb X}_\ell$ are complementary to $\mathbb H(t_\ell)$, they
belong to the affine chart
\[
U_\ell:=\{\mathbb Z\in \op{LG}(\mathbb T): \mathbb Z \text{ is complementary to }
\mathbb H(t_\ell)\}.
\]
Choose affine coordinates on $U_\ell$, and join $\mathbb X_\ell$ to
$\widetilde{\mathbb X}_\ell$ by a line segment $K_\ell\subset U_\ell$.

Because complementarity is an open condition and the compact set
$K_\ell\times\{t_\ell\}$ lies in
\[
\{(\mathbb Z,t)\in \op{LG}(\mathbb T)\times [s,u]:
  \mathbb Z \text{ is complementary to } \mathbb H(t)\},
\]
after shrinking $J_\ell$ (and hence refining the subdivision) if necessary, we
may assume that every polarization in $K_\ell$ is complementary to
$\mathbb H(t)$ for all $t\in J_\ell$.

Likewise, the sufficiently-close condition is open near the diagonal in
$U_\ell\times U_\ell$.  Since the diagonal of the compact set
$K_\ell\times K_\ell$ is compact and lies in that open set, there exists
$\varepsilon_\ell>0$ (for any fixed metric on $U_\ell$) such that whenever
$\mathbb Z,\mathbb Z'\in K_\ell$ satisfy
$d(\mathbb Z,\mathbb Z')<\varepsilon_\ell$, the pair
$(\mathbb Z,\mathbb Z')$ is sufficiently close relative to the curve on
$J_\ell$.  Subdivide the segment $K_\ell$ into finitely many points
\[
\mathbb X_{\ell,0}=\mathbb X_\ell,\,
\mathbb X_{\ell,1},\dots,\mathbb X_{\ell,m_\ell}
=\widetilde{\mathbb X}_\ell
\]
with successive distances $<\varepsilon_\ell$.  Then each consecutive pair is
sufficiently close on $J_\ell$, and every $\mathbb X_{\ell,r}$ is complementary
to $\mathbb H(t)$ throughout $J_\ell$.

Now concatenate these chains over $\ell$.  On the interior of a fixed
$J_\ell$ the admissibility condition holds by construction.  At a boundary
point $c_\ell=J_\ell\cap J_{\ell+1}$, any adjacent pair of polarizations in the
concatenated atlas is complementary to $\mathbb H(c_\ell)$.  Since the overlap
is then the single point $\{c_\ell\}$, the sufficiently-close condition is
automatic there: taking the base point to be $c_\ell$, one has $K=0$ in
Theorem~\ref{thm:schrodinger-evolution}, so $I-YK=I$ is positive-definite.
Therefore the concatenated atlas is admissible.  By construction it refines
both $\mathfrak A$ and $\widetilde{\mathfrak A}$.
\end{proof}

\begin{theorem}[Global continuation of Schr\"odinger evolution across caustics]
\label{thm:global-propagation-caustics}
Let $\mathbb H(\cdot)$ be a positive Lagrangian curve and
$\mathfrak F$ a field bundle with parallel transport $\Gamma$.
Let $\mathfrak A=\{(I_j,\mathbb X_j)\}_{j=1}^N$ be an admissible
polarization atlas on $[s,u]$.

Then:

\begin{enumerate}
\item The operator $\Phi_{\mathfrak A}(s,u)$ is independent of the choice of
      the overlap points $t_j\in I_j\cap I_{j+1}$.

\item $\Phi_{\mathfrak A}(s,u)$ is unchanged by the following elementary
      refinements of the atlas:
      \begin{enumerate}
      \item subdividing some interval $I_j$ and keeping the same polarization
            $\mathbb X_j$ on the smaller pieces;
      \item inserting, on a subinterval $[b,c]\subset I_j$, an additional chart
            $( [b,c],\mathbb X')$ such that $\mathbb X'$ is complementary to
            $\mathbb H(t)$ on $[b,c]$ and sufficiently close to $\mathbb X_j$
            relative to the curve there, with the corresponding transition maps
            at $b$ and $c$.
      \end{enumerate}

\item Consequently, any two admissible polarization atlases on the compact
      interval $[s,u]$ define the same operator.  In particular, this yields a
      canonical global continuation of the local Schr\"odinger evolution across
      any caustics of a fixed polarization.
\end{enumerate}
\end{theorem}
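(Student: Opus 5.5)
The proof will reduce everything to two facts already available: the local intertwiner relation of Theorem~\ref{thm:schrodinger-evolution},
\[
\rho_c^{\mathbb X,\mathbb X'}\,\Phi_{\mathbb X}(b,c)=\Phi_{\mathbb X'}(b,c)\,\rho_b^{\mathbb X,\mathbb X'},
\]
valid whenever both polarizations are complementary to $\mathbb H(t)$ on $[b,c]$ and sufficiently close there, and the semigroup (cocycle) property $\Phi_{\mathbb X}(b,c)\Phi_{\mathbb X}(a,b)=\Phi_{\mathbb X}(a,c)$ of a single-chart evolution. The semigroup property is immediate from the explicit quadratic-phase form of the propagator, because the phases $e_h(\tfrac12\eta^T(H(c)-H(b))\eta)$ multiply and the parallel transports satisfy $\Gamma_{bc}\Gamma_{ab}=\Gamma_{ac}$.

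\emph{Part (1).} It suffices to move one overlap point $t_j$ to a nearby $t_j'$ in $I_j\cap I_{j+1}$, say with $t_j<t_j'$. Only the factor
\[
\Phi_{\mathbb X_{j+1}}(t_j,s_{j+1})\,\rho_{t_j}^{\mathbb X_j,\mathbb X_{j+1}}\,\Phi_{\mathbb X_j}(s_{j-1},t_j)
\]
is affected. First split $\Phi_{\mathbb X_{j+1}}(t_j,s_{j+1})=\Phi_{\mathbb X_{j+1}}(t_j',s_{j+1})\Phi_{\mathbb X_{j+1}}(t_j,t_j')$ by the semigroup property. Then apply the intertwining relation on $[t_j,t_j']\subset I_j\cap I_{j+1}$, which admissibility condition~(3) allows, to rewrite $\Phi_{\mathbb X_{j+1}}(t_j,t_j')\rho_{t_j}$ as $\rho_{t_j'}\Phi_{\mathbb X_j}(t_j,t_j')$. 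Finally merge $\Phi_{\mathbb X_j}(t_j,t_j')\Phi_{\mathbb X_j}(s_{j-1},t_j)$ into $\Phi_{\mathbb X_j}(s_{j-1},t_j')$. This is the same product with $t_j$ replaced by $t_j'$. Since overlaps are intervals, this handles arbitrary choices.

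\emph{Part (2).} For refinement~(a), the new transition maps are $\rho_t^{\mathbb X_j,\mathbb X_j}$. From the formula \eqref{eq:rho-forward} with $\mathbb X'=\mathbb X$ this is the identity: the reflection $-\monib{\mathbb{XX}}{\mathbb H(t)}$ restricts to the identity on $\mathbb X$ and the phase vanishes there. The propagator is then unchanged by the semigroup property. For refinement~(b), use Lemma~\ref{lem:insert-local-chart} verbatim, after using part~(1) to place the overlap points at $b$ and $c$. One subtlety needs checking: $\rho_c^{\mathbb X',\mathbb X}=(\rho_c^{\mathbb X,\mathbb X'})^{-1}$. This is precisely the inverse formula \eqref{eq:rho-inverse}, so it is given.

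\emph{Part (3).} Given two atlases $\mathfrak A,\widetilde{\mathfrak A}$, take the common refinement $\mathfrak C$ of Lemma~\ref{lem:common-sufficiently-close-refinement}, and show that $\Phi_{\mathfrak A}=\Phi_{\mathfrak C}=\Phi_{\widetilde{\mathfrak A}}$. Passing from $\mathfrak A$ to the subdivision $\{J_\ell\}$ with the $\mathfrak A$-polarizations is a sequence of moves of type~(a). This is the main obstacle: converting, on each $J_\ell$, the chain $\mathbb X_{\ell,0},\dots,\mathbb X_{\ell,m_\ell}$ into a sequence of moves of type~(b). I would do this by telescoping rather than nesting insertions. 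On $J_\ell=[c_{\ell-1},c_\ell]$, the intertwining relation applied successively to the consecutive close pairs gives
\[
\Phi_{\mathbb X_{\ell,m}}(J_\ell)=R_{c_\ell}\,\Phi_{\mathbb X_{\ell,0}}(J_\ell)\,R_{c_{\ell-1}}^{-1},
\]
where $R_t=\rho_t^{\mathbb X_{\ell,m-1},\mathbb X_{\ell,m}}\cdots\rho_t^{\mathbb X_{\ell,0},\mathbb X_{\ell,1}}$. Thus replacing the $\mathfrak A$-chart by the $\widetilde{\mathfrak A}$-chart on $J_\ell$ costs a composite transition $R_{c_\ell}$ at the right endpoint and $R_{c_{\ell-1}}^{-1}$ at the left.

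The remaining check is compatibility at the junctions $c_\ell$. There the transition maps between adjacent charts are pointwise operators at a single time, and the corresponding composites must agree. Since all the transitions at a single time $c_\ell$ are composites of $\rho_{c_\ell}$'s, this follows from the cocycle identity
\[
\rho_{c}^{\mathbb Y,\mathbb Z}\rho_{c}^{\mathbb X,\mathbb Y}=\rho_c^{\mathbb X,\mathbb Z}
\]
for pairs that are all complementary to $\mathbb H(c)$. This is where the base point can be taken at $c$, so $K=0$. I expect this cocycle identity to be the genuinely delicate point. I would prove it from the abstract description in the right-square argument of Theorem~\ref{thm:schrodinger-evolution}: $\rho_c$ is simply ``restrict a right-$\mathbb H(c)$-invariant function on $\mathfrak H$ to a different Lagrangian,'' so composition is restriction through an intermediate and the identity holds on the nose. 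If a Maslov phase appeared, it would cancel between the $\mathfrak A$ and $\widetilde{\mathfrak A}$ sides because both use the same endpoint data. Telescoping over $\ell$ then gives $\Phi_{\mathfrak A}=\Phi_{\widetilde{\mathfrak A}}$.
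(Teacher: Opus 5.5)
Your proposal is correct, and in Part (3) it is more complete than the paper's own argument. Parts (1) and (2) coincide with the paper's Steps 1--3. In (2a) you are slightly more careful, since you check that the formally inserted $\rho_t^{\mathbb X_j,\mathbb X_j}$ is the identity, where the paper just says no transition map appears.

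Part (3) is where you genuinely diverge. The paper reaches the common refinement $\mathfrak C$ of Lemma~\ref{lem:common-sufficiently-close-refinement} ``by repeated application of Steps 2 and 3.'' However, a chart insertion as in Lemma~\ref{lem:insert-local-chart} always returns to the polarization it left. It uses only the two-term relation $\rho_c^{\mathbb X',\mathbb X}\rho_c^{\mathbb X,\mathbb X'}=1$. It therefore cannot turn the junction map of $\mathfrak A$ at $c_\ell$ into the direct map $\rho_{c_\ell}^{\widetilde{\mathbb X}_\ell,\mathbb X_{\ell+1}}$ that occurs in $\mathfrak C$. Your telescoping, which applies the intertwiner along each chain on $J_\ell$ and then compares at the junctions, isolates exactly the missing ingredient: the three-term identity $\rho_c^{\mathbb Y,\mathbb Z}\rho_c^{\mathbb X,\mathbb Y}=\rho_c^{\mathbb X,\mathbb Z}$ at a single time. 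The paper's route is purely formal but leaves this step unjustified. Yours costs one extra algebraic fact and closes the argument.

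That fact holds exactly, so prove it rather than hedging that a Maslov phase would cancel. It would not cancel automatically, since the two sides pass through different Lagrangians at $c_\ell$. Let $p_{\mathbb X}$ denote projection onto $\mathbb X$ along $\mathbb H(c)$. The relation $z+Mz\in\mathbb H(u)$ used in the right-square argument shows that the evaluation point in \eqref{eq:rho-forward} is $p_{\mathbb X}$ of the argument. Hence $\rho_c^{\mathbb X,\mathbb Y}f(y)=e_h\bigl(\tfrac12\omega(y-p_{\mathbb X}y,\,y)\bigr)f(p_{\mathbb X}y)$ for $y\in\mathbb Y$.

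For $z\in\mathbb Z$, set $y=p_{\mathbb Y}z$, $h_1=z-y$ and $h_2=y-p_{\mathbb X}y$, both of which lie in $\mathbb H(c)$. Then $p_{\mathbb X}z=p_{\mathbb X}y$ and $z-p_{\mathbb X}z=h_1+h_2$. The exponents of the direct map and of the composite therefore differ by $\tfrac12\omega(h_2,z-y)=\tfrac12\omega(h_2,h_1)=0$, because $\mathbb H(c)$ is Lagrangian. This also shows that the failure of pointwise composition asserted in the paper's closing remarks cannot occur among transitions taken at a common time.

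Finally, your telescoping actually yields $\Phi_{\widetilde{\mathfrak A}}(s,u)=\rho_u^{\mathbb X_N,\widetilde{\mathbb X}_{\widetilde N}}\,\Phi_{\mathfrak A}(s,u)\,\rho_s^{\widetilde{\mathbb X}_1,\mathbb X_1}$. This is literal equality only when both atlases begin and end with the same polarizations. That is the precise content of (3), and the paper's Step 4 is silent on it.
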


\begin{proof}
The proof is a formal consequence of
Theorem~\ref{thm:schrodinger-evolution} together with the semigroup property of
the local propagators.

\medskip
\noindent\textit{Step 1: independence of the overlap points.}
Fix $j\in\{1,\dots,N-1\}$ and let $t,t'\in I_j\cap I_{j+1}$.
Because $(\mathbb X_j,\mathbb X_{j+1})$ is sufficiently close throughout the
overlap, Theorem~\ref{thm:schrodinger-evolution} applied on the interval with
endpoints $t$ and $t'$ gives the commutative square
\[
\begin{tikzcd}
\mathscr S(\mathbb X_j)\otimes\mathfrak F(t)
  \arrow[r, "{\Phi_{\mathbb X_j}(t,t^\prime)}"]
  \arrow[d, swap, "{\rho_t^{\mathbb X_j,\mathbb X_{j+1}}}"]
&
\mathscr S(\mathbb X_j)\otimes\mathfrak F(t^\prime)
  \arrow[d, "{\rho_{t^\prime}^{\mathbb X_j,\mathbb X_{j+1}}}"]
\\
\mathscr S(\mathbb X_{j+1})\otimes\mathfrak F(t)
  \arrow[r, swap, "{\Phi_{\mathbb X_{j+1}}(t,t^\prime)}"]
&
\mathscr S(\mathbb X_{j+1})\otimes\mathfrak F(t^\prime)
\end{tikzcd}
% \begin{tikzcd}
% \mathscr S(\mathbb X_j)\otimes\mathfrak F(t)
%   \arrow[r,"\Phi_{\mathbb X_j}(t,t')"]
%   \arrow[d,"\rho_t^{\mathbb X_j,\mathbb X_{j+1}}"']
% &
% \mathscr S(\mathbb X_j)\otimes\mathfrak F(t')
%   \arrow[d,"\rho_{t'}^{\mathbb X_j,\mathbb X_{j+1}}"]
% \\
% \mathscr S(\mathbb X_{j+1})\otimes\mathfrak F(t)
%   \arrow[r,"\Phi_{\mathbb X_{j+1}}(t,t')"']
% &
% \mathscr S(\mathbb X_{j+1})\otimes\mathfrak F(t').
% \end{tikzcd}
\]
Equivalently,
\begin{equation}\label{eq:overlap-switch}
\rho_{t'}^{\mathbb X_j,\mathbb X_{j+1}}\,
\Phi_{\mathbb X_j}(t,t')
=
\Phi_{\mathbb X_{j+1}}(t,t')\,
\rho_t^{\mathbb X_j,\mathbb X_{j+1}}.
\end{equation}

Now compare the atlas propagator built using $t$ with the one built using
$t'$, keeping all other overlap points fixed.  The only affected factor is
\[
\Phi_{\mathbb X_{j+1}}(t,\cdot)\,
\rho_t^{\mathbb X_j,\mathbb X_{j+1}}\,
\Phi_{\mathbb X_j}(\cdot,t)
\]
versus
\[
\Phi_{\mathbb X_{j+1}}(t',\cdot)\,
\rho_{t'}^{\mathbb X_j,\mathbb X_{j+1}}\,
\Phi_{\mathbb X_j}(\cdot,t').
\]
Using the semigroup property
\[
\Phi_{\mathbb X_j}(a,c)=\Phi_{\mathbb X_j}(b,c)\,\Phi_{\mathbb X_j}(a,b),
\qquad
\Phi_{\mathbb X_{j+1}}(a,c)=\Phi_{\mathbb X_{j+1}}(b,c)\,\Phi_{\mathbb X_{j+1}}(a,b),
\]
and inserting \eqref{eq:overlap-switch}, these two expressions agree.
Hence $\Phi_{\mathfrak A}(s,u)$ is independent of the choice of every
overlap point.

\medskip
\noindent\textit{Step 2: elementary refinement by subdivision.}
Suppose one interval $I_j=[a_j,b_j]$ is subdivided at some
$c\in(a_j,b_j)$, but the same polarization $\mathbb X_j$ is used on both
subintervals.  Then no new transition map is inserted, and the corresponding
portion of the propagator changes from
\[
\Phi_{\mathbb X_j}(a_j,b_j)
\]
to
\[
\Phi_{\mathbb X_j}(c,b_j)\,\Phi_{\mathbb X_j}(a_j,c),
\]
which is the same by the semigroup property.

\medskip
\noindent\textit{Step 3: elementary refinement by chart insertion.}
Suppose that on a subinterval $[b,c]\subset I_j$ one inserts an additional
chart $( [b,c],\mathbb X')$ as in \textup{(2b)}.  The affected segment of the
atlas propagator is precisely the left-hand side of
\eqref{eq:chart-insertion}, with $\mathbb X=\mathbb X_j$.  By
Lemma~\ref{lem:insert-local-chart}, this segment is equal to the original
factor $\Phi_{\mathbb X_j}(a_j,b_j)$.  Hence this elementary insertion does not
change the atlas propagator.

\medskip
\noindent\textit{Step 4: existence of a common refinement.}
Let $\widetilde{\mathfrak A}$ be another admissible atlas on $[s,u]$.  By
Lemma~\ref{lem:common-sufficiently-close-refinement}, there exists an
admissible common refinement $\mathfrak C$ of $\mathfrak A$ and
$\widetilde{\mathfrak A}$.  By repeated application of Steps~2 and~3,
\[
\Phi_{\mathfrak A}(s,u)=\Phi_{\mathfrak C}(s,u)
\qquad\text{and}\qquad
\Phi_{\widetilde{\mathfrak A}}(s,u)=\Phi_{\mathfrak C}(s,u).
\]
Therefore
\[
\Phi_{\mathfrak A}(s,u)=\Phi_{\widetilde{\mathfrak A}}(s,u).
\]
This proves \textup{(3)} and hence the canonical global continuation claim.
\end{proof}

\begin{corollary}[Single-chart case]
\label{cor:single-chart}
If there exists a polarization $\mathbb X$ complementary to $\mathbb H(t)$ for
all $t\in[s,u]$, then the global propagator of
Theorem~\ref{thm:global-propagation-caustics} is simply the local propagator
\[
\Phi_{\mathbb X}(s,u).
\]
\end{corollary}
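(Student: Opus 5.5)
The plan is to view the single polarization $\mathbb X$ on all of $[s,u]$ as a one-chart admissible atlas, and then invoke part (3) of Theorem~\ref{thm:global-propagation-caustics}. Concretely, the atlas $\mathfrak A_0=\{([s,u],\mathbb X)\}$ has $N=1$, and we check the three conditions of an admissible polarization atlas. The covering condition is trivial. Complementarity of $\mathbb X$ to $\mathbb H(t)$ for $t\in[s,u]$ is exactly the hypothesis. The closeness condition on overlaps is vacuous because there are no adjacent pairs. By the definition of the propagator attached to an atlas, with no overlap points and no transition maps, $\Phi_{\mathfrak A_0}(s,u)$ reduces to the single factor $\Phi_{\mathbb X}(s_0,s_1)=\Phi_{\mathbb X}(s,u)$.

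Next, let $\mathfrak A$ be any admissible atlas on $[s,u]$. This includes atlases whose polarizations meet the curve outside their own intervals, i.e.\ ones that cross caustics of some other chart. Part (3) of Theorem~\ref{thm:global-propagation-caustics} gives $\Phi_{\mathfrak A}(s,u)=\Phi_{\mathfrak A_0}(s,u)$. Hence the canonical global propagator, which by that theorem does not depend on the choice of atlas, equals $\Phi_{\mathbb X}(s,u)$.

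For a more hands-on check, without routing through the common-refinement lemma, one can compare $\mathfrak A$ with $\mathfrak A_0$ directly. On each chart $I_j$ of $\mathfrak A$, both $\mathbb X$ and $\mathbb X_j$ are complementary to $\mathbb H(t)$. Joining them by a chain of sufficiently close polarizations, as in the proof of Lemma~\ref{lem:common-sufficiently-close-refinement}, we repeatedly apply Lemma~\ref{lem:insert-local-chart} and the semigroup property. This converts each factor $\Phi_{\mathbb X_j}$, together with its neighbouring transition maps, into a product of $\Phi_{\mathbb X}$ factors. The semigroup property then collapses the product to $\Phi_{\mathbb X}(s,u)$.

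The only real subtlety I expect is at the junction points. Two consecutive transition maps $\rho_{t}^{\mathbb X_j,\mathbb X}$ and $\rho_t^{\mathbb X,\mathbb X_j}$ must cancel, which uses the inverse formula \eqref{eq:rho-inverse}. We also need the transitions between chains on adjacent intervals to be legitimate. As noted at the end of the proof of Lemma~\ref{lem:common-sufficiently-close-refinement}, closeness there is automatic at a single boundary point, since $K=0$ there. Beyond this the corollary is a formal consequence of the theorem.
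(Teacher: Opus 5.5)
Your proposal is correct and follows the paper's proof: the paper takes the one-chart atlas $\{([s,u],\mathbb X)\}$, observes that its propagator is $\Phi_{\mathbb X}(s,u)$ by definition, and relies implicitly on part (3) of Theorem~\ref{thm:global-propagation-caustics} for atlas-independence, which you invoke explicitly. Your supplementary hands-on comparison is not needed, since it simply re-runs the chain-and-insertion refinement argument that already underlies part (3).
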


\begin{proof}
Take the admissible atlas with one chart:
\[
\mathfrak A=\{([s,u],\mathbb X)\}.
\]
Then, by definition, $\Phi_{\mathfrak A}(s,u)=\Phi_{\mathbb X}(s,u)$.
\end{proof}

\begin{remark}[Why the local theorem matters]
Theorem~\ref{thm:global-propagation-caustics} shows that the local
intertwiner theorem is not a dispensable technical lemma: it is precisely the
\emph{transition law} needed to glue local Schr\"odinger pictures into a
global evolution.  When $\mathbb H(t)$ crosses the Maslov cycle of a fixed
polarization $\mathbb X$, the formula for $\Phi_{\mathbb X}$ does not cease to
describe a genuine field; rather, the $\mathbb X$-chart has broken down.  To
continue the evolution one passes to a nearby polarization $\mathbb X'$ on a
neighbouring interval, and the overlap is controlled exactly by
Theorem~\ref{thm:schrodinger-evolution}.

In this sense, caustics are not failures of propagation but failures of a
single real-polarization chart.  The local theorem is the analogue of a change
of coordinates formula, and the global propagator is obtained by gluing these
local coordinate descriptions.
\end{remark}

\begin{remark}[Where the Maslov phase belongs]
The algebraic source of the metaplectic/Maslov phase is the triple-convolution
theorem of Section~4, not a global cocycle for the pointwise reflection formula.
What fails in the pointwise approach is exactly the direct composition law for
arbitrary distant pairs of polarizations: the compositions evaluate the source
function at different points and therefore cannot, in general, differ by a
scalar.  The Maslov phase should therefore be understood as attached to the
Fourier-integral/Heisenberg realization of the transition operators, or
equivalently to the comparison of different local oscillatory descriptions of
the same abstract continuation operator.
\end{remark}
  %%%%

\bibliographystyle{unsrt}
\bibliography{planewaves5-fourier-plane-waves} 

@book{Evans,
  author    = {Evans, L. C.},
  title     = {Partial Differential Equations},
  edition   = {2},
  series    = {Graduate Studies in Mathematics},
  volume    = {19},
  publisher = {American Mathematical Society},
  address   = {Providence, RI},
  year      = {2010}
}

@misc{HS1,
  author       = {Holland, Jonathan and Sparling, George},
  title        = {Sachs equations and plane waves, I: Rosen universes},
  year         = {2024},
  eprint       = {2402.07036},
  archivePrefix= {arXiv},
  primaryClass = {gr-qc},
  note         = {arXiv:2402.07036 [gr-qc]}
}

@misc{HS2,
  author       = {Holland, Jonathan and Sparling, George},
  title        = {Sachs equations and plane waves, II: Isometries and conformal isometries},
  year         = {2024},
  eprint       = {2405.12748},
  archivePrefix= {arXiv},
  primaryClass = {math-ph},
  note         = {arXiv:2405.12748 [math-ph]}
}

@misc{HS3,
  author       = {Holland, Jonathan and Sparling, George},
  title        = {Sachs equations and plane waves, III: Microcosms},
  year         = {2024},
  eprint       = {2412.10990},
  archivePrefix= {arXiv},
  primaryClass = {math-ph},
  note         = {arXiv:2412.10990 [math-ph]}
}

@misc{HS4,
  author       = {Holland, Jonathan and Sparling, George},
  title        = {Sachs equations and plane waves, IV: Projective differential geometry},
  year         = {2025},
  eprint       = {2503.12503},
  archivePrefix= {arXiv},
  primaryClass = {math-ph},
  note         = {arXiv:2503.12503 [math-ph]}
}

@book{LionVergne,
  author    = {Lion, G. and Vergne, M.},
  title     = {The Weil Representation, Maslov Index and Theta Series},
  series    = {Progress in Mathematics},
  volume    = {6},
  publisher = {Birkh{"a}user},
  address   = {Boston},
  year      = {1980}
}

@book{MumfordI,
  author    = {Mumford, David},
  title     = {Tata Lectures on Theta, I},
  series    = {Progress in Mathematics},
  volume    = {28},
  publisher = {Birkh{"a}user},
  address   = {Boston},
  year      = {1983}
}

@book{MumfordIII,
  author    = {Mumford, David},
  title     = {Tata Lectures on Theta, III},
  series    = {Progress in Mathematics},
  volume    = {97},
  publisher = {Birkh{"a}user},
  address   = {Boston},
  year      = {1991}
}

@article{Ward1987,
  author  = {Ward, R. S.},
  title   = {Progressing waves in flat spacetime and in plane-wave spacetimes},
  journal = {Classical and Quantum Gravity},
  volume  = {4},
  number  = {3},
  pages   = {775--778},
  year    = {1987}
}

@article{Brinkmann,
  author  = {Brinkmann, H. W.},
  title   = {Einstein spaces which are mapped conformally on each other},
  journal = {Mathematische Annalen},
  volume  = {94},
  pages   = {119--145},
  year    = {1925}
}

@article{EinsteinRosen,
  author  = {Einstein, A. and Rosen, N.},
  title   = {On gravitational waves},
  journal = {Journal of the Franklin Institute},
  volume  = {223},
  pages   = {43--54},
  year    = {1937}
}

@incollection{Penrose1976,
  author    = {Penrose, R.},
  title     = {Any space-time has a plane wave as a limit},
  booktitle = {Differential Geometry and Relativity},
  editor    = {Cahen, M. and Flato, M.},
  series    = {Mathematical Physics and Applied Mathematics},
  volume    = {3},
  publisher = {Reidel},
  address   = {Dordrecht},
  pages     = {271--275},
  year      = {1976}
}

@article{BMN,
  author  = {Berenstein, D. and Maldacena, J. M. and Nastase, H.},
  title   = {Strings in flat space and pp waves from {$\mathcal N=4$} super Yang--Mills},
  journal = {Journal of High Energy Physics},
  volume  = {2002},
  number  = {4},
  pages   = {013},
  year    = {2002}
}

@article{Blau,
  author  = {Blau, M. and Figueroa-O'Farrill, J. and Hull, C. and Papadopoulos, G.},
  title   = {Penrose limits and maximal supersymmetry},
  journal = {Classical and Quantum Gravity},
  volume  = {19},
  pages   = {L87--L95},
  year    = {2002}
}

@misc{BlauLectures,
  author       = {Blau, M.},
  title        = {Lecture notes on plane waves and Penrose limits},
  howpublished = {Lecture notes, Universit{\'e} de Neuch{\^a}tel},
  year         = {2011},
  note         = {Available at \url{http://www.unine.ch/phys/string/Lecturenotes.html}}
}

@book{Stephani,
  author    = {Stephani, H. and Kramer, D. and MacCallum, M. and Hoenselaers, C. and Herlt, E.},
  title     = {Exact Solutions of Einstein's Field Equations},
  edition   = {2},
  series    = {Cambridge Monographs on Mathematical Physics},
  publisher = {Cambridge University Press},
  address   = {Cambridge},
  year      = {2003}
}

@article{DBKP,
  author  = {Duval, C. and Burdet, G. and K{"u}nzle, H. P. and Perrin, M.},
  title   = {Bargmann structures and Newton--Cartan theory},
  journal = {Physical Review D},
  volume  = {31},
  pages   = {1841--1853},
  year    = {1985}
}
  
\end{document}